\newcommand{\Real}{\mathbb{R}}
\begin{document} 

\title{Pattern Formation Problem for Synchronous Mobile Robots in 
the Three Dimensional Euclidean Space}
\author{Yukiko Yamauchi \thanks{Corresponding author. 
Address: 744 Motooka, Nishi-ku, Fukuoka 819-0395, Japan. 
Fax: +81-92-802-3637. Email: \texttt{yamauchi@inf.kyushu-u.ac.jp}}
\and 
Taichi Uehara 
\and 
Masafumi Yamashita}
\institute{Kyushu University, Japan. }

\maketitle 

\begin{abstract}
 Self-organization of a swarm of mobile computing entities in the 
 three-dimensional Euclidean space (3D-space) such as drones and satellites 
 attracts much attention as such systems are required to accomplish 
 more complicated tasks.
 We consider a swarm of autonomous mobile robots each of which is
 an anonymous point in 3D-space 
 and synchronously executes a common distributed algorithm. 
 We investigate the {\em pattern formation problem} that
 requires the robots to form a given target pattern
 from an initial configuration and characterize the problem 
 by showing a necessary and sufficient condition for 
 the robots to form a given target pattern.
 
 The pattern formation problem in the two dimensional Euclidean 
 space (2D-space) has been investigated 
 by Suzuki and Yamashita (SICOMP 1999, TCS 2010), and
 Fujinaga et al. (SICOMP 2015). 
 The symmetricity $\rho(P)$ of a configuration (i.e., the positions
 of robots) $P$ 
 is the order of the cyclic group that acts on $P$
 with the exception that when a robot is on the center of the smallest
 enclosing circle of $P$, $\rho(P)=1$. 
 It has been shown that  
 fully-synchronous (FSYNC) robots can form
 a target pattern $F$ from an initial configuration $P$
 if and only if $\rho(P)$ divides $\rho(F)$. 
 
 We extend the notion of symmetricity to 3D-space by 
 using the {\em rotation groups} each of which is defined by a set of 
 rotation axes and their arrangement. 
 We define the symmetricity $\varrho(P)$ of configuration $P$ 
 in 3D-space as the set of rotation groups that acts on $P$ and
 whose rotation axes do not contain any robot. 
 We show the following necessary and sufficient condition 
 for the pattern formation problem which is a natural extension
 of the existing results of the pattern formation problem in 2D-space: 
 FSYNC robots in 3D-space can form a target pattern $F$
 from an initial configuration $P$ 
 if and only if 
 $\varrho(P) \subseteq \varrho(F)$.
 This result guarantees that, for example,  
 from an initial configuration
 where the robots form a cube (i.e., the robots occupy the vertices of
 a cube), 
 they can form a regular octagon that consists of points on a plane 
 or a square anti-prism that has a vertical axis.
 In other words, these target patterns have lower symmetry than the
 initial configuration. 
 For solvable instances, we present a pattern formation 
 algorithm for oblivious FSYNC robots. 
 The insight of this paper is that symmetry of mobile robots 
 in 3D-space is sometimes lower than the symmetry of their positions  
 and the robots can show their symmetry by their movement. 

\medskip

\noindent{\bf Keywords.}
Mobile robots in the three dimensional Euclidean space, 
pattern formation, 
rotation group, 
symmetry breaking.
\end{abstract}

\newpage


\section{Introduction}
\label{sec:intro}

Distributed control of a system consisting of autonomous mobile computing
entities in the three dimensional Euclidean space (3D-space)
is one of the most challenging problems 
in distributed computing theory and robotics.
One of the most important properties that is expected to such
systems is {\em self-organization ability} that enables the system
to obtain the coordination by itself. 
For example, drones are becoming widely available and 
their applications in sensing, monitoring, and rescuing 
in harsh environment such as disaster area and active volcanoes, 
where they are required to coordinate themselves without
human intervention, are attracting much attention. 

As one of the most fundamental tasks in 3D-space,
this paper considers the {\em pattern formation problem}
that requires a swarm of robots to
form a given 3D target pattern. 
A robot is a point in 3D-space that autonomously moves according to a
given rule.
We adopt the conventional computation model~\cite{FYOKY15,SY99,YS10},
i.e., 
each robot repeats a {Look-Compute-Move cycle},
where it observes the positions of other robots (Look phase), 
computes its next position with a given algorithm (Compute phase),
and moves to the next position (Move phase). 
Each robot is {\em anonymous} in the sense that they have no
identifiers and the robots are {\em uniform} in the sense that all robots
execute a common algorithm. 
Each robot has no access to the global $x$-$y$-$z$ coordinate system 
(like GPS) and its observation and movement are done in terms of its 
{\em local $x$-$y$-$z$ coordinate system}.
The origin of the local coordinate system of a robot is its current
position and the local coordinate system has arbitrary
directions and unit distance.
However we assume that all local coordinate systems are right-handed.
Thus each local coordinate system is either a
uniform scaling, transformation, rotation, or their combination of the
global coordinate system. 
A robots is {\em oblivious} if its local memory is refreshed at
the end of each cycle, otherwise {\em non-oblivious}.
Hence the input to the algorithm at an oblivious robot is
the observation obtained in the current cycle. 
Suzuki and Yamashita pointed out that oblivious mobile robot system 
is self-stabilizing~\cite{D74}, that guarantees self-organization and fault
tolerance against finite number of transient faults~\cite{SY99}.
In a Move phase, each robot reaches the next position 
and in this paper we do not care for the track of movement.\footnote{
This type of movement is called rigid movement. 
On the other hand, non-rigid movement
allows a robot to stop en route after moving unknown minimum moving
distance $\delta$ in a Move phase. If the track to the
next position is shorter than $\delta$, non-rigid movement makes a 
robot stop at the next position.}
We consider the {\em fully-synchronous (FSYNC)} model
where the robots execute the
$t$-th Look-Compute-Move cycle at the same time
with each of the Look, Compute, and Move phases completely synchronized. 
Here a {\em configuration} of the robots is 
the positions of the robots observed in the global coordinate system. 
These assumptions mean that the robots do not have
explicit communication medium and 
they have to tolerate inconsistency among local coordinate systems
so that they coordinate themselves by building some agreement
by using inconsistent observations. 

The pattern formation problem was first introduced by Suzuki and
Yamashita for the robots moving in the two-dimensional Euclidean
space (2D-space)~\cite{SY99}.
They characterized the class of formable patterns by using
the notion of {\em symmetricity} of an initial configuration. 
The symmetricity of a configuration is essentially the order of the
cyclic group that acts on it. 
Let $P$ be an initial configuration of robots without any
multiplicity.\footnote{Throughout this paper, we assume that
any initial configuration contains no multiplicity. 
It is impossible to break up multiple robots on a single position
as we assume all robots execute the same algorithm. }
We consider the decomposition of $P$ into regular $m$-gons 
centered at the center of the smallest enclosing circle of $P$.
The symmetricity $\rho(P)$ of $P$ is the maximum value of such $m$
with an exception that when a single point of $P$ is at 
the center of the smallest enclosing circle of $P$, $\rho(P)=1$.
We consider a point as a regular $1$-gon with an arbitrary center
and a set of two points as a regular $2$-gon with 
the center at the midpoint of the two points. 
This exception is derived from an easy symmetry breaking algorithm,
i.e., the robot on the center leaves its current position. 
Then they showed that FSYNC robots can form
a target pattern $F$ from a given initial configuration $P$
if and only if $\rho(P)$ divides $\rho(F)$
regardless of obliviousness. 
This impossibility is by the fact that since $\rho(P)$ divides the
robots into regular $\rho(P)$-gons, 
these symmetric $\rho(P)$ robots cannot break their symmetry. 
Thus robots in 2D-space cannot break
rotational symmetry of an initial configuration. 

Yamauchi et al. first showed that rotational symmetry of robots in
3D-space causes the same impossibility~\cite{YUKY15}. 
They considered the {\em plane formation problem} that requires the
robots to land on a common plane without making any
multiplicity.  
In 3D-space, there are five-kinds of rotation groups with finite order, 
i.e., 
{\em the cyclic groups}, {\em the dihedral groups},
{\em the tetrahedral group},
{\em the octahedral group},
and {\em the icosahedral group}~\cite{A88,C73,C97}.
Given a configuration $P$ in 3D-space,
its {\em rotation group} $\gamma(P)$ is the rotation group that acts on $P$
and none of its supergroup in these five kinds of rotation groups 
acts on $P$. 
They called the cyclic groups and the dihedral groups
{\em two-dimensional (2D)},
while the remaining three rotation groups
{\em three-dimensional (3D)}, 
because 3D rotation groups do not act on points on a plane. 
Then they showed that the robots cannot form a plane from an initial 
configuration $P$ 
if and only if 
$\gamma(P)$ is a 3D rotation group and all robots are not on the
rotation axes of $\gamma(P)$. 
The results showed that even when the robots form a regular polyhedron 
(except an regular icosahedron) in an initial configuration,
they can break their 3D rotation group and form a plane. 

In this paper, we define the {\em symmetricity} $\varrho(P)$ of a
configuration $P$ in 3D-space as a set of rotation groups
that acts on the positions of robots and 
that consists of rotation axes containing no robot.
We will give the following necessary and sufficient condition
for the pattern formation problem in 3D-space. 
\begin{theorem}
\label{theorem:main} 
Regardless of obliviousness, 
FSYNC robots can form a target pattern $F$ from 
an initial configuration $P$ 
if and only if 
$\varrho(P) \subseteq \varrho(F)$. 
\end{theorem}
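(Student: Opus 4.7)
\emph{Proof plan.} I would prove the two directions separately. The necessity direction is a symmetry-preservation argument generalizing the classical 2D argument of Suzuki--Yamashita, while the sufficiency direction requires the construction of an explicit algorithm and is the substantive part of the work.

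\textbf{Necessity.} Fix any $G \in \varrho(P)$. By induction on the cycle number I would show that $G$ continues to act on every reachable configuration and that no axis of $G$ ever carries a robot, so in particular $G \in \varrho(F)$. The core step is a symmetry-preservation lemma: if $G$ acts on a configuration $Q$ with no robot on any axis of $G$, then every $G$-orbit has size exactly $|G|$, and for any $g \in G$ and any robot $r$ mapped to $r' = g(r)$, the adversary may set $r'$'s local right-handed frame to be the $g$-image of $r$'s frame (legitimate because $g$ is orientation-preserving). Under this choice $r$ and $r'$ observe identical local views and therefore compute identical destinations in their own frames, so the synchronous step sends $r$ and $r'$ to positions again related by $g$. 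Applied to every orbit, this shows the next configuration is $G$-invariant; and a robot lands on an axis of $G$ only if a full $G$-orbit collapses to a point, which is forbidden by the no-multiplicity convention on reachable configurations. Iterating to the terminal step gives $G \in \varrho(F)$. The argument is independent of obliviousness, since the same equivariance applies by induction to the robots' internal memories.

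\textbf{Sufficiency.} For the converse I would exhibit an explicit oblivious FSYNC algorithm, using the plane formation theorem of Yamauchi et al.\ as a reduction device whenever possible: it applies precisely when $\gamma(P)$ is two-dimensional or some robot lies on an axis of $\gamma(P)$. Once the robots have agreed on a common plane arranged in a canonical way, they can recover enough global orientation to embed the target $F$ in 3D-space consistently and then reduce the remaining moves to 2D-style pattern-formation techniques in the spirit of Suzuki--Yamashita and Fujinaga et al., applied orbit-by-orbit under a $\gamma(P)$-equivariant matching. The residual case is $\gamma(P) \in \{T, O, I\}$ with no robot on any axis, where plane formation is provably impossible; there the hypothesis gives $\gamma(P) \in \varrho(P) \subseteq \varrho(F)$, so $F$ itself admits the full 3D rotation group, and the robots embed $F$ directly in 3D-space by a $\gamma(P)$-equivariant bijection of the orbits of $P$ onto those of $F$ and move synchronously to their assigned targets.

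\textbf{Main obstacle.} The step I expect to be hardest is exhibiting, in each case, a canonical $\varrho(P)$-equivariant embedding of $F$ computable from every robot's viewpoint, given that local frames agree only up to the rotation groups in $\varrho(P)$. The embedding must respect every $G \in \varrho(P) \subseteq \varrho(F)$ simultaneously, must avoid introducing multiplicities en route, and must not place any robot on an axis of $\gamma(F)$ that is forbidden by the chosen matching. When robots on axes of $\gamma(P) \setminus \varrho(P)$ exist they serve as distinguished anchors that pin the embedding down; in the purely axis-free 3D case the centre and a canonical orbit of the smallest enclosing sphere will have to play that role, and a careful case analysis on the action of $\gamma(P)$ on the vertices, edges, and faces of its fundamental domain will be needed to make the orbit-by-orbit matching simultaneously well-defined and collision-free.
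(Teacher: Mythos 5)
Your necessity argument follows the paper's route (an adversarial $G$-equivariant assignment of right-handed local frames, preserved by induction over synchronous cycles), but one step is wrong as stated: you maintain the invariant that no robot ever lies on an axis of $G$ by appealing to ``the no-multiplicity convention on reachable configurations,'' and no such convention exists. The paper explicitly allows intermediate configurations of an arbitrary algorithm to contain multiplicities and extends $\gamma$ and $\sigma$ to multisets for exactly this reason; an orbit can legitimately collapse onto a $k$-fold axis, producing multiplicity $k$ there. The repair is easy and is what the paper does: the equivariance induction needs no axis-freeness hypothesis at all (it shows $\sigma(P(t)) \succeq G$ for every $t$, multiplicities included), and axis-freeness is invoked only at the final configuration, which is similar to $F$ and hence multiplicity-free, forcing $G$ to act on $F$ with unoccupied axes, i.e., $G \in \varrho(F)$.

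The sufficiency direction is where your plan genuinely diverges and where the gap is serious. You reduce to the plane formation theorem whenever $\gamma(P)$ is two-dimensional or some axis is occupied, but that theorem only guarantees that a plane is reached; it gives no control over the symmetricity of the resulting planar configuration $P''$. Since the robots are oblivious, what matters afterwards is $\varrho(P'')$, and nothing prevents a black-box plane formation from landing in a configuration with $\varrho(P'') \not\subseteq \varrho(F)$: for instance, from $P$ with $\gamma(P)=C_2$ and the axis occupied (so $\varrho(P)=\{C_1\}$) toward an asymmetric $F$, a planar $P''$ whose perpendicular $2$-fold axis is unoccupied has $C_2 \in \varrho(P'')$, and by your own necessity direction $F$ is then no longer formable. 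This is precisely the control the paper builds in: its algorithm $\psi_{SYM}$ is a dedicated, iterated ``go-to-center''-style symmetry breaker, proved by an exhaustive case analysis (over the seven transitive polyhedra and over occupied axes of cyclic and dihedral configurations) to terminate in a configuration $P'$ with $\gamma(P') \in \varrho(P) \subseteq \varrho(F)$ and no occupied axes; only then do the equivariant embedding of $\widetilde{F}$ and the orbit-by-orbit nearest-point matching that you sketch for your residual case go through. You have correctly identified the embedding/matching step as delicate, but the missing idea is the intermediate target itself: one must drive $\gamma$ down into $\varrho(P)$ directly in 3D-space rather than detour through a plane. (A minor further omission: the paper must also special-case regular $n$-gons, where no symmetry breaking is attempted because an oblivious robot cannot distinguish a failed attempt from the initial state.)
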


The impossibility is derived from symmetry among robots in the same
way as 2D-space. 
For the solvable instances, we present a pattern formation algorithm
for oblivious FSYNC robots that non-oblivious robots can 
execute correctly by just ignoring its memory contents. 
Theorem~\ref{theorem:main} guarantees, for example, that
the robots can form a square anti-prism or a regular octagon 
from an initial configuration where they form a cube, because
their symmetricity is identical while the rotation groups of the
target patterns are dihedral and that of the initial configuration is
the octahedral group. Thus the rotation group of
the robots decreases during the formation.
(See Figure~\ref{fig:exrho}.)
We will show that the robots can translate a given initial configuration
$P$ into another configuration $P'$ with $\gamma(P') \in \varrho(P)$.
From the definition, $\gamma(P') \preceq \gamma(P)$. 
The symmetry breaking algorithm is based on the ``go-to-center''
algorithm proposed in \cite{YUKY15}. 
Theorem~\ref{theorem:main} guarantees
that such $\gamma(P')$ is an element of $\varrho(F)$
which means that $\gamma(P')$ is a subgroup of $\gamma(F)$.
We will show that the robots can easily form $F$ from such $P'$
by embedding an image of $F$ into $P'$ and building an agreement on
a perfect matching between $F$ and $P'$.

\begin{figure}[t]
\centering 
\subfigure[]{\includegraphics[width=2cm]{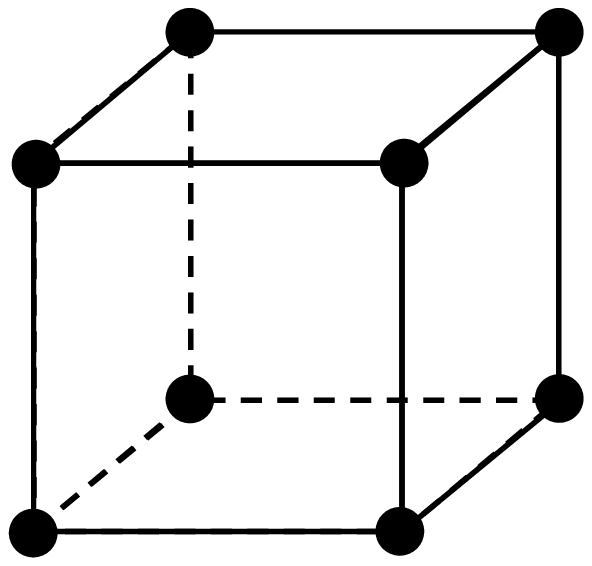}\label{fig:rho-cube}}
\hspace{1mm}
\subfigure[]{\includegraphics[width=2cm]{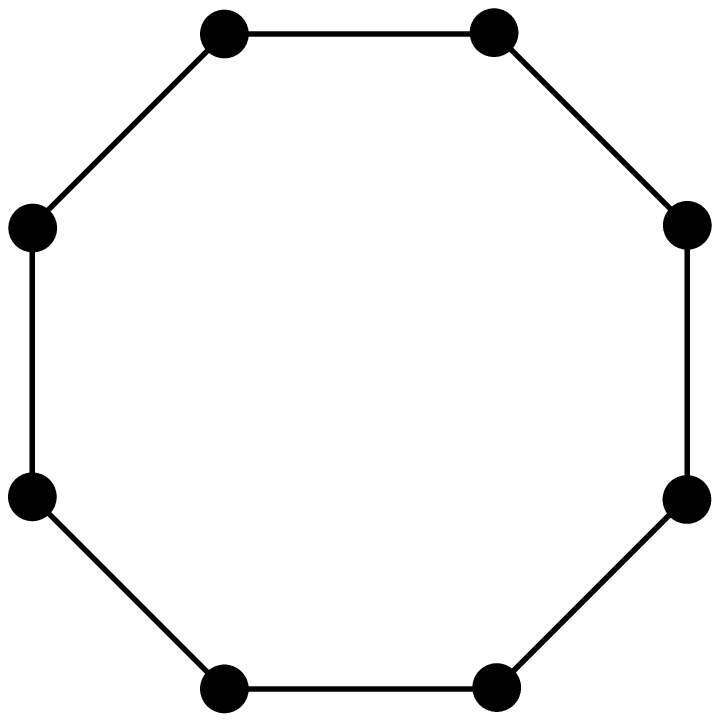}\label{fig:rho-octa}}
\hspace{1mm}
\subfigure[]{\includegraphics[width=2cm]{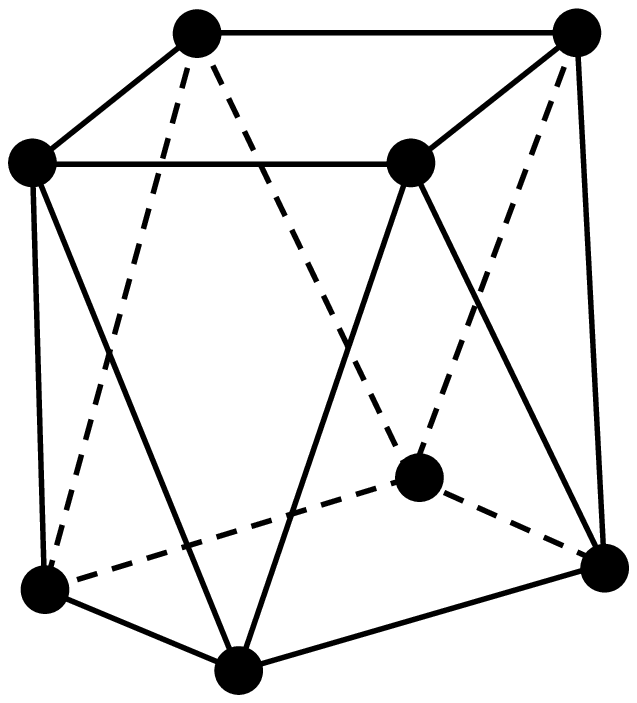}\label{fig:rho-anti}}
 \caption{Example of an initial configuration and target patterns with
 the same symmetricity. 
 (a) Initial configuration $P$ where robots form a cube and 
 $\gamma(P)$ is the octahedral group. 
 (b)(c) Target patterns $F$ and $F'$
 where robots form a regular octagon and a regular square anti-prism.
 Thus $\gamma(F)$ and $\gamma(F')$ are dihedral groups. 
 }
\label{fig:exrho}
\end{figure}

The main contribution of this paper is the fact that
the symmetry of moving points is different from
symmetry of their positions and
robots can show their symmetry by their movement.
We finally note that our result is a 
generalization of existing results for FSYNC robots in
2D-space~\cite{SY99}. 

\medskip
\noindent{\bf Related work.~}
The only existing paper on autonomous mobile robot systems in
3D-space~\cite{YUKY15}
considers the plane formation problem motivated by
the fact that autonomous mobile robot systems in 2D-space  
has been extensively investigated.
We mainly survey the results on formation problems in 2D-space. 
The main research interest has been the computational power of robot
systems and minimum requirements for the robots to accomplish a 
given task. 
Many fundamental distributed tasks have been introduced, for example,
{\em gathering}~\cite{SY99},
{\em pattern formation}~\cite{SY99},
{\em partitioning}~\cite{EP09},
{\em covering}~\cite{IKY14}, and so on. 
The book by Flocchini et al.~\cite{FPS12} contains almost all results 
on autonomous mobile robot systems up to year 2012.

Asynchrony among robots is classified into three models; 
the fully synchronous (FSYNC) model,
the {\em semi-synchronous} (SSYNC) model,
and the {\em asynchronous} (ASYNC) model. 
The robots are SSYNC if some robots do not start 
the $i$-th Look-Compute-Move cycle for some $i$, 
but all robots that have started the cycle synchronously 
execute their Look, Compute, and Move phases~\cite{SY99}. 
The robots are ASYNC if no assumptions are made on the execution of 
Look-Compute-Move cycles~\cite{FPSW08}. 

Yamashita et al. characterized the pattern formation problem 
for each of the FSYNC, SSYNC, and ASYNC models~\cite{FYOKY15,SY99,YS10},
that are summarized as follows:
(1) For non-oblivious FSYNC robots,
pattern $F$ is formable from an initial configuration $P$
if and only if $\rho(P)$ divides $\rho(F)$.
(2) Pattern $F$ is formable from $P$ by oblivious ASYNC (thus SSYNC)
robots 
if $F$ is formable from $P$ by non-oblivious FSYNC robots,
except for $F$ being a point of multiplicity 2.

This exceptional case is called the rendezvous problem.
Indeed, it is trivial for two FSYNC robots,
but it is unsolvable for two oblivious SSYNC (and hence ASYNC)
robots while oblivious SSYNC (and ASYNC) robots can converge to a
point~\cite{SY99}.  
However, more than two robots can form a point in the
SSYNC model~\cite{SY99} and in the ASYNC model~\cite{CFPS12}. 
In terms of symmetricity, the point formation problem is one of
the easiest problems (except the rendezvous problem), 
since $\rho(F)=n$ when $F$ is a point of multiplicity $n$ 
and $\rho(P)$ is always a divisor of $n$ by the definition of the symmetricity,
where $n$ is the number of robots.

The other easiest case is a regular $n$-gon, 
which is also called the circle formation problem, 
since $\rho(F) = n$.
Recently the circle formation problem for $n$ oblivious ASYNC 
robots ($n \neq 4$) 
is solved without agreement of clockwise direction,
i.e., chirality~\cite{FPSV14}.

Yamauchi et al. showed a randomized pattern formation
algorithm for oblivious ASYNC robots that breaks the symmetricity
of the initial configuration and forms any
target pattern with probability $1$~\cite{YY14}.

The notion of {\em compass} was first introduced in~\cite{FPSW05} 
that assumes agreement of the direction and/or the orientation
(i.e., the positive direction) of $x$-$y$ local coordinate systems. 
Flocchini et al. showed that if oblivious ASYNC robots agree on the 
directions and orientations of $x$-$y$ axes, 
they can form any arbitrary target pattern~\cite{FPSW08}. 

Das et al. characterized the formation of a sequence of 
patterns by oblivious SSYNC robots 
in terms of symmetricity~\cite{DFSY15}.
They showed that symmetricity of each pattern of
a formable sequence should be identical and a multiple of the
symmetricity of an initial configuration. 
Such sequence of patterns is a geometric global memory formed
by oblivious robots. 

All above results are based on unlimited visibility of robots. 
A robot has {\em limited visibility} if it can observe other robots
within the unknown fixed visibility range~\cite{FPSW05}.
Yamauchi et al. showed that oblivious FSYNC (thus SSYNC and
ASYNC) robots with limited visibility 
have substantially weaker formation power 
than FSYNC robots with unlimited visibility~\cite{YY13}. 
Ando et al. proposed a convergence algorithm for oblivious SSYNC
robots with limited visibility~\cite{AOSY99} 
while Flocchini et al. assumed consistent compass for
convergence of oblivious ASYNC robots with limited
visibility~\cite{FPSW05}. 

Peleg et al. first introduced the 
{\em luminous robot model} where each robot is equipped with
externally and/or internally visible lights~\cite{P05}.
Das et al. provided an algorithms for oblivious luminous
robots to simulate robots without lights in stronger synchronization
model and showed that two ASYNC (thus SSYNC) luminous robots
can form a point~\cite{DFPSY16}.

\medskip
\noindent{\bf Organization.~~} 
We define the mobile robot model in Section~\ref{sec:prel} and 
we introduce the rotation groups and symmetricity of a
configuration in Section~\ref{sec:3Dsym}. 
Then we show that the robots can reduce their rotation group to some
element of the symmetricity of an initial configuration by the
``go-to-center'' algorithm in Section~\ref{sec:show-sym}.
We prove the necessity of Theorem~\ref{theorem:main} in
Section~\ref{sec:nec} and sufficiency of Theorem~\ref{theorem:main}
in Section~\ref{sec:suf} by showing a pattern formation algorithm
for oblivious FSYNC robots.
Finally Section~\ref{sec:concl} concludes this paper.

\section{Preliminary} 
\label{sec:prel}

Let $R = \{r_1, r_2, \ldots, r_n\}$ be a set of $n \geq 3$ robots 
each of which is represented by a point in 3D-space. 
Each robot is anonymous and there is no way to distinguish them.
We use the indexes just for description. 

By $Z_0$ we denote the global $x$-$y$-$z$ coordinate system.
Let $p_i(t) \in \Real^3$ be the position of $r_i$ at time $t$ in $Z_0$, 
where $\Real$ is the set of real numbers.
A {\em configuration} of $R$ at time $t$ is denoted by a multiset 
$P(t) = \{p_1(t), p_2(t), \ldots, p_n(t)\}$.
Let ${\cal P}_n^3 = ({\mathbb R}^3)^n$ be the set of all
configurations. 
We assume that the robots initially occupy distinct positions,
i.e., $p_i(0) \not= p_j(0)$ for all $1 \leq i < j \leq n$. 
\footnote{This assumption is necessary because
it is impossible to break up multiple oblivious FSYNC robots 
(with the same local coordinate system)
on a single position as long as they execute the same algorithm. 
The proposed pattern formation algorithm does not make any
multiplicity during the formation. However, we have to consider
configurations with multiplicity when we prove impossibility
by checking executions of any arbitrary algorithm. 
}
The robots have no access to $Z_0$. 
Instead, each robot $r_i$ observes the positions of other robots in its
local $x$-$y$-$z$ coordinate system $Z_i$,
where the origin is always its current position,
while the direction of each positive axis and the magnitude of 
the unit distance are arbitrary but never
change.\footnote{
Since $Z_i$ changes whenever $r_i$ moves,
notation $Z_i(t)$ is more rigid,
but we omit parameter $t$ to simplify its notation.}
We assume that $Z_0$ and all $Z_i$ are right-handed. 
Thus $Z_i$ is either a uniform scaling, transformation, rotation,
or their combinations of $Z_0$. 
By $Z_i(p)$ we denote the coordinates of a point $p$ in $Z_i$.

We consider discrete time $0, 1, 2, \cdots$ and at each time step
the robots execute a Look-Compute-Move cycle with each of
Look, Compute, and Move phases completely synchronized, i.e., 
we consider the {\em fully-synchronous (FSYNC)} robots in this paper.
We specifically assume without loss of generality 
that the $(t+1)$-th Look-Compute-Move cycle starts 
at time $t$ and finishes before time $t+1$.
At time $t$,
each $r_i \in R$ obtains a multiset 
$Z_i(P(t)) = \{ Z_i(p_1(t)), Z_i(p_2(t)), \ldots , Z_i(p_n(t)) \}$
in the Look phase. 
We call $Z_i(P(t))$ the {\em local observation} of $r_i$ at $t$.
Then $r_i$ computes its next position using an algorithm $\psi$,
which is common to all robots. 
If $\psi$ uses only $Z_i(P(t))$, we say that $r_i$ is {\em oblivious}. 
Otherwise, we say $r_i$ is {\em non-oblivious}, i.e., 
$r_i$ can use past local observations and past outputs of $\psi$. 
Finally, $r_i$ moves to $\psi(Z_i(P(t)))$ in $Z_i$ before time $t+1$.
Thus the movement of robots is {\em rigid}. 
In this paper, we do not care for the track of the movement of robots, 
rather each robot jumps to its next position. 
An infinite sequence of configurations
${\cal E}: P(0), P(1), \ldots$ is called an {\em execution} 
from an {\em initial configuration} $P(0)$.
Observe that the execution $\cal E$ is uniquely determined,
once initial configuration $P(0)$, 
local coordinate systems of robots at time $0$,
initial local memory contents (if any), 
and algorithm $\psi$ are fixed.

\noindent{\bf Pattern formation problem.~} 
The {\em pattern formation problem} is to make the robots 
form a given target pattern $F$ from an initial configuration $P$. 
The target pattern $F$ is given to each robot as a set of
coordinates of $n$ points in $Z_0$. 
We assume that $F$ does not contain any multiplicity, but as we will discuss
in Section~\ref{sec:concl}, we can easily extend the results to target
patterns with multiplicities. 
Because robots do not have access to the global coordinate 
system, it is impossible to form $F$ itself. 
Let ${\cal T}$ be the set of all rotations, translations, 
uniform scalings, and their combinations. 
We say $F'$ is {\em similar} to $F$ if there exists $Z \in {\cal T}$ 
such that $F' = Z(F)$, which we denote by $F' \simeq F$. 
We say that the robots form a target pattern $F$ 
from an initial configuration $P$,
if, regardless of the choice of local coordinate systems and
memory contents (if any) of
robots in the initial configuration, 
any execution $P(0)(=P), P(1), \ldots$ reaches 
a configuration $P(t)$ that is similar to $F$ in finite time. 

For any (multi-)set of points $P$, by $B(P)$ and $b(P)$,
we denote the {\em smallest enclosing ball} of $P$ and its center,
respectively. 
A point on the sphere of a ball is said to be {\em on} the ball 
and we assume that the {\em interior} or the {\em exterior}
of a ball does not include its sphere. 
The {\em innermost empty ball} $I(P)$ of $P$ is the ball centered at $b(P)$ 
and contains no point of $P$ in its interior, but
contains at least one point of $P$ on it. 
When all points are on $B(P)$, 
we say that $P$ is {\em spherical}.
Given a ball $B$, $rad(B)$ denotes the radius of $B$. 
We denote a ball centered at an arbitrary point $b$ and 
with radius $r$ by $Ball(b, r)$. 

\section{Symmetricity in 3D-Space} 
\label{sec:3Dsym}

In this section, we define the {\em rotation group} and the
{\em symmetricity} 
of a set of points and investigate the relation between 
the two notions. 
We start with an arbitrary set of points because
any initial configuration and any target pattern contain no
multiplicity, 
and then extend
these notions to multiset of points since we should consider 
an arbitrary algorithm that may produce multiplicity when we discuss
impossibility. 

In 2D-space, the symmetricity $\rho(P)$ of a configuration $P$
considers the worst-case arrangement of local coordinate system of $P$,
that is caused by the rotations around the center of the
smallest enclosing circle of $P$, denoted by $c(P)$,
i.e., the {\em cyclic group} of order $\rho(P)$. 
Hence $\rho(P)$ is redefined as follows: 
For an initial configuration $P$ identified as a set of points, 
$\rho(P)$ is the maximum order of the cyclic group
that acts on $P$ with the exception such that
when $c(P) \in P$, $\rho(P) = 1$. 
This exceptional case means that 
whenever $c(P) \in P$, 
the robot on $c(P)$ can translate $P$ into another asymmetric
configuration $P'$ with $\rho(P') = 1$ by leaving 
$c(P)$ which is on the rotation axes of the cyclic group. 

\begin{figure}[t]
\centering 
\includegraphics[width=3cm]{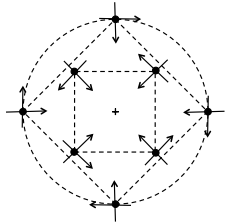}
\caption{A symmetric initial configuration in 2D-space,
whose symmetricity is $4$.
Eight robots and their local coordinate systems are 
symmetric with respect to the center of their smallest enclosing circle. 
There are two groups consisting of $4$ symmetric robots 
and the robots in each group cannot break their symmetry. }
\label{fig:2dim-sym}
\end{figure}

In 3D-space, 
we consider rotation groups so that we check all possible symmetric
arrangement of local coordinate systems. 
There are only five kinds of finite-order rotation groups in 3D-space; 
the cyclic groups, the dihedral groups, 
the tetrahedral group, the octahedral group, and 
the icosahedral group.\footnote{These five kinds of groups are 
proper subgroup of $SO(3)$, which is defined by  
rotations of a unit ball and its order is infinite}
Symmetry operations in 3D-space consist of rotation around an axis, 
reflection for a mirror plane (bilateral symmetry),
reflections for a point (central inversion),
and rotation-reflections~. 
However, we consider symmetry among robots, specifically,
whether the robots have identical local observation or not. 
Because all local coordinate systems are right-handed, 
it is sufficient to consider transformations
that preserve the center of the
smallest enclosing ball of robots and keep Euclidean distance and
handedness,
in other words, direct congruent transformations.
Such symmetry operations consist of rotations around some axes and
we consider above five kinds of rotation groups. 
(See, for example, \cite{C73,C97} for more detail.)

In the following, we first define the rotation group $\gamma(P)$
of a set of points $P$,
which is the symmetry that the robots can agree by just observing
$P$ in their local coordinate systems.
Then we define the rotation group $\sigma(P)$ of the arrangement
of local coordinate systems of $P$,
which is the symmetry that the robots can never break. 
However, the robots do not agree on $\sigma(P)$ by just observing
the set of points $P$ in their local coordinate systems. 
We define the symmetricity $\varrho(P)$ of $P$ that consists of
all possible rotation groups of the arrangement of
local coordinate systems of $P$.
Intuitively the maximal elements in $\varrho(P)$ are the worst-case 
symmetry of the robots. 
The maximality of $G \in \varrho(P)$ means that
there is no proper supergroup of $G$ in $\varrho(P)$ and 
$\varrho(P)$ actually has multiple such maximal elements.
Based on these notions, we present the first impossibility result
that shows that FSYNC robots can never reduce $\sigma(P)$ of an initial
configuration $P$ by any arbitrary algorithm.

\subsection{Rotation group of a set of points}
\label{sec:def-rg}

We formally define the five kinds of rotation groups. 
The rotation group $SO(3)$ has five kinds of
subgroups of finite order~\cite{C73,C97}; 
the cyclic groups $C_k$ ($k=1, 2, \cdots$), 
the dihedral groups $D_{\ell}$ ($\ell = 2, 3, \cdots$), 
the tetrahedral group $T$, 
the octahedral group $O$, and 
the icosahedral group $I$. 
Each of these groups is identified by the rotations 
of a regular pyramid with a regular $k$-gon base, 
a regular prism with regular $\ell$-gon bases,
a regular tetrahedron, 
a regular octahedron, and a regular icosahedron, respectively.
(See Figure~\ref{fig:csym}.)
For example, consider a regular pyramid that has a regular $k$-gon as its base. 
The rotation operations for this regular pyramid are rotations 
by $2\pi i/k$ for $i = 1, 2, \cdots, k$ 
around an axis containing the apex and the center of the base. 
We call such an axis {\em $k$-fold axis}. 
Let $a^i$ be the rotation by $2\pi i/k$ 
around this $k$-fold axis with $a^k = e$ where $e$ is the identity element. 
Then, $a^1, a^2, \ldots, a^k$ form the {\em cyclic group} $C_k$. 

\begin{figure}[t]
\centering 
\subfigure[]{\includegraphics[width=2cm]{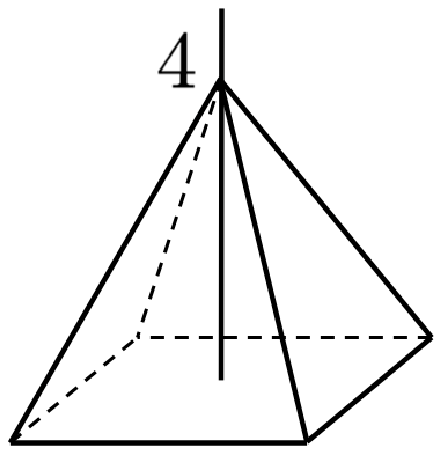}\label{fig:axis-cyc}}
\hspace{1mm}
\subfigure[]{\includegraphics[width=2cm]{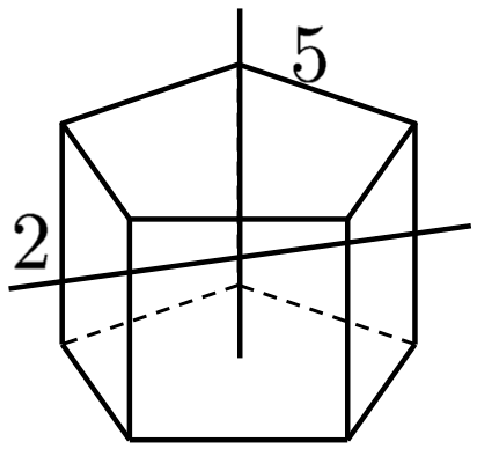}\label{fig:axis-dih}}
\hspace{1mm}
\subfigure[]{\includegraphics[width=2cm]{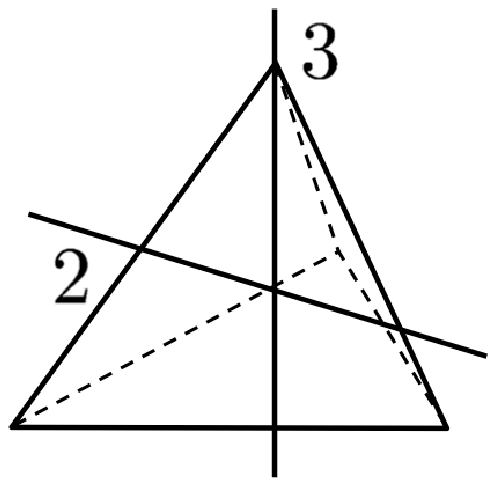}\label{fig:axis-tetra}}
\hspace{1mm}
\subfigure[]{\includegraphics[width=2cm]{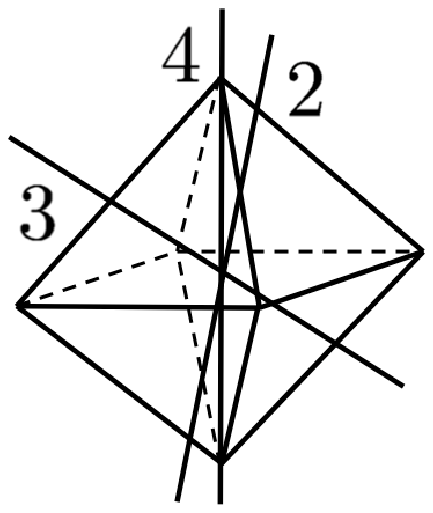}\label{fig:axis-octa}}
\hspace{1mm}
\subfigure[]{\includegraphics[width=2cm]{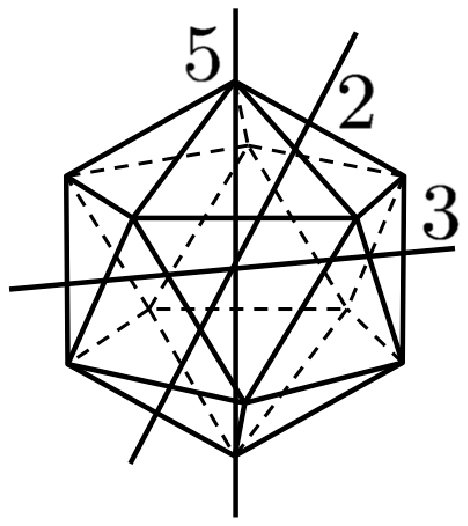}\label{fig:axis-icosa}}
\caption{Rotation groups.  
(a) the cyclic group $C_4$, 
(b) the dihedral group $D_5$, 
(c) the tetrahedral group $T$, 
(d) the octahedral group $O$, 
and (e) the icosahedral group $I$. 
Figures show only one axis for each fold of axes.}
\label{fig:csym}
\end{figure}

A regular prism (except a cube) has two parallel 
regular $\ell$-gons as its top and bottom bases 
and has two types of rotation axes, 
one is the $\ell$-fold axis containing the centers of its top and bottom
bases, 
and the others are $\ell$ $2$-fold axes that exchange the top and the 
bottom. 
We call this $\ell$-fold axis {\em principal axis} and 
the remaining $\ell$ $2$-fold axes {\em secondary axes}. 
These rotation operations on a regular prism form the 
{\em dihedral group} $D_{\ell}$. 
When $\ell=2$, we can define $D_2$ in the same way but in the group theory
we do not distinguish the principal axis. 

The remaining three rotation groups $T$, $O$, and $I$ are called 
the polyhedral groups. 
Table~\ref{table:elements} shows the number of rotation axes and the 
number of elements around each type of rotation axes 
for each of the polyhedral groups.

\begin{table}[t]
\begin{center}
\caption{Three polyhedral groups. 
The number of elements around $k$-fold axes
excluding the identity element is shown. 
The number in the brackets is the number of rotation axes.} 
\label{table:elements} 
\begin{tabular}{c|c|c|c|c|c}
\hline 
Polyhedral group & $2$-fold axes & $3$-fold axes & $4$-fold axes &
$5$-fold axes & Order \\ 
\hline 
$T$ & 3(3) & 8(4) & - & - & 12 \\ 
$O$ & 6(6) & 8(4) & 9(3) & - & 24 \\ 
$I$ & 15(15) & 20(10) & - & 24(6) & 60\\ 
\hline
\end{tabular}
\end{center}
\end{table} 

Let ${\mathbb S} = \{C_{k}, D_{\ell}, T, O, I \ | k = 1,2,
\ldots, and \ \ell = 2, 3, \ldots\}$ 
be the set of rotation groups,
where $C_1$ is the rotation group with order $1$;
its unique element is the identity element (i.e., $1$-fold rotation).

When $G'$ is a subgroup of $G$ ($G, G' \in {\mathbb S}$), 
we denote it by $G' \preceq G$. 
If $G'$ is a proper subgroup of $G$ (i.e., $G' \neq G$), 
we denote it by $G' \prec G$. 
For example, we have $T \prec O$, $T \prec I$, but $O \not\prec I$. 
If $G \in {\mathbb S}$ has a $k$-fold axis, then $C_k \preceq G$. 
Clearly, $C_{k'} \preceq C_k$ if $k'|k$, i.e., $k'$ divides $k$, 
which also holds for dihedral groups. 
The relation $\prec$ is asymmetric and transitive. 
Figure~\ref{fig:subgroups} shows the structure of subgroups of 
polyhedral groups. 

\begin{figure}[t]
\centering 
\includegraphics[width=4cm]{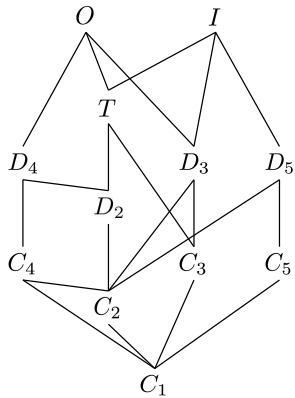}
\caption{The subgroups of polyhedral groups. Two groups $G$ and $G'$ are 
linked by a line if there is no proper subgroup $G''$ of $G$ 
that satisfies $G' \prec  G''$, and $G$ is put above $G'$. }
\label{fig:subgroups}
\end{figure}

For a set of points $P$,
we consider rotations on $P$ that produces $P$ itself. 
\begin{definition}
Let $P \in {\cal P}_n^3$ be a set of points. 
The rotation group $\gamma(P)$ of $P$ 
is the rotation group in ${\mathbb S}$ that acts on $P$ and 
none of its proper supergroup in ${\mathbb S}$ acts on $P$. 
 \end{definition}

 Clearly, if $\gamma(P) \succ C_1$, all rotation axes of
 $\gamma(P)$ contain $b(P)$, which is the single intersection of them. 
 From the definition, we can uniquely determine $\gamma(P)$
 irrespective of (local) coordinate system to observe $P$. 
For example, when $P$ forms a regular pyramid with 
a regular square base, $\gamma(P) = C_4$, 
when $P$ forms a square, $\gamma(P) = D_4$,   
and when $P$ forms a cube, $\gamma(P) = O$. 
When the robots are on a line and symmetric against $b(P)$, 
$\gamma(P)=D_{\infty}$, and $\gamma(P)=C_{\infty}$ if they are 
asymmetric against $b(P)$.

We say a rotation axis of $\gamma(P)$ is {\em occupied} when
it contains some points of $P$ and {\em unoccupied} otherwise. 
For example, when $P$ forms a cube, 
all $3$-fold axes of $\gamma(P) = O$ are occupied while
all $2$-fold axes and all $4$-fold axes are
unoccupied. 

In the group theory, we do not distinguish the principal axis 
of $D_2$ from the other two $2$-fold axes. 
Actually, since we consider the rotations on a set of points
in 3D-space, 
we can recognize a principal axis of $D_2$. 
Consider a sphenoid consisting of $4$ congruent triangles 
(Figure~\ref{fig:sphenoid}).
A rotation axis of such a sphenoid contains the midpoints of 
opposite edges and there are three $2$-fold axis 
perpendicular to each other. 
Hence the rotation group of such a sphenoid is $D_2$. 
However we can recognize, for example, the vertical $2$-fold axis 
from the others by their lengths (between the midpoints connecting).
The polyhedra on which only $D_2$ can act are 
rectangles and the family of such sphenoids, and we can always
recognize the principal axis. 
Other related polyhedra are squares and regular tetrahedra, but 
$D_4$ also acts on a square and $T$ acts on a regular tetrahedron. 
Hence their rotation groups are proper supergroup of $D_2$. 
We can show the following property regarding the principal axis of
$D_2$. See Appendix~\ref{app:rotation-groups} for the proof. 

\begin{property}
\label{property:d2-principal}
Let $P \in {\cal P}_n^3$ be a set of points. 
If $D_2$ acts on $P$ and we cannot distinguish the principal axis of 
(an arbitrary embedding of) $D_2$, then $\gamma(P) \succ D_2$. 
\end{property}

\begin{figure}[t]
\centering 
\includegraphics[width=2.5cm]{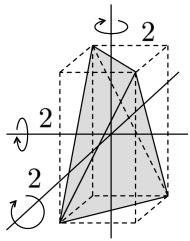}
\caption{A sphenoid consisting of $4$ congruent triangles.
Its rotation group is $D_2$.
Since the vertices are not placed equidistant positions from the three axes,
we can distinguish an axis as the principal axis from the others.}
\label{fig:sphenoid}
\end{figure}

Given a set of points $P$, $\gamma(P)$ determines the arrangement of its 
rotation axes in $P$. 
We thus use $\gamma(P)$ and its arrangement in $P$ interchangeably. 
If $\gamma(P) = C_k$ ($k > 1$), the single rotation axis of $C_k$ has 
a ``direction'' in the sense that $P$ is asymmetric against $b(P)$.
(Otherwise, $\gamma(P)$ is $D_k$.) 
An example is when $P$ forms a pyramidal frustum with regular $k$-gon 
bases which we cannot rotate to exchange the top and bottom bases. 
We say the single rotation axis of $C_k$ is {\em oriented}. 
The secondary axes of $D_{\ell}$ ($\ell > 2$) are oriented 
if $\ell$ is odd, otherwise not oriented. 
This is because the only rotation axis perpendicular to each 
secondary axis is the principal axis that has a $\pi$ rotation 
if $\ell$ is even, and there is no such rotation axis, if $\ell$ is
odd. 
Additionally, the $3$-fold axes of $T$ are oriented. 
Remember that $D_3$ is not a subgroup of $T$. 
On the other hand, $2$-fold axis of $T$ are not oriented 
because $D_2$ is a subgroup of $T$. 
The rotation axes of $O$ and $I$ are not oriented 
because each rotation axes has at least one $2$-fold axes 
that is perpendicular to it. 

\medskip

\noindent{\bf Embedding of a rotation group.~} 
We define an embedding of a rotation group to an arrangement of its supergroup. 
For two groups $G, H \in {\mathbb S}$, 
an {\em embedding} of $G$ to $H$ is an embedding of each rotation 
axis of $G$ to one of the rotation axes of $H$ 
so that any $k$-fold axis of $G$ overlaps a $k'$-fold axis of $H$ 
satisfying $k|k'$ with keeping the arrangement. 
If rotation axes of $H$ are oriented, 
the corresponding rotation axes of $G$ should keep the orientation. 
If the rotation axes of $H$ are not oriented, 
we do not care for the orientation of $G$. 
For example, we can embed $T$ to $O$ so that each $3$-fold axis of $T$ 
overlaps a $3$-fold axis of $O$ and each $2$-fold 
axis of $T$ overlaps a $4$-fold axis of $O$. 
Note that there may be many embeddings of $G$ to $H$. 
There are six embeddings of $C_4$ to $O$ depending on the choice of 
the $4$-fold axis and its orientation. 
Observe that we can embed $G$ to $H$ if and only if $G \preceq H$. 

\medskip

\noindent{\bf Transitivity.~}
We say that a set of points $P$ is {\em transitive} regarding a rotation
group $G$ if it is an orbit of $G$ through some seed point $s$, i.e.,
$P= Orb(s) = \{g*s \in P: g \in G\}$
for some $s \in P$.\footnote{For a transitive set of
points $P$, arbitrary point $s \in P$ can be a seed point.}
For a transitive set of points $P$ and any $p \in P$,
we call
$\mu(p) = |\{ g \in G : g*s = p \}|$ 
the {\em folding} of $p$.\footnote{
In group theory, 
the folding of a point $P$ is the size of 
the stabilizer of $p$ defined by
$G(p) = \{ g \in G: g*p = p \}$~\cite{A88}.}
We of course count the identity element of $G$ for $\mu(p)$ 
and $\mu(p) \geq 1$ holds for all $p \in P$. 
If $p \in P$ is at $b(P)$, its folding is $|\gamma(P)|$ and
if $p$ is on a $k$-fold axis of $\gamma(P)$,
its folding is $k$.
We have the following lemma. 
\begin{lemma}
\label{lemma:fulding}{\cite{YUKY15}}
Let $P$ be the transitive set of points generated 
by a rotation group $G \in {\mathbb S}$ 
and a seed point $s \in \Real^3$. 
If $p \in P$ is on a $k$-fold axis of $G$ for some $k$,
so are the other points $q \in P$
and $\mu(p) = \mu(q) = k$ holds.
Otherwise, if $p \in P$ is not on any axis of $G$, 
so are the other points $q \in P$,
and $\mu(p) = \mu(q) = 1$ holds.
\end{lemma}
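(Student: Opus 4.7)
The plan is to reduce the statement to the orbit-stabilizer theorem and then read off the size of the stabilizer from the geometry of rotations in $\mathbb{R}^3$. Throughout, $G(s) = \{g \in G : g*s = s\}$ denotes the stabilizer of the seed point.

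\textbf{Step 1: folding equals stabilizer size.} For any $p \in P$, pick any $h \in G$ with $h*s = p$. Then the set $\{g \in G : g*s = p\}$ coincides with the left coset $hG(s)$, so its cardinality is $|G(s)|$. Hence $\mu(p) = |G(s)|$ for every $p \in P$, and already $\mu(p) = \mu(q)$ for any $p,q \in P$. (Equivalently, the stabilizers along an orbit are conjugate, $G(p) = hG(s)h^{-1}$, and therefore share a common order.)

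\textbf{Step 2: geometric identification of $G(s)$.} Every non-identity element of $G$ is a rotation about an axis passing through the common center $b(P)$, and it fixes exactly the points on that axis. Therefore $G(s)$ consists of the identity together with every rotation in $G$ whose axis contains $s$. When $s \neq b(P)$, any two axes of $G$ through $s$ would also meet at $b(P)$, hence coincide; so there is at most one axis of $G$ through $s$. Consequently: if $s$ sits on a $k$-fold axis $A$ of $G$, then $G(s)$ is exactly the cyclic subgroup $C_k$ of rotations about $A$, giving $|G(s)| = k$; otherwise $G(s) = \{e\}$ and $|G(s)| = 1$.

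\textbf{Step 3: transporting the ``on an axis'' property along the orbit.} Suppose $s$ lies on a $k$-fold axis $A$ with generator $a$. For any $g \in G$, the point $p = g*s$ lies on the line $g(A)$, which is the axis of $gag^{-1} \in G$; since conjugation preserves rotation order, $gag^{-1}$ also has order $k$, so $g(A)$ is a $k$-fold axis of $G$ passing through $p$. Conversely, if $s$ is not on any axis of $G$, then neither is any $p = g*s$, because $p$ being on an axis would by the same conjugation argument force $s$ to be on the preimage axis. Combining Steps~1--3 yields the two cases of the lemma.

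The only delicate point is the uniqueness of the axis through $s$ in Step~2 (used to conclude $G(s) = C_k$ rather than some larger subgroup). This rests on the purely geometric fact that all rotation axes of a finite subgroup of $SO(3)$ are concurrent at $b(P)$; I would dispatch this in a single sentence by noting that $b(P)$ is fixed by every element of $G$, so any axis must contain it, and two distinct lines through $b(P)$ share no other common point.
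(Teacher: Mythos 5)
Your proof is correct. The paper itself does not prove this lemma (it is imported from \cite{YUKY15}), but the footnote attached to the definition of $\mu(p)$ — identifying the folding with the size of the stabilizer $G(p)$ — indicates exactly the orbit--stabilizer route you take, so your argument is essentially the intended one: coset counting gives $\mu(p)=|G(s)|$ for all $p$ in the orbit, concurrency of the axes at $b(P)$ pins $G(s)$ down to the cyclic group of the unique axis through $s$ (or to $\{e\}$), and conjugation transports the on-axis/off-axis dichotomy along the orbit.
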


Hence $\mu(p)$ for $p \in P$ is identical for 
a transitive set of points $P$ generated by a rotation group $G$ 
and a seed point $s$. 
We abuse $\mu$ to a transitive set of points $P$ 
and $\mu(P)$ represents $\mu(p)$ for $p \in P$. 
When $\mu(P) >1$, the positions of points of $P$ is uniquely
determined in the arrangement of $G$ if we ignore uniform
scalings that keep the center of $G$. 
Additionally, we have $|P| = |G|/\mu$. 
Table~\ref{table:vt-sets} shows the set of 
points generated by the five kinds of rotation groups. 

\begin{table}[t]
\begin{center}
\caption{The folding of seed points and transitive sets of points}
\label{table:vt-sets}
\begin{tabular}{c|c|c|c|p{6cm}}
\hline 
Rotation Group & Order & Folding & Cardinality & Polyhedron \\
\hline 
Any $G$ & $|G|$ & $|G|$ & $1$ & Point \\
\hline 
\multirow{2}{*}{$C_k$} & \multirow{2}{*}{$k$} & $k$ & 1 & Point \\ 
\cline{3-5}
 & & 1 & $k$ & Regular $k$-gons \\
\hline 
\multirow{3}{*}{$D_2$} & \multirow{3}{*}{4} & 2 & 2 & Line \\ 
\cline{3-5}
 & & 1 & 4 & Regular tetrahedron, infinitely many sphenoids, infinitely 
many rectangles \\ 
\hline 
\multirow{3}{*}{$D_{\ell}$} & \multirow{3}{*}{$2{\ell}$} & $\ell$ & 2 & Line  \\ 
\cline{3-5}
 & & 2 & $\ell$ & Regular $\ell$-gon \\ 
\cline{3-5}
 & & 1 & $2\ell$ & Infinitely many polyhedra \\ 
\hline 
\multirow{3}{*}{$T$} & \multirow{3}{*}{12} & 3 & 4 & Regular tetrahedron \\ 
\cline{3-5}
 & & 2 & 6 & Regular octahedron \\
\cline{3-5}
 & & 1 & 12 & Infinitely many polyhedra \\ 
\hline 
\multirow{4}{*}{$O$} & \multirow{4}{*}{24}& 4 & 6 & Regular octahedron \\
\cline{3-5}
 & & 3 & 8 & Cube \\ 
\cline{3-5}
 & & 2 & 12 & Cuboctahedron \\ 
\cline{3-5}
 & & 1 & 24 & Infinitely many polyhedra \\ 
\hline 
\multirow{4}{*}{$I$} & \multirow{4}{*}{60} & 5 & 12 & Regular icosahedron \\
\cline{3-5}
 &  & 3 & 20 & Regular dodecahedron \\ 
\cline{3-5}
 &  & 2 & 30 & Icosidodecahedron \\ 
\cline{3-5}
 &  & 1 & 60 & Infinitely many polyhedra \\ 
\hline
\end{tabular}
\end{center}
\end{table} 

\medskip

\noindent{\bf $\gamma(P)$-decomposition of $P$.~} 
Yamauchi et al. showed that a set of points $P$ can be decomposed into
transitive subsets~\cite{YUKY15}: 
For a point $p \in P$, let $Orb(p) = \{g * p \in P: g \in \gamma(P)\}$
be the orbit of the group action of $\gamma(P)$ through $p$.
Then we let $\{P_1, P_2, \ldots, P_m\} = \{Orb(p) : p \in P\}$ be
its orbit space. From the definition, each $P_i$ is transitive
regarding $\gamma(P)$ and $\{P_1, P_2, \ldots, P_m\}$ is a partition of
$P$.
Such partition is unique and
we call it the {\em $\gamma(P)$-decomposition} of $P$. 
Clearly, each element of the $\gamma(P)$-decomposition
is one of the polyhedron for $\gamma(P)$ 
shown in Table~\ref{table:vt-sets}.
Note that the sizes of the elements of the $\gamma(P)$decomposition of
$P$ may be different. 
The $\gamma(P)$-decomposition of $P$
does not depend on the local coordinate systems and 
each robot can recognize it.

For the $\gamma(P)$-decomposition $\{P_1, P_2, \ldots, P_m\}$ of $P$, 
we denoted the ball centered at $b(P)$ and contains $P_i$ on it
by $Ball(P_i)$ for each $1 \leq i \leq m$. 

In~\cite{YUKY15}, the authors showed the following theorem. 

\begin{theorem}{\cite{YUKY15}}
\label{theorem:decomposition}
Let $P \in {\cal P}_n^3$ be a configuration of robots recognized as a
set of points. 
Then $P$ can be decomposed into subsets $\{P_1, P_2, \ldots , P_m\}$  
in such a way that each $P_i$ is a transitive set of points
regarding $\gamma(P)$. 
 Furthermore, the oblivious FSYNC robots
 can agree on a total ordering among the
elements of the $\gamma(P)$-decomposition of $P$. 
\end{theorem}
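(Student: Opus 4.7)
My plan is to establish the partition by pure group theory and then to design a coordinate-free total order on the parts. For the decomposition itself, I would simply apply the group action: because $\gamma(P)$ acts on $P$ by definition, the orbits $Orb(p) = \{g*p : g \in \gamma(P)\}$ for $p \in P$ form a partition of $P$, and each orbit is transitive by construction. Since $\gamma(P)$ is defined invariantly, the family $\{P_1, \ldots, P_m\}$ is intrinsic to $P$, and every robot computes the same one from its local observation.

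For the total order, I would use a lexicographic comparison built from invariants of each part that survive the uniform-scaling, translation, and rotation ambiguities of local coordinate systems. The natural invariants are the scale-normalized radius $rad(Ball(P_i))/rad(B(P))$, the cardinality $|P_i|$, and (equivalently) the folding $\mu(P_i) = |\gamma(P)|/|P_i|$. These already separate most cases: each part lies on a single sphere centered at $b(P)$ since rotations around $b(P)$ preserve distance, so concentric orbits are ordered by radius, and orbits on the same sphere are then ordered by folding.

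For the remaining ties, I would exploit the geometric structure of the axes of $\gamma(P)$. When some axes are oriented, as happens for $C_k$, the secondary axes of $D_{\ell}$ with $\ell$ odd, and the $3$-fold axes of $T$, a signed height along such an axis breaks ties. For the non-oriented cases, a transitive subset with a given folding and radius is determined up to a single rotation about an axis of $\gamma(P)$; hence two tied orbits must differ by a nontrivial such rotation, and their relative angular offset, compared against a reference orbit already ordered (for example the innermost nondegenerate one), gives a canonical tie-breaker.

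The main obstacle will be the polyhedral-group cases $O$ and $I$ in which no axis is oriented, together with the $D_2$ subtlety already addressed by Property~\ref{property:d2-principal}. In these cases no intrinsic ``up'' direction is available, so one must produce a canonical choice by comparing angular offsets between pairs of tied orbits in a manner that is invariant under the symmetry of the axes of $\gamma(P)$. This requires a bootstrap: use the already-ordered orbits to select a canonical axis (or pair of axes), then measure offsets against it, and verify inductively that the resulting order is well-defined and identical in every local coordinate system.
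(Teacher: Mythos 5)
Your first step --- taking the orbits $Orb(p)$ of the action of $\gamma(P)$ as the parts --- is exactly how the paper defines the $\gamma(P)$-decomposition, and that half is fine. The gap is in the total ordering. Your tie-breaker rests on the claim that ``a transitive subset with a given folding and radius is determined up to a single rotation about an axis of $\gamma(P)$,'' and this is false. A free orbit (folding $1$) on a fixed sphere centered at $b(P)$ is parametrized by a seed point ranging over a \emph{two}-dimensional fundamental domain of the sphere; two free orbits of the same radius are in general not congruent by any rotation at all, so their ``relative angular offset'' is undefined and cannot serve as a canonical invariant. The degenerate case $\gamma(P)=C_1$ breaks the scheme even more directly: every orbit is a singleton of folding $1$, arbitrarily many of them may lie on one sphere, and there is no nontrivial axis against which to measure anything, yet the theorem still demands a total order. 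Finally, the ``bootstrap'' you propose for $O$ and $I$ is named as the main obstacle but never actually carried out, so the proof is incomplete precisely where it is hardest.

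The paper (following \cite{YUKY15}) resolves all of these cases at once with a single \emph{complete} invariant rather than an accumulation of partial ones: each robot $r_i$ computes a ``local view,'' an $n$-tuple of spherical coordinates of all robots taken with the ray from $b(P)$ through $p_i$ as polar axis and a meridian fixed canonically by a robot nearest to $I(P)$ (minimizing the tuple over candidate meridians, and using the common chirality to orient longitude). Two robots have identical local views if and only if some rotation in $\gamma(P)$ carries one to the other, i.e.\ if and only if they lie in the same orbit; hence the lexicographic order on local views is constant on each $P_i$, distinguishes $P_i$ from $P_j$ for $i\neq j$, and induces the desired total order --- uniformly for every rotation group including $C_1$, with no case analysis on oriented versus non-oriented axes. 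If you want to salvage your approach, you would need to replace the angular-offset tie-breaker with something equivalent to this canonical-form comparison; the partial invariants radius, cardinality, and folding alone cannot separate orbits.
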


To let the robots agree on the total ordering among the elements of the
$\gamma(P)$-decomposition of $P$, 
the authors introduced the ``local view'' of each robot, 
which is determined by $P$ 
independently of the local coordinate systems so that 
each robot $r_i$ can compute the local view of $r_j \in R$ 
although $r_i$ observes $P$ in its local coordinate system $Z_i$.
We will briefly describe how the robots compute local views.
The local view of robot $r_i \in R$ is constructed by considering
the innermost empty ball $I(P)$ as the earth and line
$\overline{p_i b(P)}$ as the earth's axis,
where $p_i$ is the position of $r_i$.
Then the positions of each robot is represented by its amplitude,
longitude, and latitude.
To determine the meridian, $r_i$ selects a
robot nearest to $I(P)$ whose projection on $I(P)$
determines the meridian. 
If there are multiple candidates for a meridian robot,
$r_i$ selects one of them that minimizes $r_i$'s local view,
which is defined as follows: 
The local view of $r_i$ is an $n$-tuple of positions 
of robots where the first element is the position of $r_i$,
the second element is the position of the meridian robot, 
and the positions of the remaining robots are  
sorted in the increasing order. 
Because all local coordinate systems are right-handed, 
each robot can compute the local view of all robots. 
Then the robots agree on the lexicographic order of local view
that guarantees the following properties: 
For a set of points $P$ and its $\gamma(P)$-decomposition
$\{P_1, P_2, \ldots, P_m\}$, 
\begin{enumerate}
\item
All robots in $P_i$ have the same local view 
for each $i = 1, 2, \ldots , m$.
\item
Any two robots, one in $P_i$ and the other in $P_j$, 
have different local views for all $i \not= j$.
\end{enumerate}
The robots agree on a total ordering among the elements of the 
$\gamma(P)$-decomposition of $P$ by the ordering among local views 
so that the ordering satisfies the following properties. 
\begin{property}
\label{property:ordering}
Let $P \in {\cal P}_n^3$ and $\{P_1, P_2, \ldots, P_m\}$
be a configuration of robots recognized as a set of points and 
its $\gamma(P)$-decomposition of $P$. 
Then, the local view defined in~\cite{YUKY15} 
guarantees the following properties: 
\begin{enumerate}
\item $P_1$ is on $I(P)$ and $P_m$ is on $B(P)$. 
\item For each $P_i$ and $P_{i+1}$, all points in $P_{i+1}$ are  
on or in the exterior of $Ball(P_i)$.
\end{enumerate} 
\end{property}

Finally, we consider a decomposition of a set of points $P$
by a rotation group $G \preceq \gamma(P)$.
Given an embedding of $G$ into $\gamma(P)$,
we consider the orbit of $G$ through each element $p \in P$ and
the orbit space $\{Orb(p) :p \in P\} = \{P_1, P_2, \ldots, P_{\ell}\}$. 
Clearly, $\{P_1, P_2, \ldots, P_{\ell}\}$ is a partition of $P$ and
each subset in the family is a transitive set of points
regarding $G$.
We call this decomposition the {\em $G$-decomposition} of $P$. 
We will use such decomposition when we discuss impossibility
in Section~\ref{sec:def-sym}.

\subsection{Rotation group of local coordinate systems} 
\label{sec:def-sigma}

We introduce the rotation group of local coordinate systems of
robots. 
Of course each robot recognizes neither the local coordinate systems
of other robots nor the rotation group of them by 
just observing the positions of the robots.
We use this notion when we discuss impossibility. 

We denote an arrangement of local coordinate systems by
a set of four-tuples 
$Q = \{(p_i, x_i, y_i, z_i) : r_i \in R\}$ where
$p_i$ represents the position of $r_i \in R$ in $Z_0$ and
$x_i, y_i, z_i$ are the positions 
$(1,0,0)$, $(0,1,0)$, and $(0,0,1)$ of $Z_i$ observed in $Z_0$. 
An arrangement of local coordinate systems encodes the positions of
the robots since the current position of the robot is the
origin of its local coordinate system. 
We also use the set of points $P = \{p_1, p_2, \ldots, p_n\}$ to
denote the positions of robots of $Q$.  

We consider rotations on $Q$ that 
produces the same arrangement of local coordinate systems. 
\begin{definition}
Let $Q$ be a set of local coordinate systems of robots. 
The rotation group $\sigma(Q)$ of $Q$
is the rotation group in ${\mathbb S}$
that acts on $Q$ and none of its proper supergroup in ${\mathbb S}$
acts on $Q$. 
 \end{definition} 

Clearly, $\sigma(Q)$ is uniquely determined for any set of local
coordinate systems $Q$ and 
it also determines the arrangement of rotation axes of $\sigma(Q)$ in
$Q$ that decomposes $Q$ into disjoint subsets
by the group action of $\sigma(Q)$.
We call this decomposition {\em $\sigma(Q)$-decomposition} of $Q$. 
We focus on the decomposition of $P$ by $\sigma(Q)$ rather than
the decomposition of $Q$ by $\sigma(Q)$. 
When it is clear from the context, we denote $\sigma(Q)$ by
$\sigma(P)$ and the $\sigma(Q)$-decomposition of $Q$ by
$\sigma(P)$-decomposition of $P$. 

Clearly we have the following property.
\begin{property}
\label{prop:sigma-observation}
Let $P \in {\cal P}_n^3$ and $\{P_1, P_2, \ldots, P_{\ell}\}$
be a configuration of robots recognized as set of points
and its $\sigma(P)$-decomposition. 
For each $P_i$ ($1 \leq i \leq \ell$), 
the robot forming $P_i$ have the same local observation. 
\end{property}

We show several relation between $\gamma(P)$ and $\sigma(P)$. 
The following property is clear from the definition. 
 \begin{property}
  \label{prop:sub-sigma}
 Let $Q = \{(p_i, x_i, y_i, z_i) : r_i \in R\}$ and
 $P = \{p_i : r_i \in R\}$
 be an arbitrary set of $n$ local coordinate systems and the set of
 $n$ positions of robots where $p_i$ is the position of $r_i \in R$. 
 Then we have $\sigma(P) (= \sigma(Q)) \preceq \gamma(P)$,
 thus there is an embedding of $\sigma(P)$ to $\gamma(P)$. 
 \end{property}

If a $k$-fold axis of $\sigma(P)$ contains a point of $P$,
there should be $(k-1)$ other robots on that point so that
we can apply rotations around the axis.
When $P$ is a set of points, it does not contain such multiplicity. 
\begin{property}
\label{prop:size-sigma}
Let $P \in {\cal P}_n^3$ and $\{P_1, P_2, \ldots, P_{\ell}\}$
be a configuration of robots recognized as a set of points 
and its $\sigma(P)$-decomposition, respectively. 
Then, we have $|P_i| = |\sigma(P)|$ for each $1 \leq i \leq \ell$. 
\end{property}

\subsection{Rotation group of a multiset of points}
\label{subsection:rho-multiset}

Though the initial configuration and the target pattern contain
no multiplicity,
when we prove impossibility, 
we should consider executions of any arbitrary pattern formation algorithm
that may generate multiplicity. 
In this section, we extend the rotation group and the symmetricity
to a multiset of points. 

Let $P \in {\cal P}_n^3$ be a multiset of $n$ points. 
Intuitively, the rotation group $\gamma(P)$ of $P$ and
the rotation group $\sigma(P)$ of local coordinate systems of $P$ 
are straightforward generalization of those for a set of points.
We consider rotations that keep the positions and multiplicities
of $P$.\footnote{We do not define the 
decomposition of a multiset of points $P$ by $\gamma(P)$ or
$\sigma(P)$ in this section, but we need careful treatment
as shown in Section~\ref{sec:concl}. 
Additionally, the robots cannot agree on the ordering of the
elements of the $\gamma(P)$-decomposition of $P$
since the second condition of Property~\ref{property:ordering}
does not always hold because of multiplicities. }
\begin{definition}
Let $P \in {\cal P}_n^3$ be a multiset of points. 
The rotation group $\gamma(P)$ of $P$  
is the rotation group in ${\mathbb S}$ that acts on $P$ and 
none of its proper supergroup in ${\mathbb S}$ acts on $P$. 
\end{definition}

\begin{definition}
Let $Q$ be a multiset of local coordinate systems. 
The rotation group $\sigma(Q)$ of 
$Q$ is the rotation group ${\mathbb S}$ that acts on $Q$
and none of its proper supergroup in ${\mathbb S}$
acts on $Q$. 
\end{definition} 

Clearly, we have Lemma~\ref{prop:sigma-observation} and
Lemma~\ref{prop:sub-sigma} for the rotation group of
multiset of points. 

We will show the first impossibility result.
In the following, for a configuration of robots recognized as
a (multi-)set of points $P$, we use the robot $r_i \in R$
and its position $p_i \in P$ interchangeably. 
\begin{lemma}
\label{lemma:sigma-impossibility}
Let $P \in {\cal P}_n^3$ be an arbitrary initial configuration of 
oblivious FSYNC robots. 
For an arbitrary deterministic algorithm $\psi$ and its execution 
$P(0)(=P), P(1), P(2), \ldots$, 
we have $\sigma(P(t)) \succeq \sigma(P)$ for any $t \geq 0$, 
thus $\gamma(P(t)) \succeq \sigma(P)$. 
\end{lemma}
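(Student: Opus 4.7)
\medskip

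\noindent\textbf{Proof plan for Lemma~\ref{lemma:sigma-impossibility}.}
The plan is to prove the stronger monotonicity statement $\sigma(P(t+1)) \succeq \sigma(P(t))$ for every $t \geq 0$ by exploiting Property~\ref{prop:sigma-observation}: robots belonging to the same element of the $\sigma(P(t))$-decomposition observe identical local views and hence, as they run the common deterministic algorithm $\psi$, make congruent moves in their respective local coordinate systems. Iterating this inequality and invoking the (multiset) version of Property~\ref{prop:sub-sigma} then gives both conclusions $\sigma(P(t)) \succeq \sigma(P)$ and $\gamma(P(t)) \succeq \sigma(P)$.

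\medskip

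\noindent\textbf{Induction step in detail.}
Fix $t \geq 0$, put $G = \sigma(P(t))$, and pick any $g \in G$. I will show that $g$ also acts on the multiset of local coordinate systems at time $t+1$, which yields $g \in \sigma(P(t+1))$. Because $g \in G$ is a symmetry of the arrangement $Q(t) = \{(p_i(t), x_i, y_i, z_i) : r_i \in R\}$, for every robot $r_i$ there is a robot $r_j$ such that $g$ maps $(p_i(t), x_i, y_i, z_i)$ onto $(p_j(t), x_j, y_j, z_j)$. In particular the rigid motion $g$ carries $Z_i$ onto $Z_j$, so the coordinates of any point $q$ in $Z_j$ coincide with the coordinates of $g^{-1}(q)$ in $Z_i$; since $g$ also permutes $P(t)$ (as a multiset) this gives $Z_i(P(t)) = Z_j(P(t))$. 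The input of $\psi$ at $r_i$ and at $r_j$ is therefore identical, so $\psi$ returns the same local vector $v \in \mathbb{R}^3$ at both robots, and the next global positions satisfy $p_j(t+1) = g(p_i(t+1))$.

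\medskip

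\noindent\textbf{Propagating the symmetry through the Move phase.}
The local coordinate system of a robot is translated with it but never rotated or rescaled, so at time $t+1$ the four-tuple of $r_j$ is obtained from that of $r_i$ by exactly the same rigid rotation $g$; hence $g$ maps $Q(t+1)$ onto itself (as a multiset of tuples). This gives $g \in \sigma(P(t+1))$ for every $g \in G$, so $G \preceq \sigma(P(t+1))$. Induction on $t$ then yields $\sigma(P(t)) \succeq \sigma(P)$, and the multiset analogue of Property~\ref{prop:sub-sigma} gives $\gamma(P(t)) \succeq \sigma(P(t)) \succeq \sigma(P)$, as required.

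\medskip

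\noindent\textbf{Where the care is needed.}
The one delicate point is step two, verifying $Z_i(P(t)) = Z_j(P(t))$ as \emph{multisets}: one must use that $g$, being a symmetry of the whole local-coordinate arrangement, simultaneously permutes positions \emph{and} realigns the local frames, so that every other robot $r_k$ has an image $r_{k'} = g(r_k)$ whose position in $Z_j$ equals $r_k$'s position in $Z_i$ (multiplicities included). This is precisely the content of Property~\ref{prop:sigma-observation} in the multiset setting, and it is why we track $\sigma$ rather than $\gamma$ through the execution: $\gamma(P(t))$ by itself could strictly exceed $\sigma(P(t))$ and need not be preserved, but $\sigma(P(t))$ is exactly the symmetry group the algorithm cannot see and therefore cannot break.
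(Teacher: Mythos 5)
Your proposal is correct and follows essentially the same argument as the paper: robots whose local frames are related by an element of $\sigma(P(t))$ receive identical inputs to $\psi$, output the same local destination, and hence their new positions and (translated) frames remain related by that same rotation, so the symmetry of the arrangement of local coordinate systems can only persist or grow. The only cosmetic difference is that you phrase it as per-step monotonicity $\sigma(P(t+1)) \succeq \sigma(P(t))$ acting directly on the multiset of four-tuples, whereas the paper tracks the fixed group $\sigma(P(0))$ through the execution via the $\sigma(P)$-decomposition and Property~\ref{prop:size-sigma}; both yield the stated conclusion.
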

\begin{proof}
Let $P$ and $\{P_1, P_2, \ldots, P_{\ell}\}$ be an initial configuration 
and its $\sigma(P)$-decomposition of $P$.
We consider an execution $P(0)(=P), P(1), P(2), \ldots$ 
of an arbitrary algorithm $\psi$. 
We focus on an arbitrary element $P_i$ and $p_j \in P_i$.
Because $P$ is a set of $n$ points,  
from Property~\ref{prop:size-sigma}, 
for any $p_k \in P_i$, there exists an element 
$g_k \in \sigma(P)$ that satisfies $g_k * p_j = p_k$ and 
$g_k \neq g_{k'}$ if $p_k \neq p_{k'}$. 
We will show that the movement of each $p_k \in P_i$ is 
symmetric regarding $\sigma(P)$ and the robots of $P_i$ 
keep the rotation axes of $\sigma(P)$ in $P(1)$. 

Consider the Compute phase at time $0$ and let 
$\psi(Z_j(P(0))) = d_j$ at $p_j$. 
From Property~\ref{prop:sigma-observation}, 
each robot $p_k \in P_i$ have the same local observation and 
$\psi(Z_k(P(0))) = \psi(Z_j(P(0))) = d_j$ at $p_k$. 
Because $p_k = g_k * p_j$,
$Z_0(\psi(Z_k(p(0)))) = g_k * Z_0(\psi(Z_j(P(0))))$ 
and after the movement the positions of robots that formed $P_i$ 
are symmetric regarding the same arrangement of $\sigma(P)$. 
Additionally, the local coordinate system of these robots 
are still symmetric regarding the same arrangement of $\sigma(P)$. 
Let $P_i(1) \subseteq P(1) $ be the positions of robots 
that formed $P_i$ in $P(0)$. 
Hence, we have $\sigma(P_i(1)) = \sigma(P(0))$. 
Because this property holds for all $P_i$ ($1 \leq i \leq \ell$), 
we have $\sigma(P(1)) \succeq \sigma(P(0))$. 
Note that $P(1)$ can be a multiset of points. 

By repeating the above discussion, 
we can show that $\sigma(P(t)) \succeq \sigma(P(0))$ for any 
$t \geq 1$, thus 
$\gamma(P(t)) \succeq \sigma(P(0))$ for any $t \geq 1$. 
\qed 
\end{proof}

We now consider the non-oblivious version of
Lemma~\ref{lemma:sigma-impossibility}. 

\begin{lemma}
\label{lemma:sigma-impossibility-memory}
Let $P \in {\cal P}_n^3$ be an arbitrary initial configuration of 
non-oblivious FSYNC robots. 
For an arbitrary deterministic algorithm $\psi$ and its execution 
$P(0)(=P), P(1), P(2), \ldots$, 
we have $\sigma(P(t)) \succeq \sigma(P)$ for any $t \geq 0$, 
thus $\gamma(P(t)) \succeq \sigma(P)$. 
\end{lemma}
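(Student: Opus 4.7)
The plan is to mirror the proof of Lemma \ref{lemma:sigma-impossibility} while additionally carrying the internal memory state of each robot through the induction. Since this lemma is used only for impossibility, it suffices to exhibit a single bad execution; the definition of the pattern formation problem requires the algorithm to succeed \emph{regardless of the choice of local coordinate systems and memory contents}, so I may adversarially pick the initial memory. The choice will be that any two robots lying in a common orbit of the $\sigma(P)$-decomposition of $P$ start with identical memory contents. This is a legitimate adversarial initialization, since by definition of $\sigma(P)$ these robots have identical positions and identical local coordinate systems up to the action of $\sigma(P)$, so an adversary can equip every orbit with the same pre-loaded state.

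Let $\{P_1, P_2, \ldots, P_\ell\}$ be the $\sigma(P)$-decomposition of $P$. For any orbit $P_i$ and any $r_k \in P_i$, there is a unique $g_k \in \sigma(P)$ mapping a fixed reference robot $r_j \in P_i$ to $r_k$, and by definition of $\sigma$ acting on the four-tuples $(p_i, x_i, y_i, z_i)$, this $g_k$ also sends $Z_j$ to $Z_k$. I would prove by induction on $t$ the invariant that, at every time $t$ and for every orbit, (i) the positions of the robots are related by the $g_k$'s, (ii) their local coordinate systems are related by the $g_k$'s, and (iii) their full memory contents are identical. The base case is immediate from the adversarial setup. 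For the inductive step, (i) and (ii) give identical local observations across the orbit by Property \ref{prop:sigma-observation}, and combined with (iii) they give identical full inputs to $\psi$; hence $\psi$ returns the same displacement vector in the local frame and the same updated memory for every robot in the orbit. Transporting back through $g_k$ to the global frame, the new positions are still related by $g_k$, the local frames move rigidly with the robots and so remain related by $g_k$, and the new memories are identical. This re-establishes (i)--(iii) at time $t+1$, yielding $\sigma(P(t+1)) \succeq \sigma(P)$, and Property \ref{prop:sub-sigma} then gives $\gamma(P(t+1)) \succeq \sigma(P)$.

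The main (and essentially only) obstacle beyond the oblivious proof is justifying item (iii) of the invariant: without a symmetric initialization, two robots in the same orbit could be pre-loaded with arbitrarily different internal states, and their symmetry would already be broken at time $0$, independently of $\psi$. I would handle this by appealing, as above, to the fact that the pattern formation problem quantifies over \emph{all} memory initializations, so the adversarial symmetric one is fair game for an impossibility argument. Once (iii) is guaranteed at $t=0$, the Suzuki--Yamashita-style symmetry argument of Lemma \ref{lemma:sigma-impossibility} carries over step by step, with memory updates treated on the same footing as position updates.
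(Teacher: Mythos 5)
Your proposal is correct and follows essentially the same route as the paper: the paper's proof likewise fixes an adversarially symmetric memory initialization (it takes all robots' memory contents to be identical) and then runs the induction of Lemma~\ref{lemma:sigma-impossibility} with the extra invariant that memory contents stay identical within each orbit of the $\sigma(P)$-decomposition. Your version merely makes the quantification over initial memory contents and the per-orbit invariant more explicit, which is a fair reading of what the paper's terser argument intends.
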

\begin{proof}
 Let $P$ and $\{P_1, P_2, \ldots, P_{\ell}\}$
 be an initial configuration where the content of
 local memory at all robots are identical and 
 the $\sigma(P)$-decomposition of $P$, respectively. 
Let $P(0)(=P), P(1), P(2), \ldots$ be an execution of 
an arbitrary algorithm $\psi$. 
In the same way as the proof of Lemma~\ref{lemma:sigma-impossibility}, 
the robots forming each $P_i$ ($i \in \{1, 2, \cdots, \ell\}$)
obtain the same local observation and same output of a common
algorithm in $P(0)$, and their next positions keep the same arrangement
of $\sigma(P)$. 
Hence in $P(1)$, their memory contents remain identical and 
these robots obtain the same local observation, 
and same output of a common algorithm. 
By repeating this discussion, robots can never break 
$\sigma(P)$ in any $P(t)$ ($t \geq 0$). 
\qed 
\end{proof}

\subsection{Symmetricity of a set of points} 
\label{sec:def-sym} 

We start with the following observation. 
Consider a set of points $P$ that forms a cube, thus $\gamma(P) = O$. 
We can embed $D_4$ to $\gamma(P)$ by selecting one $4$-fold axis of
$\gamma(P)$ as the principal axis and the other two $4$-fold axes of
$\gamma(P)$ as the secondary axes.
The remaining two secondary axes overlap the $2$-fold axes of
$\gamma(P)$.
The $D_4$-decomposition of $P$ consists of two elements and 
we construct an arrangement of local coordinate systems of robots
so that $\sigma(P) = D_4$ by selecting one point $p$ for each element of
the $D_4$-decomposition of $P$, 
fixing its local coordinate system, 
and applying the rotations of $D_4$ to $p$'s local
coordinate system. 
From Lemma~\ref{lemma:sigma-impossibility}, 
the robots cannot reduce their rotation group to any subgroup of $D_4$
from such initial arrangement of local coordinate systems.
Consequently, robots cannot eliminate an arbitrary rotation group
$G \preceq \gamma(P)$ 
that produces $|G|$ symmetric local coordinate systems regarding $G$. 
Based on this observation, we define the {\em symmetricity} of a 
configuration as follows. 

\begin{definition}
\label{def:symmetricity}
Let $P \in {\cal P}_n^3$ be a set of points. 
The symmetricity $\varrho(P)$ of $P$
is the set of rotation groups $G \in {\mathbb S}$ 
that acts on $P$ and there exists an embedding of $G$ to
$\gamma(P)$ such that each element of $G$-decomposition of $P$ 
is a $|G|$-set. 
\end{definition}

We define $\varrho(P)$ as a set because the ``maximal'' rotation
group that satisfies the definition is not uniquely determined. 
Maximality means that there is no proper supergroup in
${\mathbb S}$ that satisfies the condition of
Definition~\ref{def:symmetricity}. 
When it is clear from the context, we denote $\varrho(P)$ by
the set of such maximal elements. 
For example, if $P$ forms a regular icosahedron, 
$\varrho(P) = \{C_1, C_2, C_3, D_2, D_3, T\}$ and we 
denote it by $\varrho(P) = \{D_3, T\}$.
(See Figure~\ref{fig:subgroups}.)
From the definition, $\varrho(P)$ always contains $C_1$ and 
if $G \in \varrho(P)$, $\varrho(P)$ contains every element of
${\mathbb S}$ that is a subgroup of $G$.

Because $G \in \varrho(P)$ acts on $P$, $G$ is a subgroup of $\gamma(P)$ 
and any initial configuration $P$ is a set of $n$ points, 
we can rephrase the above definition as follows: 
For an initial configuration $P$,
$\varrho(P)$ is the set of rotation groups $G \in {\mathbb S}$
that has an embedding to unoccupied rotation axes of $\gamma(P)$ 
and if all rotation axes of $\gamma(P)$ is occupied, 
$\varrho(P) = \{C_1\}$.

From Lemma~\ref{lemma:sigma-impossibility}, 
we will show that for any $G \in \varrho(P)$, 
there exists an arrangement of local coordinate systems of $P$ 
that satisfies $\sigma(P) = G$ and
does not allow the robots to reduce their rotation group to 
a subgroup of $G$. 
\begin{lemma}
\label{lemma:rho-conf}
Let $P$ be an arbitrary initial configuration of oblivious FSYNC robots. 
For each $G \in \varrho(P)$, 
there exists an arrangement of local coordinate systems of $P$ 
such that 
for an arbitrary algorithm and its execution 
$P(0)(=P), P(1), P(2), \ldots$, 
$\gamma(P(t)) \succeq G$ for all $t \geq 0$. 
\end{lemma}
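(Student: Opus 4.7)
The plan is to build, for a given $G \in \varrho(P)$, an arrangement of local coordinate systems $Q$ on $P$ whose symmetry group $\sigma(Q)$ contains $G$, and then invoke Lemma~\ref{lemma:sigma-impossibility} to conclude that no algorithm can ever drive $\gamma(P(t))$ below $G$.

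First I would fix an embedding of $G$ into $\gamma(P)$ witnessing membership in $\varrho(P)$, i.e., one such that every element of the $G$-decomposition of $P$ is a $|G|$-set. Let $\{P_1,\ldots,P_\ell\}$ be this decomposition. The key structural consequence is that $G$ acts \emph{freely} on each $P_i$: for every $P_i$, picking any seed $s_i \in P_i$ gives a bijection $G \to P_i$, $g \mapsto g * s_i$.

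Next I would construct $Q$ block by block. For each $P_i$, choose a seed $s_i$ and assign it an arbitrary right-handed local coordinate system $Z_{s_i}$. For every other $p \in P_i$, let $g_p \in G$ be the unique element with $g_p * s_i = p$, and define the local coordinate system at $p$ to be $g_p(Z_{s_i})$ (i.e., apply the rigid rotation $g_p$ to $Z_{s_i}$ as an oriented frame). Freeness makes this well-defined, and right-handedness is preserved because $G \subseteq SO(3)$. By construction, for every $h \in G$ the rotation $h$ maps the four-tuple $(p,x_p,y_p,z_p)$ at $p$ to the four-tuple at $h*p$ within the same $P_i$, so $h$ is a symmetry of $Q$ as a set. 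Hence $G \preceq \sigma(Q)$.

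Finally, since $P$ has no initial multiplicity and $Q$ encodes $P$, we may apply Lemma~\ref{lemma:sigma-impossibility} to any execution $P(0)(=P),P(1),\ldots$ starting from this initial arrangement: $\sigma(P(t)) \succeq \sigma(P(0)) \succeq G$ for all $t \geq 0$, and combining with Property~\ref{prop:sub-sigma} for (multisets of) points yields $\gamma(P(t)) \succeq \sigma(P(t)) \succeq G$, as required.

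The main obstacle is verifying that the block-by-block assignment genuinely produces a $G$-invariant arrangement, and that this is only possible because of the ``$|G|$-set'' condition built into $\varrho(P)$. If some orbit had size strictly less than $|G|$, then some nontrivial $g \in G$ would stabilize a seed, forcing the local frame at that seed to be fixed by $g$ — impossible for a generic frame under a nontrivial rotation — and the construction would fail. So the delicate point is exactly why Definition~\ref{def:symmetricity} demands that every element of the $G$-decomposition be a $|G|$-set: this is what guarantees the free action and makes the frames at different points of the same orbit mutually consistent.
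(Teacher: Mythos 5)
Your proposal is correct and follows essentially the same route as the paper: fix an embedding of $G$ witnessing $G \in \varrho(P)$, seed each orbit of the $G$-decomposition with one frame and propagate it by the rotations of $G$ to get $\sigma(P) \succeq G$, then invoke Lemma~\ref{lemma:sigma-impossibility}. Your added remarks on freeness of the action and why the $|G|$-set condition makes the construction well-defined are a welcome elaboration of a point the paper leaves implicit, but the argument is the same.
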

\begin{proof}
From the definition, there exists an embedding of $G$ to 
the unoccupied rotation axes of $\gamma(P)$.
We select one of such embeddings and 
construct an initial arrangement of local coordinate systems for
$P(=P(0))$ that satisfies $\sigma(P) = G$:  
Let $\{P_1, P_2, \ldots, P_{\ell}\}$ be the $G$-decomposition of $P$
for the embedding of $G$. 
For each element $P_i$ ($i = 1, 2, \cdots, \ell$), 
we choose a point $p \in P_i$ and its local coordinate system, 
then apply the rotations of $G$. 
We obtain an arrangement of local coordinate systems
with $\sigma(P) = G$.

From Lemma~\ref{lemma:sigma-impossibility}, 
for any execution $P(0)(=P), P(1), P(2), \ldots$  of an arbitrary
algorithm, 
$\gamma(P(t)) \succeq G$ for all $t \geq 0$. 
\qed
\end{proof}

In the same way as Lemma~\ref{lemma:sigma-impossibility-memory},
local memory at robots does not help the robots
break the symmetricity.
Hence we have the following non-oblivious version of
Lemma~\ref{lemma:rho-conf}. 

\begin{lemma}
\label{lemma:rho-conf-memory}
Let $P$ be an arbitrary initial configuration of 
non-oblivious FSYNC robots. 
For each $G \in \varrho(P)$, 
there exists an arrangement of local coordinate systems of $P$ 
such that 
for an arbitrary algorithm and its execution 
$P(0)(=P), P(1), P(2), \ldots$, 
$\gamma(P(t)) \succeq G$ for all $t \geq 0$. 
\end{lemma}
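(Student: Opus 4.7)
The plan is to mirror the proof of Lemma~\ref{lemma:rho-conf} almost verbatim, with the only essential modification being (i) to additionally fix the initial memory contents of the robots in a symmetric way and (ii) to invoke Lemma~\ref{lemma:sigma-impossibility-memory} in place of Lemma~\ref{lemma:sigma-impossibility}. Since Lemma~\ref{lemma:sigma-impossibility-memory} already handles arbitrary (non-oblivious) deterministic algorithms, no new combinatorial or geometric argument is required; the whole statement reduces to exhibiting one bad initial configuration of local coordinate systems together with a suitable memory assignment.

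First I would fix an arbitrary $G \in \varrho(P)$. By Definition~\ref{def:symmetricity}, there is an embedding of $G$ into $\gamma(P)$ such that every element of the $G$-decomposition $\{P_1, P_2, \ldots, P_{\ell}\}$ of $P$ is a $|G|$-set. Fixing one such embedding, I would build an arrangement of local coordinate systems for $P(0)=P$ exactly as in the proof of Lemma~\ref{lemma:rho-conf}: for each $P_i$, pick an arbitrary representative $p \in P_i$ together with an arbitrary right-handed local coordinate system at $p$, and then propagate this choice by the group action of $G$. Because $P_i$ is a $|G|$-set, for every other point $q \in P_i$ there is a \emph{unique} $g \in G$ with $g * p = q$, so the induced coordinate system at $q$ is well-defined. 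The resulting arrangement $Q$ satisfies $\sigma(Q) = G$ by construction.

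Second, to handle the memory, I would additionally require that all robots start with the same memory contents (for instance, the empty memory). This is legitimate because we are exhibiting one specific initial configuration of the non-oblivious system; we are not required to consider every possible initial memory. With this choice, the robots in each $P_i$ are indistinguishable: they have identical local observations by Property~\ref{prop:sigma-observation} and identical memory.

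Third, I would apply Lemma~\ref{lemma:sigma-impossibility-memory} directly to the configuration, arrangement, and memory assignment just constructed. That lemma gives $\sigma(P(t)) \succeq \sigma(P(0)) = G$ for every $t \geq 0$, and by Property~\ref{prop:sub-sigma} we conclude $\gamma(P(t)) \succeq \sigma(P(t)) \succeq G$ for all $t \geq 0$, which is exactly the required inequality. I do not foresee a genuine obstacle here: the only subtle point is making sure that the ``symmetric'' memory assignment is a valid choice of initial configuration for the non-oblivious model, and since the statement only asserts the \emph{existence} of a bad arrangement, this is immediate.
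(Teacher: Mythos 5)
Your proposal is correct and follows exactly the route the paper intends: the paper gives no separate proof of this lemma, merely remarking that it follows ``in the same way'' by combining the arrangement constructed in Lemma~\ref{lemma:rho-conf} with Lemma~\ref{lemma:sigma-impossibility-memory}, which is precisely what you do. Your explicit observation that identical initial memory contents may be fixed because only the existence of one bad initial choice is needed is a welcome clarification of a point the paper leaves implicit.
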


\section{Algorithm to show $\varrho(P)$} 
\label{sec:show-sym}

As shown in Section~\ref{subsection:rho-multiset},
the symmetricity of an initial configuration $P$ consists of the
rotation groups that the robots can never break. 
However there exists an initial configuration $P$ such that
$\varrho(P)$ does not contain $\gamma(P)$. 
For example, when $P$ forms a cube,
$\gamma(P) = O$ while
$\varrho(P) = \{D_4\}$. 
In this case, there seems to be a possibility that the robots can
reduce the symmetry of their positions. 
In this section, we will show that the robots can reduce symmetry of
their positions and
agree on some rotation group $G \in \varrho(P)$
with a very simple algorithm. 

In 2D-space, the robots can reduce symmetry of their
positions only when there is a robot on the center of the 
smallest enclosing circle of their positions, 
i.e., on the single rotation axis of their cyclic group. 
We will show that in 3D-space, in the same way,
robots can reduce the rotation group of their positions 
by leaving rotation axes. 
For example, consider a configuration $P$ where robots form a 
regular pyramid. Hence $\gamma(P)=C_k$ if the base is a regular
$k$-gon. 
If the robot at the apex leaves the single rotation axis, 
the rotation group of the new configuration is $C_1$ that
matches the symmetricity of the initial configuration. 
We will show such simple movement reduces the rotation group of 
the configuration.

\begin{lemma}
\label{lemma:show-sym}
There exists an algorithm $\psi_{SYM}$ for oblivious FSYNC robots
that 
translates an arbitrary initial configuration $P$ 
to a configuration $P'$ that satisfies $\gamma(P') \in \varrho(P)$. 
\end{lemma}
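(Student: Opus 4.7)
I would split the argument into two cases according to whether $\gamma(P)$ already lies in $\varrho(P)$. If $\gamma(P) \in \varrho(P)$, set $\psi_{SYM}$ to leave robots in place so that $P' = P$ works trivially. The substantive case is $\gamma(P) \notin \varrho(P)$, which by Definition~\ref{def:symmetricity} means some rotation axis of $\gamma(P)$ is occupied by a robot; equivalently, the $\gamma(P)$-decomposition $\{P_1, \ldots, P_m\}$ contains at least one orbit $P_i$ of size strictly less than $|\gamma(P)|$ whose members sit on a $k$-fold axis with $k > 1$.

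For this case, using the canonical total order on orbits from Theorem~\ref{theorem:decomposition}, the robots identify the first such ``axis orbit'' $P_{i^*}$, and $\psi_{SYM}$ moves only those robots. Each such robot moves by a short displacement---strictly less than half the minimum pairwise distance in $P$ so as to avoid creating multiplicity---in a direction derived from its local view (for instance, toward a canonically chosen robot in another orbit). Because the robots in $P_{i^*}$ share a common local view, their global displacements are invariant under $\sigma(P)$, the rotation group of the arrangement of local coordinate systems.

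Correctness would combine two facts. First, Lemma~\ref{lemma:sigma-impossibility} gives $\gamma(P') \succeq \sigma(P') \succeq \sigma(P)$. Second, $\sigma(P)$ always lies in $\varrho(P)$: a nontrivial rotation of $\sigma(P)$ whose axis passed through a robot would have to fix that robot's oriented right-handed Cartesian coordinate system, which is impossible, so every axis of $\sigma(P)$ is an unoccupied axis of $\gamma(P)$, furnishing an embedding that witnesses $\sigma(P) \in \varrho(P)$. Together these facts give $\gamma(P') \succeq \sigma(P) \in \varrho(P)$; the displacement rule must additionally ensure $\gamma(P')$ does not strictly exceed $\sigma(P)$ and thereby fall outside $\varrho(P)$.

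The main obstacle I anticipate is precisely this last point: the displacement must genuinely destroy the occupied axes of $\gamma(P)$ that are not axes of $\sigma(P)$ while remaining compatible with $\sigma(P)$-symmetry for \emph{every} adversarial assignment of local coordinate systems. A displacement along the axis itself leaves the axis intact, so an intrinsic direction like ``toward $b(P)$ along the axis'' is useless; by contrast, a direction read off from the positions of other robots in $P$---essentially a generic direction---breaks the axis while leaving only $\sigma(P)$-symmetry intact. Verifying genericity requires a short case analysis on the subgroup lattice shown in Figure~\ref{fig:subgroups}. If one round of movement does not yet land the configuration in $\varrho(P)$, the same step is iterated on the new configuration, and because that lattice is finite and each iteration strictly decreases $\gamma(P(t))$, termination in finitely many rounds is guaranteed.
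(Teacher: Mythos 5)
Your skeleton matches the paper's: identify the occupied rotation axes of $\gamma(P)$, move the robots off them, and iterate until the rotation group of the positions lands in $\varrho(P)$, with correctness anchored on $\sigma(P') \succeq \sigma(P) \in \varrho(P)$. But the step you defer --- ``verifying genericity requires a short case analysis on the subgroup lattice'' --- is in fact the entire content of the proof, and your proposed displacement rule does not survive it. A robot sitting on a $k$-fold axis of $\gamma(P)$ has a local observation that is invariant under the $k$-fold rotation about that axis, so it \emph{cannot} canonically choose ``a robot in another orbit'' to move toward: it can only compute a set of $k$ (or more) symmetric candidate destinations, and the adversary's choice of local coordinate system decides which one it takes. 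Consequently the post-move configuration is not determined by the algorithm; one must prove that \emph{every} selection of one candidate per robot yields a point set whose rotation group lies in $\varrho(P)$, including ruling out that the selected points accidentally realize a rotation group that is not even a subgroup of $\gamma(P)$. The paper does this by fixing the candidate sets concretely (points $\epsilon$ before adjacent face centers, the ``go-to-center'' rule) and then exhaustively checking, for each of the seven transitive polyhedra $U_{T,3}, U_{O,4}, U_{O,3}, U_{O,2}, U_{I,5}, U_{I,3}, U_{I,2}$, that the resulting $\epsilon$-expanded or $\epsilon$-truncated destination sets contain no square, regular triangle, regular pentagon, tetrahedron, etc.\ positioned so as to create a forbidden axis. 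That is a geometric argument about specific polyhedra, not an argument on the subgroup lattice, and nothing in your proposal substitutes for it.

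Two further problems. First, when the configuration is transitive (e.g., a cube, where every robot is on a $3$-fold axis of $O$), there is no ``other orbit'' left fixed to anchor the smallest enclosing ball, so your move-one-orbit scheme does not apply to the base case; the paper treats transitive configurations separately and only then handles composite configurations by shrinking one orbit toward $b(P)$ while the remaining orbits pin down $B(P)$, which is what guarantees $\gamma(P(t+1)) \preceq \gamma(P(t))$ and lets the set of occupied axes only shrink. Second, your termination argument (``each iteration strictly decreases $\gamma(P(t))$'') fails for a regular $n$-gon: there $\gamma(P) = D_n \notin \varrho(P)$ and every robot lies on a secondary axis, yet if the adversary sets $\sigma(P) = C_n$ the $n$ robots always form a single $C_n$-orbit, i.e.\ again a regular $n$-gon with rotation group $D_n$, so no algorithm reaches $\gamma(P') \in \varrho(P)$ and your loop never makes progress. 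The paper explicitly excludes regular polygons from the symmetry-breaking phase and absorbs that exception later in the formation algorithm; any proof of this lemma has to confront that case rather than claim unconditional strict decrease.
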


The proposed algorithm is based on the ``go-to-center'' algorithm in 
\cite{YUKY15}, that considers symmetry breaking in an initial
configuration $P$ whose $\gamma(P)$-decomposition
contains a regular tetrahedron, regular octahedron,
a cube, a regular dodecahedron, or an icosidodecahedron. 
When the robots form these polyhedra, 
they are on some rotation axes of $\gamma(P)$ and the
``go-to-center'' algorithm sends the robots to some point not on
any rotation axis. Because the number of robots is less than
$|\gamma(P)|$, the positions of robots are not transitive regarding
$\gamma(P)$ and the rotation group of any resulting configuration
is no more $\gamma(P)$. 
We apply the ``go-to-center'' algorithm to all configurations 
where some rotation axes of $\gamma(P)$ are occupied.
Specifically, we add a cuboctahedron and a regular icosahedron 
to the target of the algorithm and analyze the rotation group
of resulting configurations.
We also add similar symmetry breaking procedures for configurations with
a 2D rotation group. 
We will show that 
as is expected from the results in 2D-space, 
when some rotation axes of $\gamma(P)$ are occupied, 
the robots on the rotation axes can remove the rotation axes
by leaving their current positions, 
thus the symmetricity of a resulting configuration is a proper subgroup of 
the rotation group of the previous configuration. 
By repeating this procedure, the robot system reaches a configuration $P'$
with $\gamma(P') \in \varrho(P)$.
We first analyze the configuration after one-step execution of the
``go-to-center'' algorithm when the robots form one of the above seven
(semi-)regular polyhedra in Section~\ref{subsec:base} and
then we consider any initial configuration where some of its
rotation axes are occupied in 
Section~\ref{subsec:composition}. 

Let $\{P_1, P_2, \ldots, P_m\}$ be the $\gamma(P)$-decomposition of
an initial configuration $P$. 
We focus on the elements that consist of points on rotation axes 
of $\gamma(P)$. In other words, 
we focus not on the coordinates of each point but 
on $\gamma(P)$ and the folding of each element of its
$\gamma(P)$-decomposition. 
We denote a polyhedron generated by a rotation group $G$ 
and a seed point $s$ with folding $\mu$ as $U_{G, \mu}$. 
Table~\ref{table:basic-symmetricity} shows the list of $U_{G, \mu}$ 
for $G \in {\mathbb S}$ and the symmetricity of 
$U_{G, \mu} \cup U_{G, 1}$ as an example. 
For example, a configuration $U_{O, 1} \cup U_{O, 3}$ consists
of a truncated cube and a cube with a common arrangement of $O$. 
We note that we need $U_{G, 1}$ for the 2D rotation groups and
when $G$ is a 3D rotation group,
$\varrho(U_{G, \mu})$ and
$\varrho(U_{G, 1} \cup U_{G, \mu})$ are identical. 

\begin{table}[t]
\begin{center}
\caption{Symmetricity of $U_{G, \mu} \cup U_{G, 1}$.} 
\label{table:basic-symmetricity}
\begin{tabular}{c|c|l|c|c}
\hline 
Rotation group & Folding & Polyhedra formed by $U_{G,\mu}$ &
 Notation & $\varrho(U_{G, 1} \cup U_{G, \mu})$\\ 
\hline 
Any $G \in {\mathbb S}$ & $|G|$ & Point at the center & $U_{G,|G|}$ & $\{C_1\}$ \\ 
\hline 
\multirow{2}{*}{$C_k$} & $k$ & Point on the single rotation axis & $U_{C_k,k}$ & $\{C_1\}$ \\ 
\cline{2-5}
 & $1$ & Regular $k$-gon & $U_{C_k,1}$ & $\{C_k\}$ \\ 
\hline 
\multirow{4}{*}{$D_2$} & $2$ & Line on a rotation axis & $U_{D_2,2}$ & $\{C_2\}$ \\ 
\cline{2-5}
 & \multirow{3}{*}{1} & Regular tetrahedron, & \multirow{3}{*}{$U_{D_2, 1}$} & 
 \multirow{3}{*}{$\{D_2\}$} \\ 
 & & infinitely many sphenoids, and & &  \\ 
 & & infinitely many rectangles & & \\ 
\hline 
\multirow{5}{*}{$D_{\ell}$} & $\ell$ & Line on the principal axis & 
$U_{D_{\ell},{\ell}}$ & $\{C_2\}$ \\ 
\cline{2-5}
 & \multirow{3}{*}{$2$} & Regular $\ell$-gon perpendicular to & \multirow{3}{*}{$U_{D_{\ell}, 2}$} & 
\multirow{3}{*}{$\{C_{\ell}, D_{\ell/2}\}$ if $\ell$ is even, $\{C_{\ell}\}$ otherwise}\\ 
 & & the principal axis and & & \\ 
 & & containing the center & & \\ 
 \cline{2-5}
 & $1$ & Infinitely many polyhedra & $U_{D_{\ell}, 1}$ & $\{D_{\ell}\}$ \\ 
\hline 
\multirow{3}{*}{$T$} &  $3$ & Regular tetrahedron & $U_{T, 3}$ & 
$\{D_2\}$ \\ 
\cline{2-5}
 & $2$ & Regular octahedron & $U_{T, 2}$ & $\{D_3\}$ \\ 
\cline{2-5}
 & $1$ & Infinitely many polyhedra & $U_{T, 1}$ & $\{T\}$\\ 
\hline 
\multirow{4}{*}{$O$} & $4$ & Regular octahedron & $U_{O, 4}$ & $\{D_3\}$ \\ 
\cline{2-5}
 & $3$ & Cube & $U_{O, 3}$ & $\{D_4\}$\\ 
\cline{2-5}
 & $2$ & Cuboctahedron & $U_{O, 2}$ & $\{T, C_4, C_3\}$\\ 
\cline{2-5}
 & $1$ & Infinitely many polyhedra  & $U_{O, 1}$ & $\{O\}$ \\ 
\hline 
\multirow{4}{*}{$I$} & $5$ & Regular icosahedron & $U_{I, 5}$ & $\{T, D_3\}$\\ 
\cline{2-5}
 & $3$ & Regular dodecahedron & $U_{I, 3}$ & $\{D_5, D_2\}$\\ 
\cline{2-5}
 & $2$ & Icosidodecahedron & $U_{I, 2}$ & $\{C_5, C_3\}$\\ 
\cline{2-5}
 & $1$ & Infinitely many polyhedra & $U_{I, 1}$ & $\{I\}$\\
\hline
\end{tabular}
\end{center}
\end{table} 

We first note that there is no way for oblivious FSYNC robots to 
reduce $\gamma(P)$ of an initial configuration $P$ 
when $P$ forms a regular $n$-gon. 
Here, $\gamma(P) = D_n$ and $\varrho(P) = \{C_n, D_{n/2}\}$ 
if $n$ is even, 
$\varrho(P) = \{C_n\}$ otherwise. 
Consider the case where $n$ is even. 
To show the symmetricity, 
the robots either show an orientation of the single rotation axis 
or divide themselves into two groups to form $U_{D_{n/2}, 1}$. 
However, when $\sigma(P) = C_n$, from Lemma~\ref{lemma:rho-conf}, 
the rotation group of robots is $C_n$ forever, thus they 
keep some regular $n$-gon forever. 
The robots neither show an agreement on the orientation of the 
principal axis nor divide themselves into two groups. 
When the robots are oblivious, they do not remember the previous 
trials without recognizing $\sigma(P)=C_n$ and 
they keep on trying to show their symmetricity forever. 
We have the same situation for odd $n$.
We avoid this infinite trials by leaving a regular $n$-gon as it is.
Hence the proposed algorithm do nothing
when $P$ forms a regular polygon. 
This is not a problem for Theorem~\ref{theorem:main} 
since for such $P$, the target pattern $F$ satisfies 
$C_n, D_{n/2} \in \varrho(F)$ and hence $\gamma(F) \succeq D_n$ 
and the robots do not need to break the symmetry.

\subsection{Transitive set of points}
\label{subsec:base}

We start with a symmetry breaking algorithm for a transitive initial 
configuration regarding a 3D rotation group, i.e, 
we consider $U_{G, \mu}$ for $G \in \{T, O, I\}$ and $\mu>1$, 
a regular tetrahedron, a regular octahedron, a cube, a cuboctahedron, 
a regular icosahedron, a regular dodecahedron, and an icosidodecahedron
(Figure~\ref{fig:seven-polyhedra}). 
The proposed algorithm is based on the ``go-to-center'' algorithm
of \cite{YUKY15} as shown in Algorithm~\ref{alg:go-to-center}. 
If a current configuration forms one of the above seven 
polyhedra, Algorithm~\ref{alg:go-to-center} makes
each robot select an adjacent face 
and approach the center of the selected face, 
but stops it $\epsilon$ before the center.
There are two restrictions, i.e., when the robots form a
cuboctahedron or an icosidodecahedron,
they select one face from adjacent regular triangle faces or
adjacent regular pentagon faces, respectively.
The aim of the algorithm is to put the robots
on some points that are not on any rotation axis.
The role of $\epsilon$ is to gather the robots
around some rotation axis and we fix $\epsilon$ to
$\ell/100$ for the simplicity of the proposed algorithm 
where $\ell$ is the length of an edge of the (semi-)regular polyhedron
that the robots initially form. 

\begin{figure}[t]
\centering 
\subfigure[$U_{T, 3}$]{\includegraphics[width=2cm]{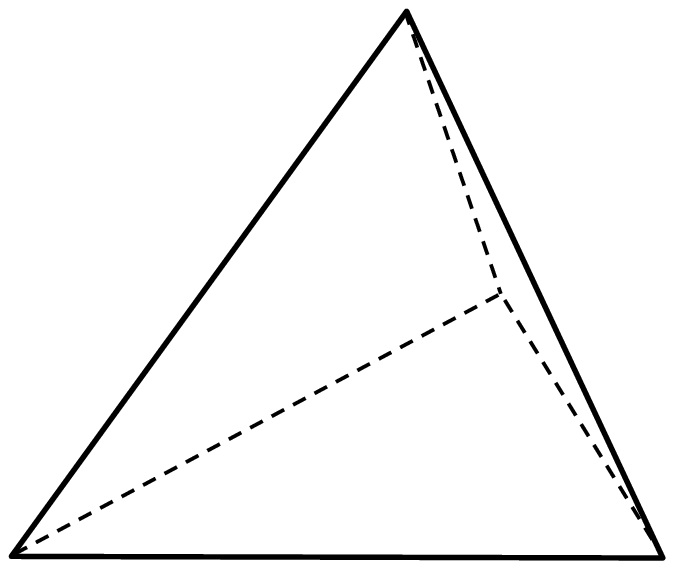}\label{fig:U-T3}}
\hspace{1mm}
\subfigure[$U_{O, 4}$]{\includegraphics[width=2cm]{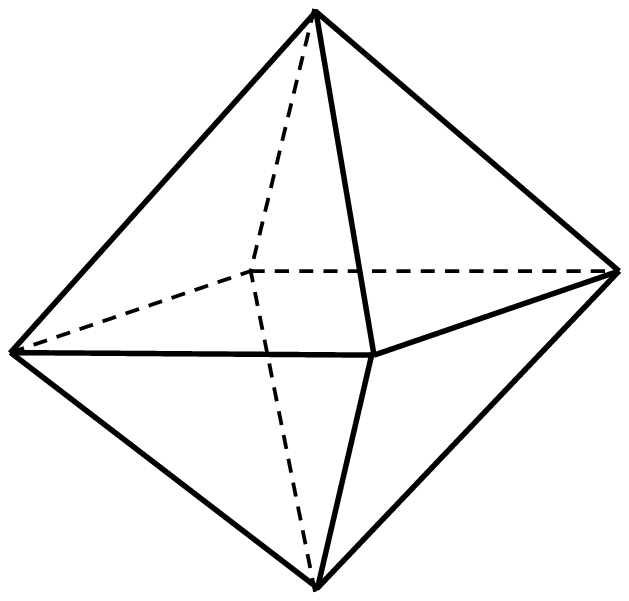}\label{fig:U-O4}}
\hspace{1mm}
\subfigure[$U_{O, 3}$]{\includegraphics[width=2cm]{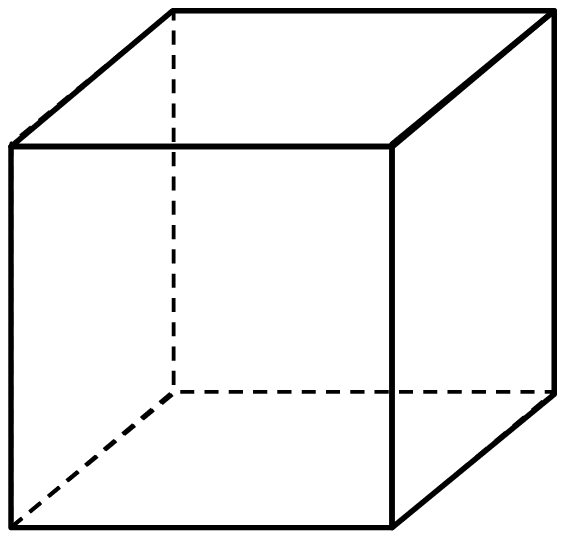}\label{fig:U-O3}}
\hspace{1mm}
\subfigure[$U_{O, 2}$]{\includegraphics[width=2cm]{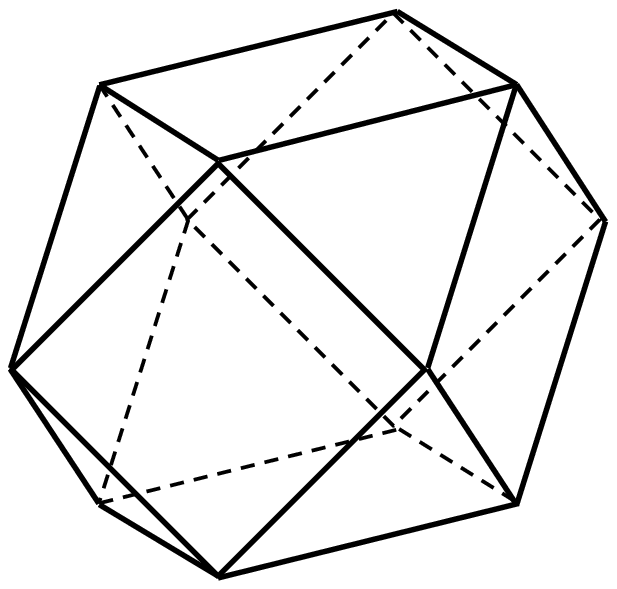}\label{fig:U-O2}}
\hspace{1mm}
\subfigure[$U_{I, 5}$]{\includegraphics[width=2cm]{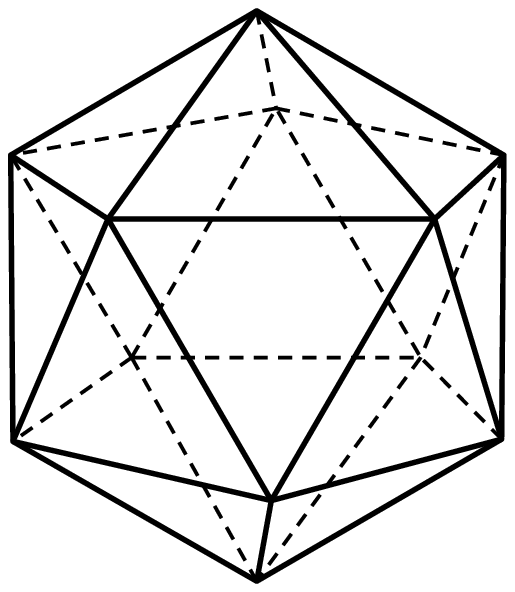}\label{fig:U-I5}}
\hspace{1mm}
\subfigure[$U_{I, 3}$]{\includegraphics[width=2cm]{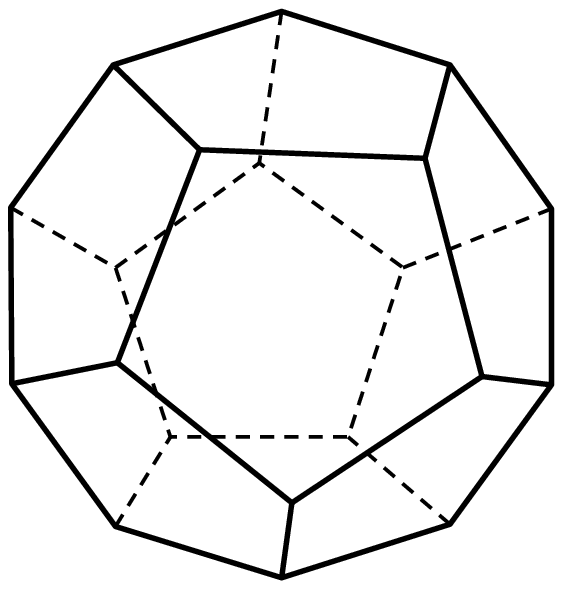}\label{fig:U-I3}}
\hspace{1mm}
\subfigure[$U_{I, 2}$]{\includegraphics[width=2cm]{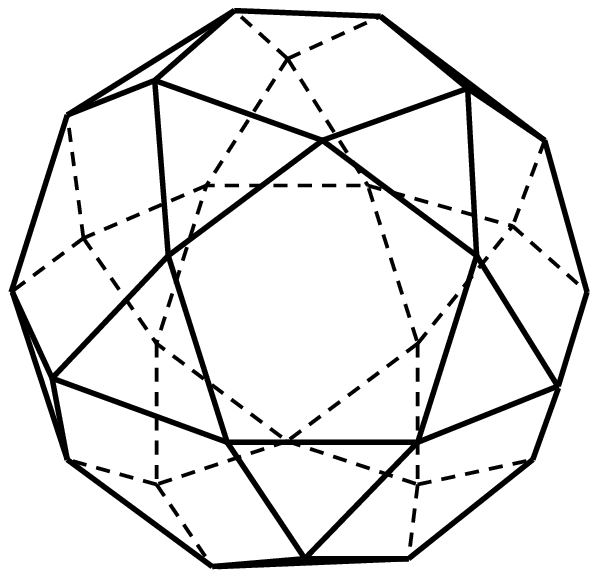}\label{fig:U-I2}}
\caption{Seven polyhedra}
\label{fig:seven-polyhedra}
\end{figure}

\begin{algorithm}[t]
\caption{\texttt{Go-to-center}$(P)$ for robot $r_i \in R$}
\label{alg:go-to-center} 
\begin{tabbing}
xxx \= xxx \= xxx \= xxx \= xxx \= xxx \= xxx \= xxx \= xxx \= xxx
\kill 
{\bf Notation} \\ 
\> $P$: The positions of robots forming an $U_{G, \mu}$ for 
$G \in \{T, O, I\}$ and $\mu>1$ observed in $Z_i$.\\ 
\> $p_i$: Current position of $r_i$. \\ 
\> $\epsilon$: $\ell/100$ where $\ell$ is the length of an edge 
of the polyhedron that $P$ forms. \\ 
\\ 
{\bf Algorithm}  \\ 
\> {\bf Switch} ($P$) {\bf do} \\ 
\> \> {\bf Case} cuboctahedron: \\ 
\> \> \> Select an adjacent triangle face. \\ 
\> \> \> Destination $d$ is the point $\epsilon$ before the center of 
the selected face on the line from $p_i$ to the center. \\ 
\> \> {\bf Case} icosidodecahedron: \\ 
\> \> \> Select an adjacent pentagon face. \\ 
\> \> \> Destination $d$ is the point $\epsilon$ before the center of 
the selected face on the line from $p_i$ to the center. \\ 
\> \> {\bf Default}: \\ 
\> \> \> Select an adjacent face. \\ 
\> \> \> Destination $d$ is the point $\epsilon$ before the center of 
the selected face on the line from $p_i$ to the center. \\ 
\> {\bf Enddo} 
\end{tabbing}
\end{algorithm}

\begin{lemma}
\label{lemma:show-symm-single}
Let $P$ be an arbitrary initial configuration of oblivious FSYNC robots 
that forms a 
$U_{G, \mu}$ for $G \in \{T, O, I\}$ and $\mu > 1$. 
One step execution of Algorithm~\ref{alg:go-to-center}
translates $P$ to 
another configuration $P'$ that satisfies 
$\gamma(P') \in \varrho(P)$. 
\end{lemma}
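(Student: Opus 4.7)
The plan is to prove the lemma by a structural observation combined with a case-by-case analysis over the seven base polyhedra listed in the hypothesis. First I would establish the following geometric fact: for $\epsilon = \ell/100$, the open segment from a vertex $p_i$ of $P$ to the center of any chosen adjacent face meets no rotation axis of $\gamma(P)$ in its interior. This follows from direct inspection, since rotation axes of $\gamma(P)$ pass through $b(P)$ and through vertices, edge midpoints, or face centers, and the only axis touching both a vertex $v$ and a face center $c$ with $v\in f$ is ruled out because an axis through a vertex of these polyhedra meets only the opposite face center (or the vertex axis itself). Hence in $P'$ every axis of $\gamma(P)$ that was occupied in $P$ becomes unoccupied, so $\gamma(P')\neq\gamma(P)$.

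Next I would use Lemma~\ref{lemma:sigma-impossibility} to conclude that $\gamma(P') \succeq \sigma(P)$, where $\sigma(P)\in\varrho(P)$. So it only remains to show that no element of $\gamma(P')$ strictly exceeds the appropriate member of $\varrho(P)$. For this I would adopt a combinatorial viewpoint: a rotation $g\in\gamma(P)$ survives in $\gamma(P')$ only if the assignment $v\mapsto f(v)$ of each vertex to its chosen adjacent face is $\langle g\rangle$-equivariant. Because the stabilizer in $\gamma(P)$ of any vertex acts on the adjacent faces, I can determine exactly which subgroups of $\gamma(P)$ admit an equivariant face-assignment consisting only of adjacent faces (and, for $U_{O,2}$ and $U_{I,2}$, only triangle or pentagon faces). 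For example, in $U_{T,3}$ the vertex-stabilizer is $C_3$ and cyclically permutes the three adjacent faces with no fixed point, ruling out $T$ and $C_3$ and leaving only subgroups of $D_2$; in $U_{O,3}$ the vertex-stabilizer $C_3$ again permutes the three adjacent faces without fixed point, so the surviving rotations form a subgroup of $D_4$; analogous arguments handle $U_{O,4}$, $U_{O,2}$, $U_{I,5}$, $U_{I,3}$, and $U_{I,2}$, each time matching exactly the entry in Table~\ref{table:basic-symmetricity}.

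Finally I would rule out the possibility that $\gamma(P')$ contains a \emph{new} rotation axis not present in $\gamma(P)$. Because each robot lies within distance $\epsilon=\ell/100$ of a face center of the original polyhedron and the face centers themselves are separated by distance $\Omega(\ell)$, the clustering of $P'$ around the face centers persists, so any rotation that sends $P'$ to itself must permute these clusters and therefore must permute the face centers of $P$. Such a rotation necessarily belongs to $\gamma(P)$. Combining this with the equivariance analysis above gives $\gamma(P')\in\varrho(P)$.

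The main obstacle I expect is the last step: ruling out accidental new symmetries in $P'$. A priori one might worry, e.g., that a $D_2$-equivariant choice on $U_{T,3}$ could by coincidence place the four new points at the vertices of a regular tetrahedron with a differently oriented axis set, recovering $T$. The tight bound $\epsilon=\ell/100$ keeps each perturbed point within a small cone around its face center and forces any rotation surviving to $\gamma(P')$ to send face-center clusters of $P$ to face-center clusters of $P$, which pins these hypothetical symmetries to the original axes of $\gamma(P)$ and reduces them to the combinatorial case handled above.
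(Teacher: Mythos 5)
Your first two steps are fine in outline: the destinations do leave every axis of $\gamma(P)$ unoccupied, and your equivariance criterion (a rotation $g\in\gamma(P)$ survives into $\gamma(P')$ iff the face-assignment $v\mapsto f(v)$ is $\langle g\rangle$-equivariant, which can be tested on vertex stabilizers) is a genuinely cleaner way than the paper's to dispose of the rotations that $P'$ could inherit from $\gamma(P)$. The appeal to Lemma~\ref{lemma:sigma-impossibility} is harmless but irrelevant here: the lemma to be proved is an upper bound on $\gamma(P')$, not a lower bound.

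The gap is in your last step, and it is exactly where the paper spends almost all of its effort. From ``each point of $P'$ lies within $\epsilon$ of a face center and any $g\in\gamma(P')$ permutes the $\epsilon$-clusters'' you conclude that $g$ ``must permute the face centers of $P$'' and hence lies in $\gamma(P)$. Neither implication holds. First, $g$ maps each face center only to within $2\epsilon$ of another face center; a rotation that approximately permutes a finite set need not exactly permute it, so $g\in\gamma(P)$ does not follow. Second, the clusters are not rigid markers for the face centers: several robots may choose the same face (so clusters have unequal sizes and $b(P')$ need not coincide with $b(P)$, meaning axes of $\gamma(P')$ need not even pass through $b(P)$), and the candidate points around a single triangular face center themselves form a small regular triangle, i.e., the clusters carry internal symmetry of their own. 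Concretely, in the $\epsilon$-expanded tetrahedron a $3{+}1$ split of the four chosen points produces a small regular triangle whose $3$-fold axis must then be checked against the position of the fourth point; in the $\epsilon$-expanded cube one must rule out three mutually perpendicular $U_{D_2,2}$ segments meeting at an interior point; in the $\epsilon$-truncated cube and icosahedron one must count how many disjoint triangular or pentagonal anti-prisms ($U_{D_3,1}$, $U_{D_5,1}$) can be assembled from endpoints of the long edges. None of these is dispatched by the clustering heuristic; each requires the exhaustive enumeration of regular $k$-gons, tetrahedra, rectangles, and perpendicular line triples inside each $\epsilon$-expanded or $\epsilon$-truncated polyhedron, together with verification that the needed incidences (a point of $D$ on the polygon's axis, several polygons coplanar, etc.) fail. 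That enumeration is the substance of the paper's proof of this lemma, and your proposal does not supply a substitute for it.
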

\begin{proof}
Let $P$, $P'$ be an initial configuration that forms one of the 
above seven (semi-)regular polyhedra and a configuration obtained by 
one-step execution of Algorithm~\ref{alg:go-to-center}. 
We will show that $\gamma(P') \in \varrho(P)$. 

The proof follows the same idea as~\cite{YUKY15}. 
Let $D$ be the set of all points that can be selected by the robots 
as their next positions in $P$. 
When $P$ is a regular polyhedron, 
the points of $D$ are placed around the vertices of the dual of $P$,
which we call the {\em base polyhedron} for $D$. 
For example, when $P$ is a cube,
the base polyhedron is a regular octahedron (Figure~\ref{fig:cocta}). 
When $P$ is a cuboctahedron (an icosidodecahedron, respectively), 
the next positions are placed around the $3$-fold 
axes of $O$ (the $5$-fold axes of $I$, respectively). 
In this case, we consider a cube (a regular icosahedron, respectively) 
as its base polyhedron.

\begin{figure}[t]
\centering 
\includegraphics[width=15cm]{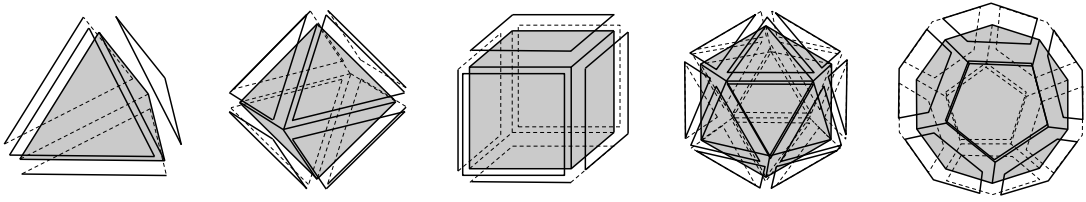}
\caption{Expansion of base polyhedra.}
\label{fig:expansion}
\end{figure}

\begin{figure}[t]
\centering 
\subfigure[$\epsilon$-expanded tetrahedron]
{\includegraphics[width=3.5cm]{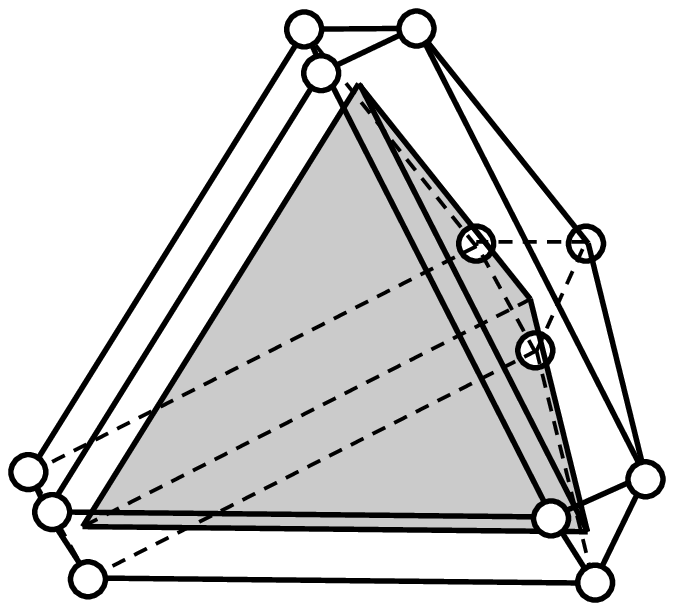}\label{fig:ctetra}}
\hspace{3mm}
\subfigure[$\epsilon$-expanded cube]
{\includegraphics[width=3.5cm]{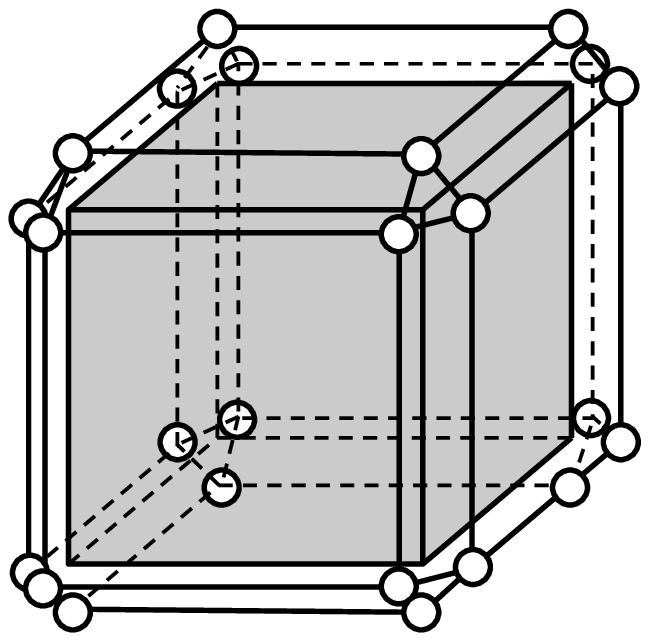}\label{fig:ccube}}
\hspace{3mm}
\subfigure[$\epsilon$-expanded octahedron]
{\includegraphics[width=3.5cm]{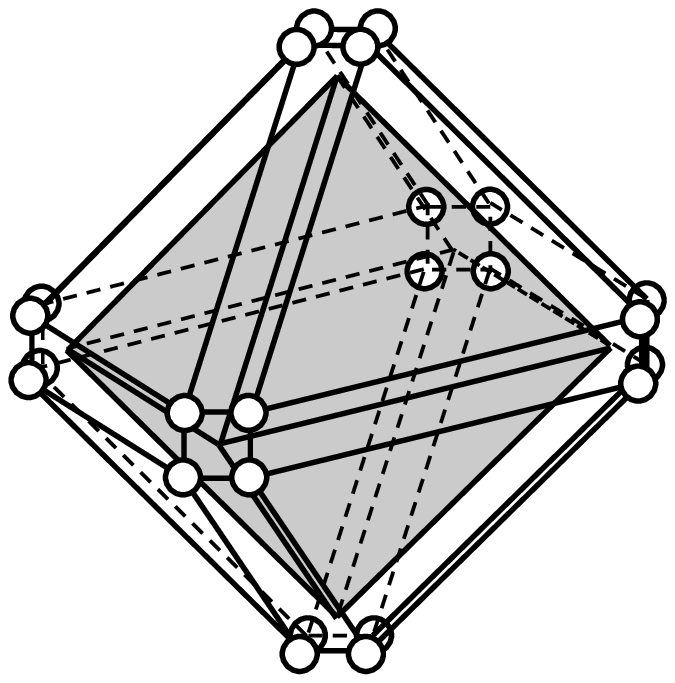}\label{fig:cocta}}
\hspace{3mm}
\subfigure[$\epsilon$-truncated cube]
{\includegraphics[width=3.5cm]{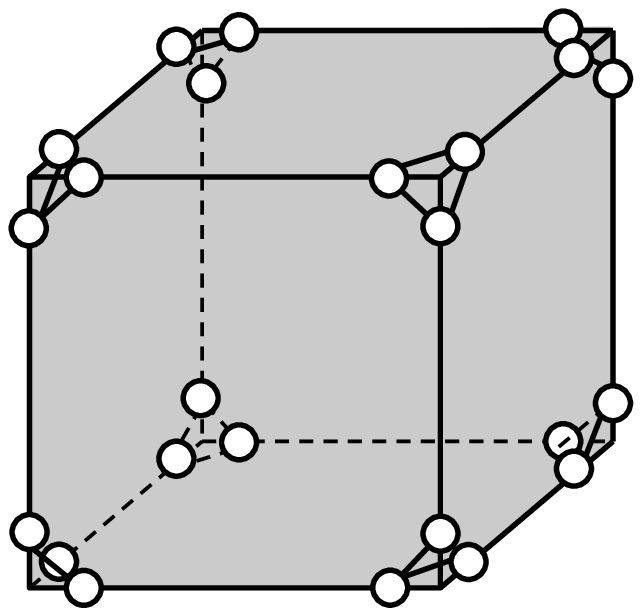}\label{fig:tcube}}
\\ 
\subfigure[$\epsilon$-expanded dodecahedron]
{\includegraphics[width=4cm]{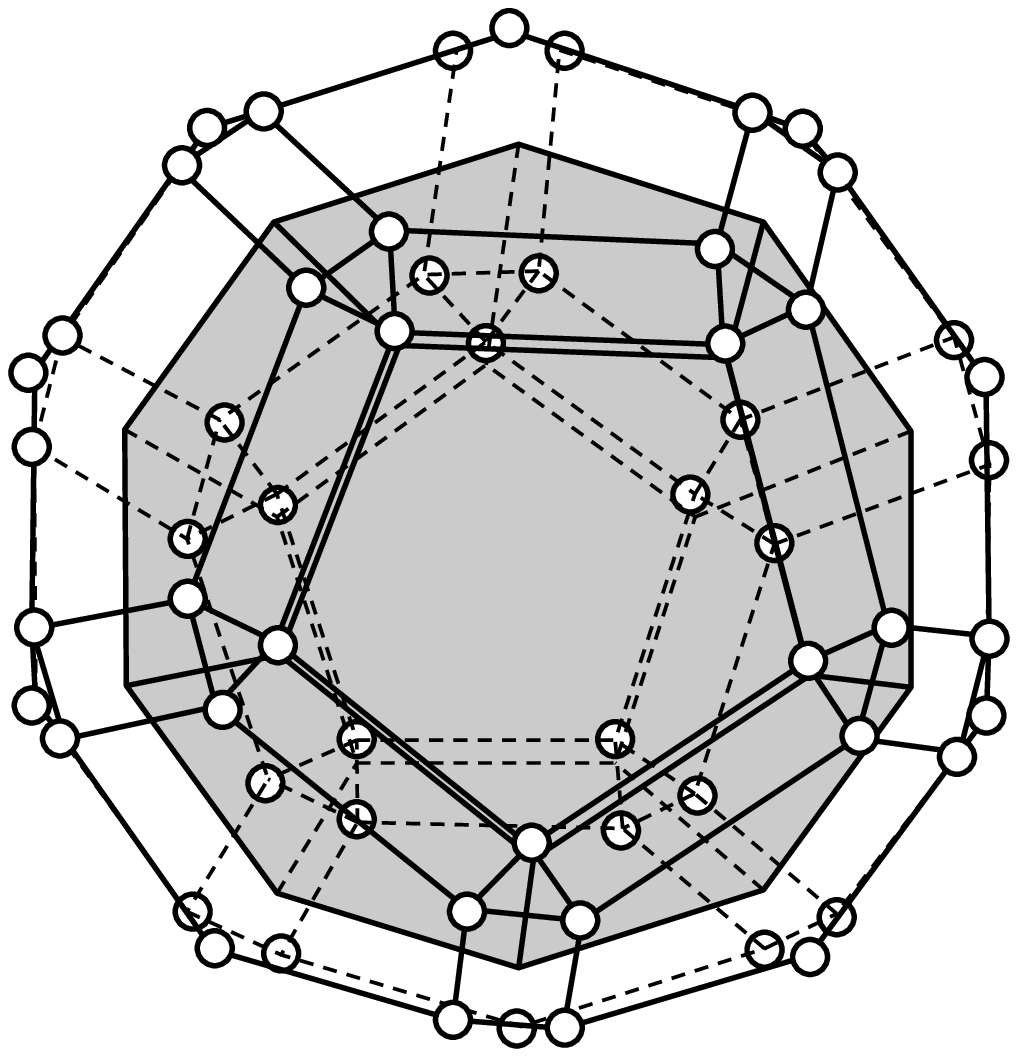}\label{fig:cdodeca}}
\hspace{3mm}
\subfigure[$\epsilon$-expanded icosahedron]
{\includegraphics[width=4cm]{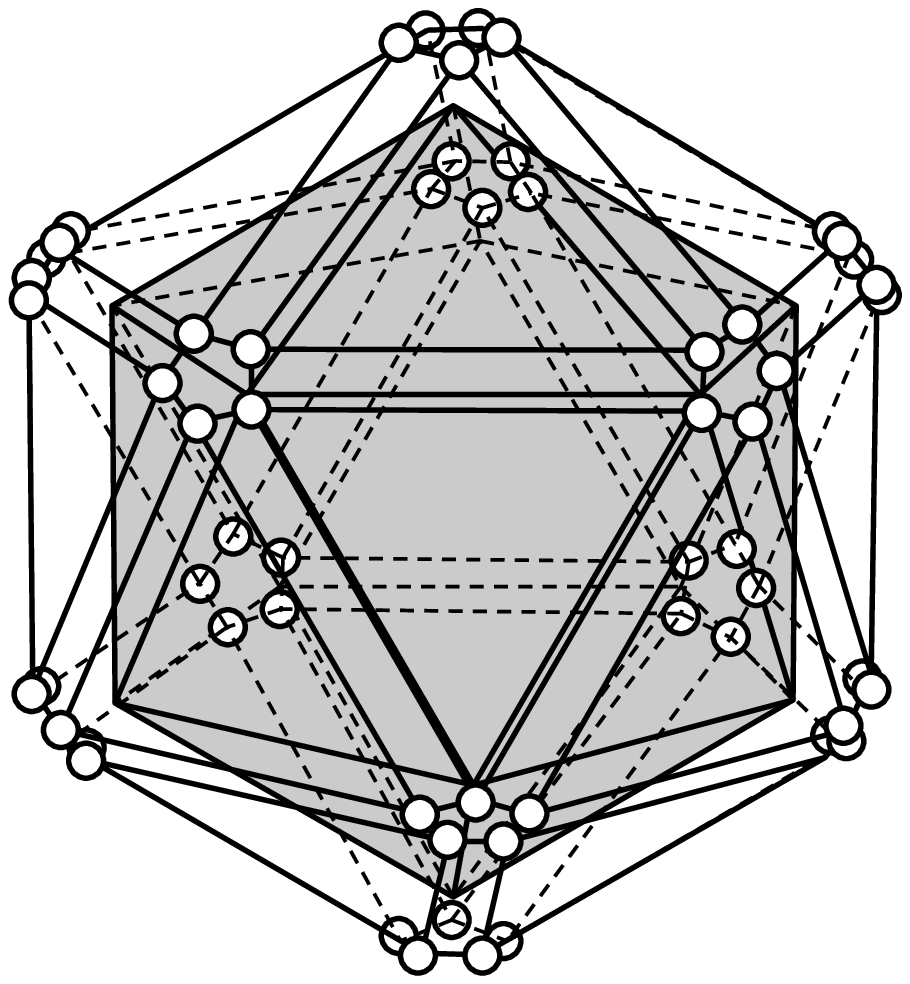}\label{fig:cicosa}}
\hspace{3mm}
\subfigure[$\epsilon$-truncated icosahedron]
{\includegraphics[width=4cm]{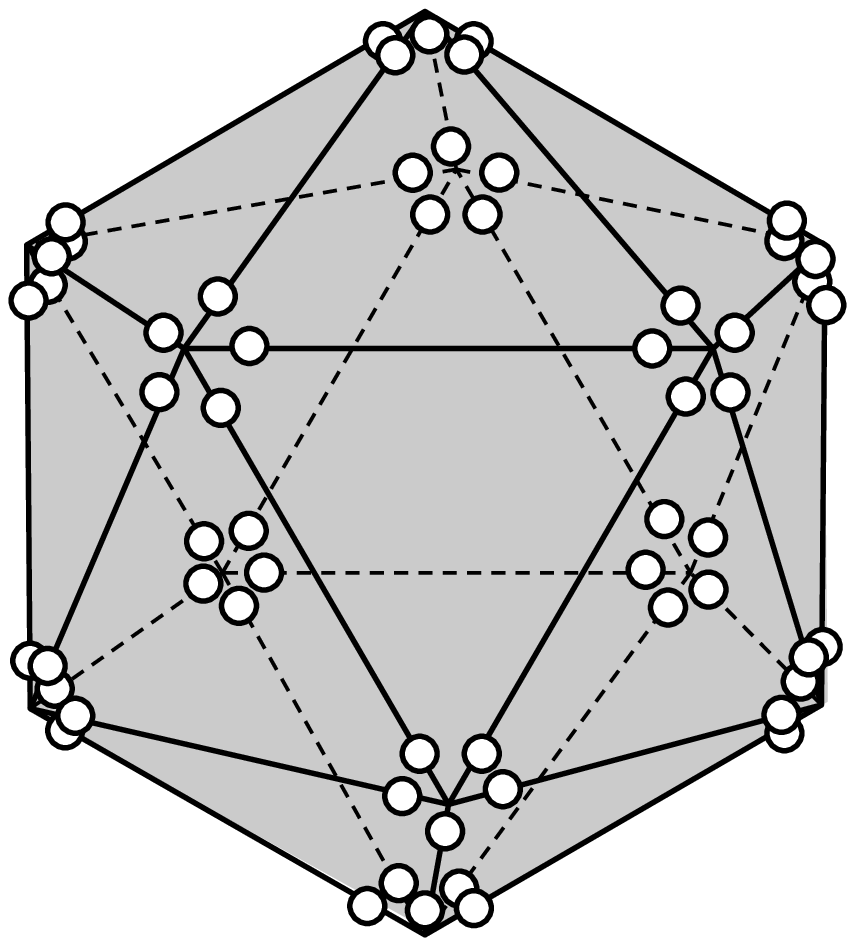}\label{fig:ticosa}}
\caption{Candidate set $D$ corresponding to $P$.}
\label{fig:fulldest}
\end{figure}
 
Figure~\ref{fig:fulldest} shows the base polyhedron and the set of
possible destinations $D$ 
for each of the seven initial configurations. 
When $P$ is a regular polyhedron, 
$D$ forms a polyhedron obtained 
by moving each face of the base polyhedron away from the center 
with keeping the center.
Then the obtained polyhedron consists of the faces of the base
polyhedron 
and new faces formed by the separated vertices and the separated
edges of the base 
polyhedron.\footnote{ 
 The operation is also known as {\em cantellation}: 
 the convex hull of $D$ is obtained from the base polyhedron 
 by truncating the vertices and beveling the edges. See \cite{C73}. 
 }
See Figure~\ref{fig:expansion}. 
For example, when $P$ is a cube, $D$ forms a polyhedron 
obtained by a regular octahedron with the above operation 
and we call the polyhedron {\em $\epsilon$-expanded octahedron} 
(Figure~\ref{fig:cocta}). 
In the same way, if $P$ is a regular tetrahedron, 
$D$ forms an 
{\em $\epsilon$-expanded tetrahedron} (Figure~\ref{fig:ctetra}), 
if $P$ is a regular octahedron, 
$D$ forms an 
{\em $\epsilon$-expanded cube} (Figure~\ref{fig:ccube}), 
if $P$ is a regular icosahedron, 
$D$ forms an 
{\em $\epsilon$-expanded dodecahedron} (Figure~\ref{fig:cdodeca}), 
and
if $P$ is a regular dodecahedron, 
$D$ forms an 
{\em $\epsilon$-expanded icosahedron} (Figure~\ref{fig:cicosa}). 
On the other hand, when $P$ is a semi-regular polyhedron, 
$D$ forms a polyhedron obtained by cutting the vertices 
of the base polyhedron. 
For example, when $P$ is a cuboctahedron, 
$D$ is a polyhedron obtained from a cube by cutting its vertices 
and we call the polyhedron an {\em $\epsilon$-truncated cube} 
(Figure~\ref{fig:tcube}). 
If $P$ is a icosidodecahedron, 
$D$ forms an {\em $\epsilon$-truncated icosahedron} 
(Figure~\ref{fig:ticosa}). 
It is worth emphasizing that $D$ is a transitive set of points
regarding $\gamma(P)$, thus it is spherical. 
 
Algorithm~\ref{alg:go-to-center} makes the robots select 
a subset of size $|P|$ from $D$. To prove the correctness, 
we will show that the symmetricity of any such subset 
has the rotation group that satisfies the statement. 

Our basic idea is to check all possibilities of $\gamma(P')$, 
specifically, for each $G \not\in \varrho(P)$, 
we assume that $\gamma(P') = G$ and 
check the $\gamma(P')$-decomposition of $P'$. 
From Algorithm~\ref{alg:go-to-center}, 
 any resulting configuration $P'$ contains multiplicity
 since the sets of possible next positions of different robots are
 disjoint. 

\noindent{\bf Case A. $P$ is a regular tetrahedron:~} 
$D$ forms an $\epsilon$-expanded tetrahedron 
and we check the rotation group of any $4$-set of $D$. 
We will show $\gamma(P') \in \varrho(P) = \{D_2\}$. 

Assume that $\gamma(P')$ is $D_k$ or $C_k$ for some $k \geq 3$. 
Because we cannot find any regular $\ell$-gon for $\ell \geq 4$ in $D$, 
$k \leq 3$. 

Assume $\gamma(P') = D_3$, then the $D_3$-decomposition 
of $P'$ consists of $U_{D_3, 2}$ (cardinality $3$) and 
$U_{D_3, 6}$ (cardinality 1)  
because $P'$ consists of four points. 
We do not have the case where the $D_3$-decomposition of $P'$
consists of two $U_{D_3, 3}$'s (i.e., all points of $P'$ are on the
principal axis of $D_3$), because $\gamma(P') = D_{\infty}$. 
Thus $P'$ contains a regular triangle. 
Figure~\ref{fig:tri-etetra} shows all possible regular triangles 
in an $\epsilon$-expanded tetrahedron and 
we cannot find any regular triangle that have one 
point of $D$ at its center. 
Hence $\gamma(P') \neq D_3$. 

Assume $\gamma(P') = C_3$, then the $C_3$-decomposition 
of $P'$ consists of $U_{C_3, 1}$ (cardinality $3$) and 
$U_{C_3, 3}$ (cardinality $1$), because otherwise 
$\gamma(P')$ is not $C_3$. 
Thus $P'$ contains a regular triangle. 
In the same way, for any regular triangle in $D$, 
there is no point on the $3$-fold axis for the 
triangle. Hence $\gamma(P') \neq C_3$.
Consequently, if $\gamma(P')$ is a 2D rotation group,
it is $C_1$, $C_2$, or $D_2$.  

\begin{figure}[t]
\centering 
\includegraphics[width=14cm]{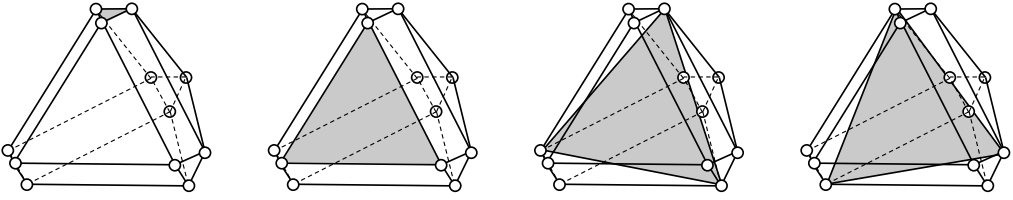}
\caption{Regular triangles in an $\epsilon$-expanded tetrahedron. }
\label{fig:tri-etetra}
\end{figure}

We check the 3D rotation groups. 
First, $\gamma(P') \neq T$ since any polyhedron with 
rotation group $T$ consists of more than four points and 
there is no $4$-set that forms a
regular tetrahedron (i.e., the base polyhedron) in 
an $\epsilon$-expanded tetrahedron. 
Second, $\gamma(P')$ is neither $O$ nor $I$ since 
any polyhedron with rotation group $O$ (or $I$) consists of 
at least $6$ ($12$, respectively) vertices. 
(See Table~\ref{table:vt-sets}.)

Consequently, $\gamma(P') \preceq D_2$.

\noindent{\bf Case B. $P$ is a regular octahedron:~} 
$D$ forms an $\epsilon$-expanded cube 
and we check the rotation group of any $6$-set of $D$. 
We will show $\gamma(P') \in \varrho(P) = \{D_3\}$. 

Assume that $\gamma(P)$ is $D_k$ or $C_k$ for some $k \geq 4$. 
Because we cannot find any regular $\ell$-gon for $\ell \geq 5$ in $D$, 
$k \leq 4$. 

Assume $\gamma(P') = D_4$, then the $D_4$-decomposition of $P'$ 
consists of $U_{D_4, 2}$ (cardinality $4$) and 
$U_{D_4, 4}$ (cardinality $2$) 
because $P'$ consists of six points. 
We do not have the case where $D$ consists of three $U_{D_4, 4}$'s 
since $P\gamma(P') = D_{\infty}$. 
Thus $P'$ contains a square. 
Figure~\ref{fig:sqr-ecube} shows all possible squares in an 
$\epsilon$-expanded cube and 
we cannot find any square that have two points 
on its $4$-fold axis. 
Hence $\gamma(P') \neq D_4$. 

Assume $\gamma(P') = C_4$, then the $C_4$-decomposition of $P'$ 
consists of $U_{C_4, 1}$ (cardinality $4$) and 
two $U_{C_4, 4}$'s (cardinality $1$). 
Thus $P'$ contains a square. 
In the same way, for any square in $D$, 
there is no point on the $4$-fold axis for the square. 
Hence $\gamma(P') \neq C_4$. 

\begin{figure}[t]
\centering 
\includegraphics[width=10.5cm]{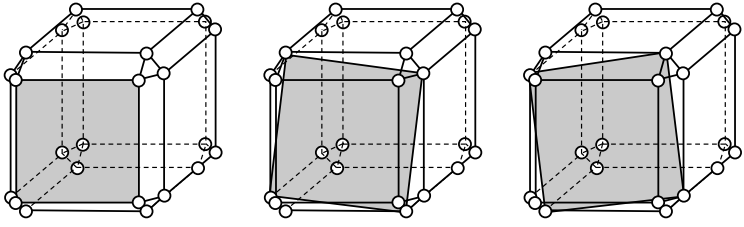}
\caption{Square in an $\epsilon$-expanded cube.}
\label{fig:sqr-ecube}
\end{figure}

\begin{figure}[t]
\centering 
\subfigure[]
{\includegraphics[width=3cm]{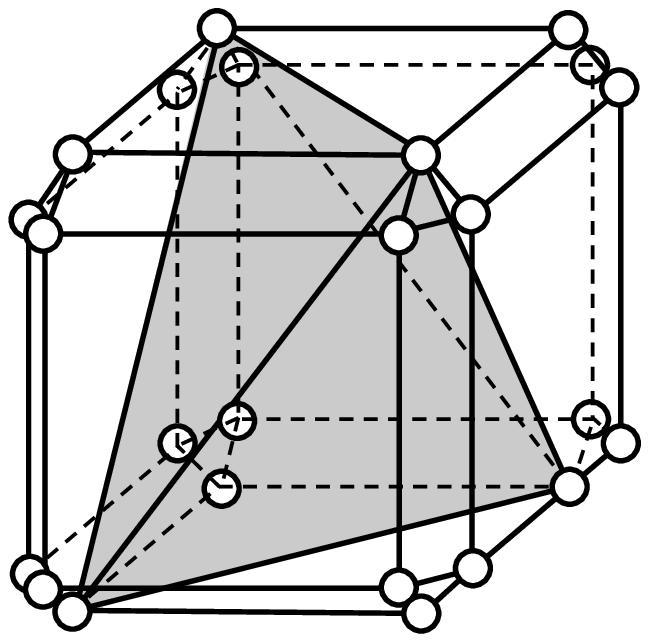}\label{fig:sphenoid-ecube}}
\hspace{3mm}
\subfigure[]
{\includegraphics[width=9cm]{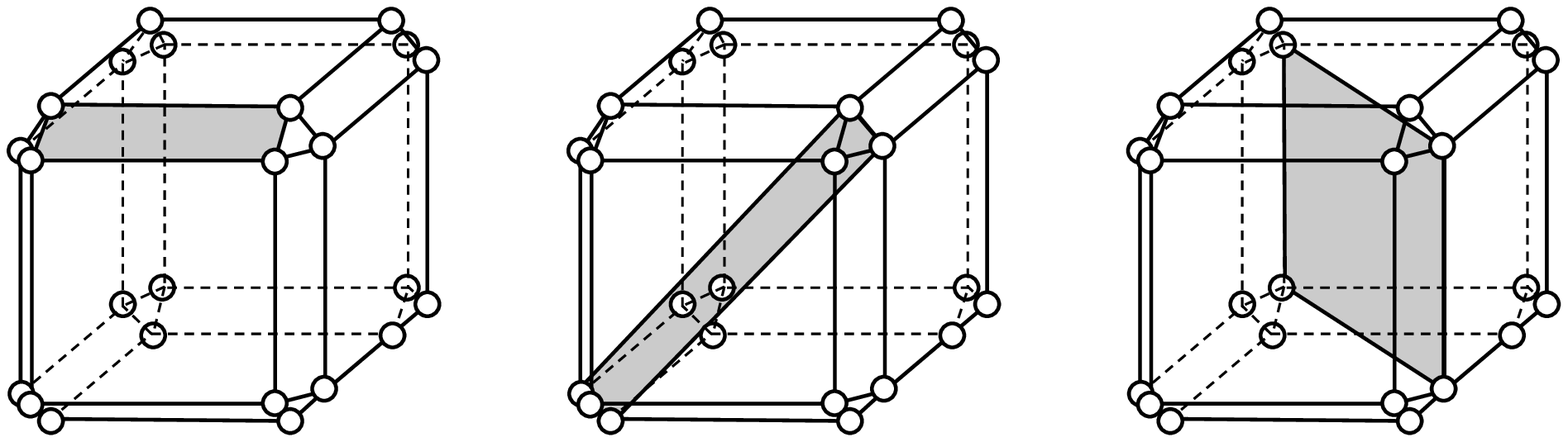}\label{fig:req-ecube}}
\caption{An example of $U_{D_2, 1}$ 
 in an $\epsilon$-expanded cube.
 (a) A sphenoid. (b) A rectangle. }
\label{fig:D2-ecube}
\end{figure}

Assume $\gamma(P') = D_2$, then the $D_2$-decomposition of $P'$ 
consists of (i) $U_{D_2, 1}$ (cardinality $4$) and 
$U_{D_2, 2}$ (cardinality $2$) or (ii)
three $U_{D_2, 2}$'s (cardinality $2$). 
We first check case (i) where
$U_{D_2, 1}$ forms a sphenoid, a regular tetrahedron, 
a rectangle, or a square. 
See Figure~\ref{fig:D2-ecube} as an example.  
Now consider $\epsilon \rightarrow 0$. 
If $U_{D_2, 1}$ forms a sphenoid or a regular tetrahedron,
when $\epsilon=0$, $U_{D_2, 1}$ forms a regular 
tetrahedron in the base polyhedron. 
(See Figure~\ref{fig:basecube-0}.)
However, there is no vertex of the base polyhedron 
on the $2$-fold axes of the regular tetrahedron.
When $\epsilon > 0$, the arrangement of edges of $U_{D_2, 1}$ 
slightly moves from the regular tetrahedron,
but we cannot find two points forming $U_{D_2, 2}$ on the
$2$-fold axes of $U_{D_2, 1}$. 

In the same way, if $U_{D_2, 1}$ forms a rectangle or a square, 
when $\epsilon=0$, $U_{D_2, 1}$ forms a line formed by two
vertices of the base polyhedron or a rectangle formed by
four vertices of the base polyhedron
(See Figure~\ref{fig:basecube-1} and Figure~\ref{fig:basecube-2}.) 
There are three possibilities for a line;
edges of the base polyhedron,
diagonals of the faces of the base polyhedron, or
diagonals of the base polyhedron (connecting two opposite vertices). 
In any of the three cases,
there is no pair of vertices of the base polyhedron
that form a perpendicular bisector or a line on it. 
When $\epsilon > 0$, the arrangement of edges of $U_{D_2, 1}$
slightly moves from the line, 
but we cannot find two points forming $U_{D_2, 2}$ on the
$2$-fold axes of $U_{D_2, 1}$. 
There are two possibilities for a square;
faces of the base polyhedron or the rectangle cutting the base
polyhedron into two triangular prisms.
In both cases, there are no pair of vertices of the base polyhedron
that are on some $2$-fold axis or $4$-fold axis
of these rectangles. 
When $\epsilon > 0$, the arrangement of edges of $U_{D_2, 1}$
slightly moves from the rectangles, 
but we cannot find two points forming $U_{D_2, 2}$ on the
$2$-fold axes of $U_{D_2, 1}$. 
 
We check case (ii) where $P'$ consists of three $U_{D_2, 2}$'s.
Because $D$ is spherical, 
there is no three points of $D$ that are on the same line. 
Hence the three $U_{D_2, 2}$'s are perpendicular to each other and
intersects at their midpoints, i.e.,
these three lines are in the interior of $D$. 
Consider $\epsilon \rightarrow 0$.
When $\epsilon = 0$, three $U_{D_2, 2}$'s degenerates to three
lines formed by the vertices of the base polyhedron and
perpendicular to each other. 
However we cannot find any three lines perpendicular to each other 
in a cube. (See Figure~\ref{fig:basecube-3}.)
When $\epsilon > 0$, the arrangement of lines slightly
moves from the rectangles, but we cannot find any three
lines perpendicular to each other.  
 
From case (i) and (ii), we have $\gamma(P') \neq D_2$. 

\begin{figure}[t]
\centering 
\subfigure[]
{\includegraphics[width=2.5cm]{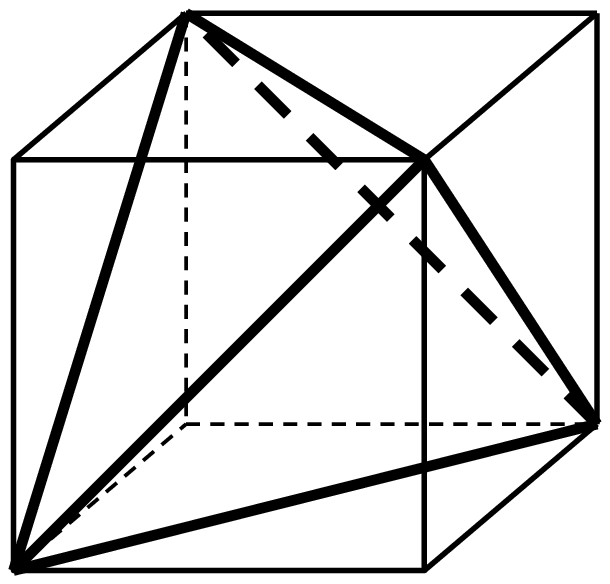}\label{fig:basecube-0}}
\hspace{3mm}
\subfigure[]
{\includegraphics[width=2.5cm]{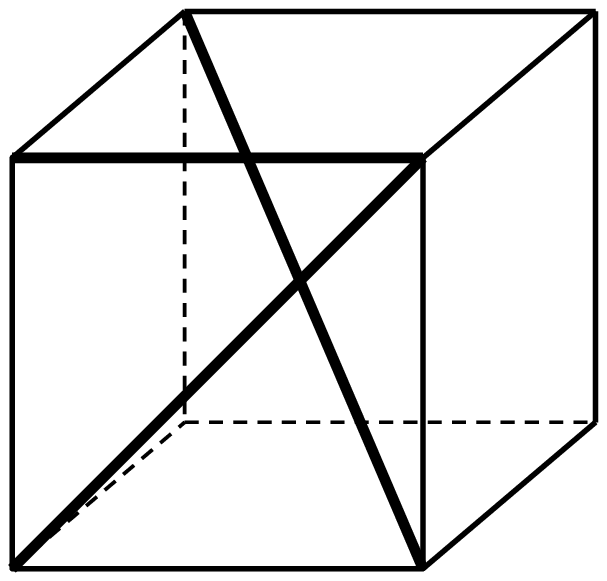}\label{fig:basecube-1}}
\hspace{3mm}
\subfigure[]
{\includegraphics[width=2.5cm]{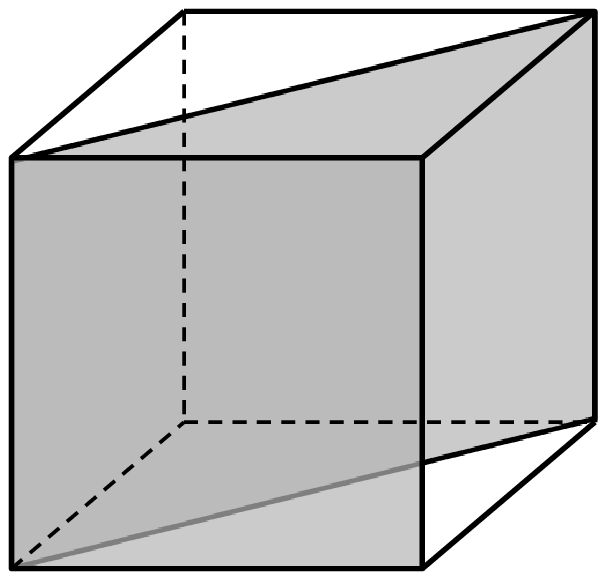}\label{fig:basecube-2}}
\hspace{3mm}
\subfigure[]
{\includegraphics[width=2.5cm]{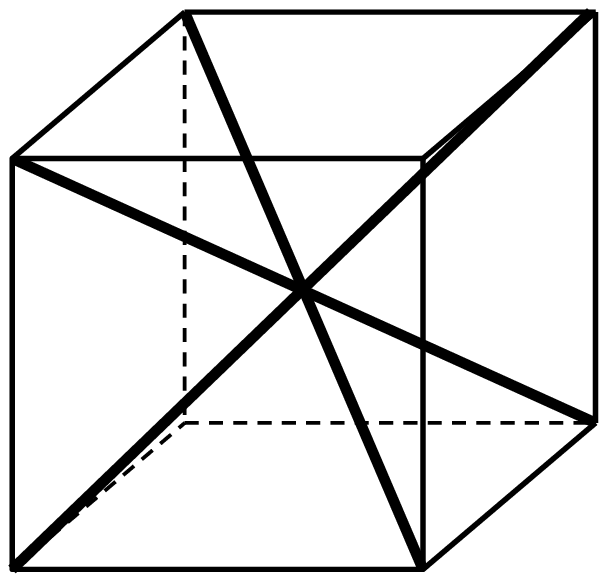}\label{fig:basecube-3}}
\hspace{3mm}
 \caption{Lines and rectangles of base polyhedron cube. (a) A regular
 tetrahedron in a cube. 
 (b) We have two kinds of rectangles formed by four vertices of a cube.
 (c) Three lines intersecting one point in the cube are not
 perpendicular to each other. }
\label{fig:basecube}
\end{figure}

We check the 3D rotation groups. 
First, assume $\gamma(P') = T$, then the $T$-decomposition of $P'$ 
consists of $U_{T, 2}$ since $P'$ consists of $6$ points. 
Because all points of $D$ are near the vertices of a cube, 
$P'$ does not form a regular octahedron (the dual of the base 
polyhedron).
Hence $\gamma(P') \neq T$. 
Second, assume $\gamma(P') = O$, then the $O$-decomposition of $P'$ 
consist of $U_{O, 4}$, but as already discussed, 
we cannot find any regular octahedron in $D$. 
Finally, $\gamma(P') \neq I$ since any polyhedron with rotation group $I$ 
consists of more than $12$ vertices. 

Consequently, $\gamma(P') \preceq D_3$.

\noindent{\bf Case C. $P$ is a cube:~} 
$D$ forms an $\epsilon$-expanded octahedron and 
we check the rotation group of any $8$-set of $D$. 
We will show $\gamma(P') \in \varrho(P) = \{D_4\}$. 

Assume $\gamma(P')$ is $D_k$ or $C_k$ for some $k \geq 5$. 
Because we cannot find any regular $\ell$-gon for $\ell \geq 5$ in $D$, 
$k \leq 4$. 

Assume $\gamma(P') = D_3$, then the $D_3$-decomposition of $P'$ 
 contains at least one $U_{D_3, 3}$ (cardinality $2$)
 or $U_{D_3, 6}$ (cardinality $1$) 
since $|P'|=8$ is not divided by $3$. 
Additionally, $P'$ contains $U_{D_3, 1}$ (cardinality $6$) or 
$U_{D_3, 2}$ (cardinality $3$), 
thus at least one regular triangle. 
Figure~\ref{fig:tri-eocta} shows all possible regular triangles 
in an $\epsilon$-expanded octahedron and 
we cannot find any regular triangle 
that have a point on its $3$-fold axis. 
Hence $\gamma(P') \neq D_3$. 

Assume $\gamma(P') = C_3$, then the $C_3$-decomposition of $P'$ 
 contains at least one $U_{C_3, 3}$ (cardinality $1$) since
 $|P'|=8$. 
Additionally, $P'$ contains $U_{C_3, 1}$, 
thus at least one regular triangle. 
In the same way, for any regular triangle in $D$, 
there is no point on the $3$-fold axis of the triangle. 
Hence $\gamma(P') \neq C_3$. 

\begin{figure}[t]
\centering 
\includegraphics[width=14cm]{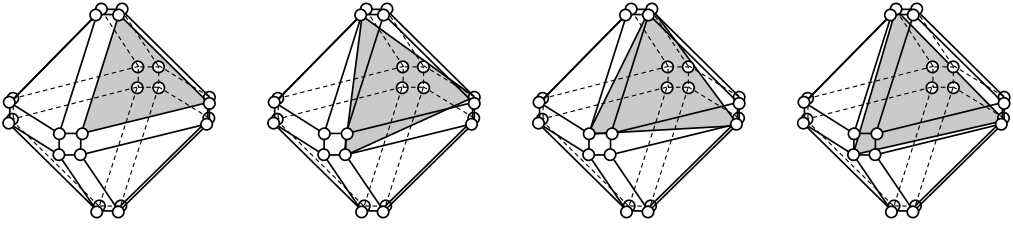}
\caption{Regular triangles in an $\epsilon$-expanded octahedron.}
\label{fig:tri-eocta}
\end{figure}

We check the 3D rotation groups. 
First, assume $\gamma(P') = T$, then the $T$-decomposition of $P'$ 
consists of two $U_{T, 3}$'s (cardinality $4$) 
because $P'$ consists of eight points. 
However, because the vertices of an $\epsilon$-expanded 
octahedron is around the vertices of a regular octahedron, 
we cannot find any regular tetrahedron in $D$. 
Hence $\gamma(P') \neq T$. 

Second, assume $\gamma(P') = O$, then the $O$-decomposition of $P'$ 
consists of one $U_{O, 3}$ (cardinality $8$), 
but an $\epsilon$-expanded octahedron does not contain any cube (its dual). 
Hence $\gamma(P') \neq O$. 

Finally, $\gamma(P') \neq I$ since any polyhedron with 
rotation group $I$ consists of at least $12$ vertices 
(See Table~\ref{table:vt-sets}). 

Consequently, $\gamma(P') \preceq D_4$.

\noindent{\bf Case D. When $P$ is a cuboctahedron:~}  
$D$ forms an $\epsilon$-truncated cube  
and we check the rotation group of any $12$-set of $D$. 
We will show $\gamma(P') \in \varrho(P) = \{T, C_4, C_3\}$. 

Assume that $\gamma(P')$ is $C_k$ or $D_k$ for some $k \geq 5$. 
Because we cannot find any regular $\ell$-gon for $\ell \geq 5$ in $D$, 
$k \leq 4$. 

Assume $\gamma(P') = D_4$, then the $D_4$-decomposition of $P'$ 
consists of 
(i) three $U_{D_4, 2}$'s (cardinality $4$), 
(ii) several $U_{D_4,4}$'s (cardinality $2$) and $U_{D_4,2}$ or $U_{D_4, 1}$ 
 (cardinality $8$) or 
(iii) one $U_{D_4, 1}$ and one $U_{D_4, 2}$ 
 (cardinality $4$).
 We do not have the case where we have six $U_{D_4, 4}$'s since
 $\gamma(P') = D_{\infty}$. 
In any case, $P'$ contains a square. 
Figure~\ref{fig:sqr-tcube} shows all possible squares in an 
$\epsilon$-truncated cube. 
We cannot find any three squares on a plane in $D$, 
 hence we do not have case (i). 
Additionally, we cannot find any square in $D$ that 
have points on its $4$-rotation axis 
and we do not have case (ii). 
Now we consider case (iii) where $P'$ contains a $U_{D_4, 1}$ 
consisting of two squares shown in 
Figure~\ref{fig:sqr-tcube}. 
$U_{D_4 ,1}$ consists two congruent squares, thus
they are on the same plane, 
or two congruent squares
or they are opposite against the center of $D$. 
$U_{D_4, 2}$ is on the $2$-fold axis of $D_4$, thus
on the plane between the two bases, 
but we cannot find any point of $D$ on such plane. 
Hence $\gamma(P') \neq D_4$. 
\begin{figure}[t]
\centering 
\includegraphics[width=10.5cm]{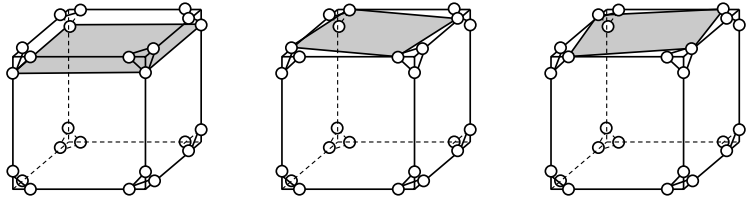} 
\caption{Squares in an $\epsilon$-truncated cube. }
\label{fig:sqr-tcube}
\end{figure}

Assume $\gamma(P') = D_3$, then the $D_3$-decomposition of $P'$ 
consists of 
(i) four $U_{D_3, 2}$'s (cardinality $3$), 
(ii) two $U_{D_3, 1}$'s (cardinality $6$), 
(iii) two $U_{D_3,2}$'s (cardinality $3$) and $U_{D_3, 1}$ 
(cardinality $6$) or 
(iv) contains $U_{D_3, 3}$ (cardinality $2$). 
In any case, $P'$ contains a regular triangle.  
Figure~\ref{fig:tri-tcube} shows all possible regular triangles 
in the $\epsilon$-truncated cube. 
We first note that we cannot find any four regular triangles on the 
same plane, hence we do not have case (i). 
Next, consider case (iii). 
Two $U_{D_3, 2}$'s are on the same plane 
and $U_{D_3, 1}$ is symmetric against that plane.
Find that any $U_{D_3, 1}$ in $D$ consists of opposite 
triangles against $b(D)$ that share the $3$-fold axis of
$D$. 
Hence there is no such two $U_{D_3, 2}$'s
in between the bases of $U_{D_3, 1}$ and 
 we do not have case (iii).
We do not have case (iv) 
because there is no regular triangle that have some point of $D$ 
on its $3$-fold axis as shown in Figure~\ref{fig:tri-tcube}. 

\begin{figure}[t]
\centering 
\includegraphics[width=14cm]{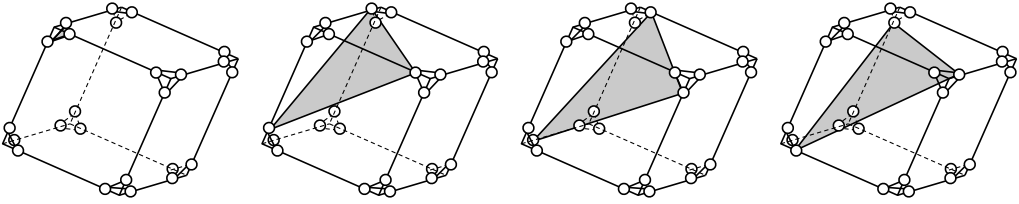}
\caption{Regular triangles in an $\epsilon$-truncated cube. }
\label{fig:tri-tcube}
\end{figure}

Finally, we consider case (ii). 
The only possibility to form two $U_{D_3, 1}$'s is to choose one 
base from the four triangles shown in Figure~\ref{fig:tri-tcube}.
 We divide the long edges of the $\epsilon$-truncated cube by a decomposition
 of the base polyhedron. We can cut a regular cube into two
 triangular pyramids and one triangular anti-prism. 
 The first group consists of the edges of the $\epsilon$-truncated cube
 contained 
 in the two triangular prisms and the second group consists of the
 edges of the $\epsilon$-truncated cube
 contained in the triangular anti-prism. 
 (See Figure~\ref{fig:cut-cube-into3}.) 
 Remember that the two endpoints of
 each long edge of the $\epsilon$-truncated cube 
 are the destinations of one robot and $P'$ contains just one of them. 
 From the endpoints of the first group 
 we can construct at most one triangular anti-prism 
 (i.e., $U_{D_3, 1}$). 
 Then we check the endpoints of the second group and we can form two 
 triangular anti-prisms each of which contains both endpoints of
 long edges of $D$ since the arrangement of $D_3$ for the first group
 and that of the second group should be common.
 (See Figure~\ref{fig:anti-tcube}.) 
 Thus $P'$ cannot contain two $U_{D_3,1}$'s and 
 $\gamma(P') \neq D_3$.

\begin{figure}[t]
\centering 
\subfigure[First group]
 {\includegraphics[width=3.5cm]{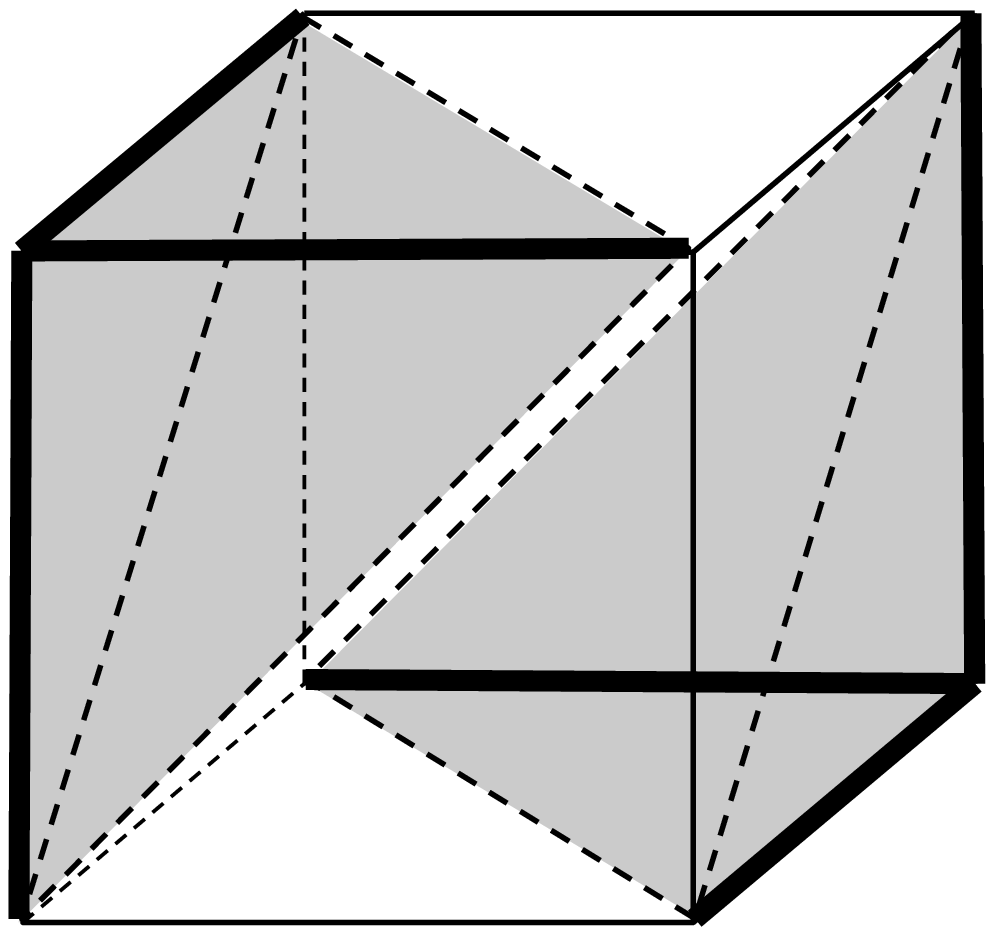}\label{fig:cut-cube-into3-1}}
\hspace{3mm}
\subfigure[Second group]
 {\includegraphics[width=3.5cm]{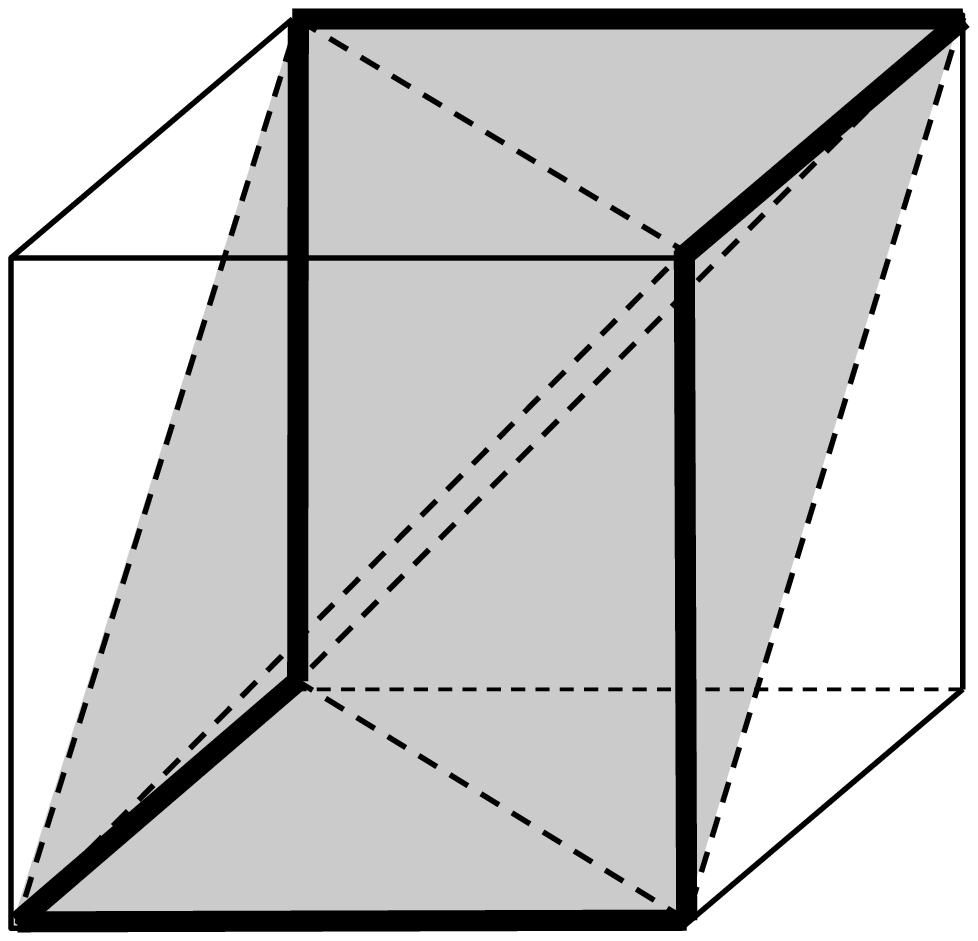}\label{fig:cut-cube-into3-2}}
\hspace{3mm}
 \caption{Decomposition of the edges of an $\epsilon$-truncated cube
 around a  $3$-fold axis 
 by using the base polygon. The bold edges show the member edges
 of each group.}
\label{fig:cut-cube-into3}
\end{figure}

\begin{figure}[t]
\centering 
\includegraphics[width=7cm]{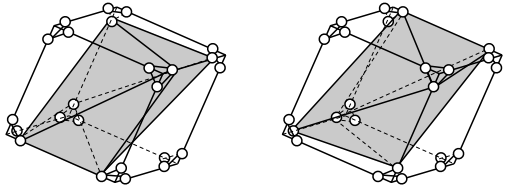}
\caption{$U_{D_3, 1}$ in the second group of an $\epsilon$-truncated cube. }
\label{fig:anti-tcube}
\end{figure}

We check the 3D rotation groups. 
First, assume $\gamma(P') = O$, then the $O$-decomposition of $P'$ 
 consists of two $U_{O, 4}$'s (cardinality $6$) or $U_{O, 2}$
 (cardinality $12$) 
because $|P'| = 12$. 
Because the vertices of an 
$\epsilon$-truncated cube is around the vertices of a regular cube, 
we can find neither any regular octahedron (i.e., its dual) in $D$ nor
any cuboctahedron in $D$. 

Next, assume $\gamma(P') = I$, then the $I$-decomposition of $P'$ 
consists forms $U_{I, 5}$ (cardinality $12$) 
because $|P'|=12$. 
Because the vertices of an 
$\epsilon$-truncated cube is around the vertices of a cube, 
we cannot find any regular icosahedron in $D$.

Consequently, $\gamma(P') \preceq C_4, C_3$, or $T$.

\noindent{\bf Case E. $P$ is a regular icosahedron:~}  
$D$ forms an $\epsilon$-expanded dodecahedron 
and we check the rotation group of any $12$-set of $D$. 
We will show $\gamma(P') \in \varrho(P) = \{T, D_3\}$. 

Assume that $\gamma(P')$ is $D_k$ or $C_k$ for some $k \geq 2$. 
Because we cannot find any regular $\ell$-gon for $\ell \geq 6$ in $D$, 
$k \leq 5$. 

 Assume that $\gamma(P')$ is $D_4$ or $C_4$,
 then $P'$ contains at least one square. 
We would like to check all possible squares in $D$, but 
clearly any edge of an $\epsilon$-expanded dodecahedron 
does not form any square because it does not have any 
adjacent edge with the same length and perpendicular to it. 
The remaining possibilities are the lines connecting the 
 vertices of the $\epsilon$-expanded dodecahedron.
 See Figure~\ref{fig:cube-dodeca} that shows an embedding of a cube into
 a regular dodecahedron (the base polyhedron).\footnote{
There are five such embeddings, but it is sufficient to check one of 
them because we just use it to check the arrangement of vertices. } 
Because the points of $D$ are obtained by expanding the faces of
a regular dodecahedron, there is no $4$-set of $D$ that forms a
square. 
Hence $\gamma(P') \neq C_4, D_4$. 
\begin{figure}[t]
\centering 
\includegraphics[width=3cm]{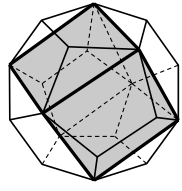}
\caption{A cube embedded in a regular dodecahedron.}
\label{fig:cube-dodeca}
\end{figure}

Assume that $\gamma(P')$ is $D_5$ or $C_5$. 
 Because $|P'|$ is not divided by $5$,
 $P'$ contains at least one $U_{D_5, 5}$ (cardinality $2$) or
 $U_{C_5, 5}$ (cardinality $1$). 
Figure~\ref{fig:penta-edodeca} shows all possible regular pentagons 
in $D$ and for each of the pentagons 
there is no vertex on its $5$-fold axis. 
Hence $\gamma(P') \neq C_5, D_5$. 

\begin{figure}[t]
\centering 
\includegraphics[width=14cm]{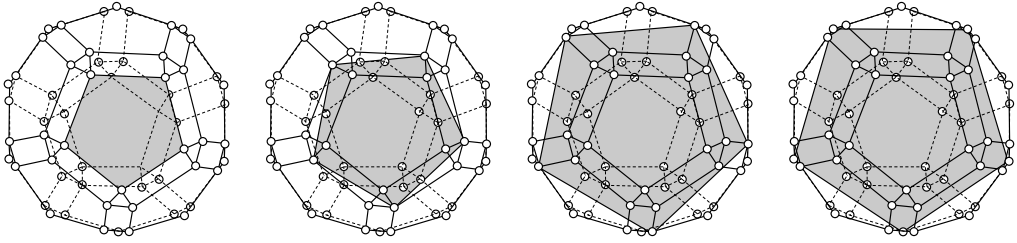}
\caption{Regular pentagons in an $\epsilon$-expanded dodecahedron}
\label{fig:penta-edodeca}
\end{figure}

We check the 3D rotation groups. 
First, assume $\gamma(P') = O$, then the $O$-decomposition of $P'$ 
consists of one $U_{O, 2}$ (cardinality $12$) or two $U_{O, 4}$'s 
(cardinality $6$) because $|P'|=12$. 
Because the vertices of an $\epsilon$-expanded dodecahedron is around 
the vertices of a regular dodecahedron, 
we can find neither a cuboctahedron nor regular octahedron in $D$. 

Next, assume $\gamma(P') = I$, then the $I$-decomposition of $P'$ 
consists of $U_{I, 5}$ (cardinality $12$) because $|P'|=12$. 
Because the vertices of an $\epsilon$-expanded dodecahedron is around 
the vertices of a regular dodecahedron, 
we cannot find any regular icosahedron (i.e., its dual) in $D$. 

Consequently, $\gamma(P') \preceq D_3$ or $T$. 

\noindent{\bf Case F. $P$ is a regular dodecahedron:~} 
$D$ forms an $\epsilon$-expanded icosahedron 
and we check the rotation group of any $20$-set of $D$. 
We will show $\gamma(P') \in \varrho(P) = \{D_5, D_2\}$. 

Assume that $\gamma(P')$ is $D_k$ or $C_k$ for some $k \geq 3$. 
Because we cannot find any regular $\ell$-gon in $D$ for $\ell=4$ and 
$\ell \geq 6$, $k=3$ or $5$. 

Assume that $\gamma(P')$ is $D_3$ or $C_3$. 
Because $|P'|$ is not divided by $3$, 
$P'$ contains at least one $U_{D_3, 3}$ (cardinality $2$) or
$U_{C_3, 1}$ (cardinality $1$). 
Figure~\ref{fig:tri-eicosa} shows all possible regular triangles 
in $D$ and for each of the triangles 
there is no vertex on its $3$-fold axis. 
Hence $\gamma(P') \neq C_3, D_3$. 

\begin{figure}[t]
\centering 
\includegraphics[width=10.5cm]{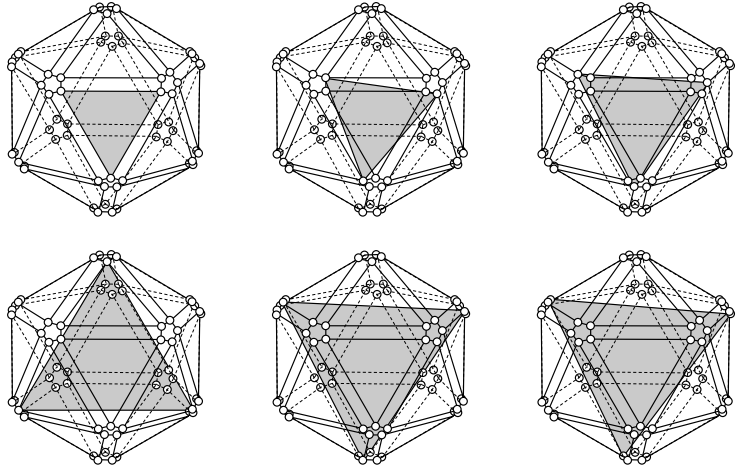}
\caption{Regular triangles in an $\epsilon$-expanded icosahedron.}
\label{fig:tri-eicosa}
\end{figure}

We check the 3D rotation groups. 
First, assume $\gamma(P') = T$, then the $T$-decomposition of $P'$ 
contains at least one $U_{T, 3}$ (cardinality $4$) or 
$U_{T, 2}$ (cardinality $6$) because $|P'|=20$ is not divided by $12$. 
Because the vertices of an $\epsilon$-expanded icosahedron is 
around the vertices of a regular icosahedron, 
we can find neither a regular tetrahedron nor a regular octahedron in $D$. 
Hence $\gamma(P') \neq T$. 

Second, assume $\gamma(P') = O$, then the $O$-decomposition of $P'$ 
consists of two $U_{O, 4}$'s (cardinality $6$) 
and one $U_{O, 3}$ (cardinality $8$) because $|P'|=20$. 
Because the vertices of an $\epsilon$-expanded icosahedron is 
around the vertices of a regular icosahedron, 
we can find neither a cube nor a regular octahedron in $D$. 
Hence $\gamma(P') \neq O$. 

Finally, assume $\gamma(P') = I$, then the $I$-decomposition of $P'$ 
forms a $U_{I, 3}$ (cardinality $20$) because $|P'|=20$. 
Because the vertices of an $\epsilon$-expanded icosahedron is 
around the vertices of a regular icosahedron, 
we cannot find a regular dodecahedron (i.e., its dual) in $D$. 
Hence $\gamma(P') \neq I$. 

Consequently, $\gamma(P') \preceq D_2$ or $D_5$.

\noindent{\bf Case G. $P$ is a icosidodecahedron:~} 
$D$ forms an $\epsilon$-truncated icosahedron 
and we check the rotation group of any $30$-set of $D$. 
We will show $\gamma(P') \in \varrho(P) = \{C_5, C_3\}$. 

Assume that $\gamma(P')$ is $D_k$ or $C_k$ for some $k \geq 2$. 
Because we cannot find any regular $\ell$-gon in $D$ 
for $\ell=4, 6, 7, \ldots$, $k=2, 3$, or $5$. 

Assume $\gamma(P') = D_2$, 
then the $D_2$-decomposition of $P'$ contains at least one 
$U_{D_2, 2}$ (cardinality $2$) because $|P'| = 30$ is not divided by $4$. 
However, $D_2$-decomposition of $P'$ does not consist of only 
$U_{D_2, 2}$'s, otherwise $P'$ consists of $15$ $U_{D_2, 2}$'s 
and there is at least one $2$-fold axes of $D_2$ 
that contains more than two $U_{D_2, 2}$'s (i.e., $4$ points). 
However, there is no line containing more than two points of $D$ 
since $D$ is spherical. 
Hence $P'$ contains at least one $U_{D_2, 1}$, i.e., 
a sphenoid, a regular tetrahedron, a rectangle, or a square. 
Now consider $\epsilon \to 0$. If $U_{D_2, 1}$ from a
sphenoid or a regular tetrahedron, 
when $\epsilon = 0$, $U_{D_2, 1}$ forms a regular tetrahedron or
a sphenoid in the base polyhedron (a regular icosahedron).
There are two types of possibilities for the edges of these
regular tetrahedron or a sphenoid in the base polyhedron:
the edges of the regular icosahedron or the edges in the interior of the 
regular icosahedron connecting two vertices. 
For any edge of the regular icosahedron, we cannot find any vertex 
on the $2$-fold axis of it.
For the edges in the interior of the regular icosahedron,
there are two possibilities as shown in Figure~\ref{fig:edges-icosa},
but we cannot find any vertex on the $2$-fold axis of it. 
When $\epsilon > 0$, the arrangement of edges of $U_{D_2, 1}$
slightly moves from the regular tetrahedron or the sphenoid,
but we cannot find two points forming $U_{D_2, 2}$ on the
$2$-fold axis of $U_{D_2, 1}$.

In the same way, if $U_{D_2,1}$ forms a rectangle or a square,
when $\epsilon=0$, $U_{D_2,1}$ forms a line formed by two vertices of
the base polyhedron or a rectangle formed by four vertices of the
base polyhedron. 
The possible edges are same as the previous case 
and in the same way, we cannot find any $U_{D_2, 2}$. 
Hence $\gamma(P') \neq D_2, C_2$. 
\begin{figure}[t]
\centering 
\subfigure[]
 {\includegraphics[width=3cm]{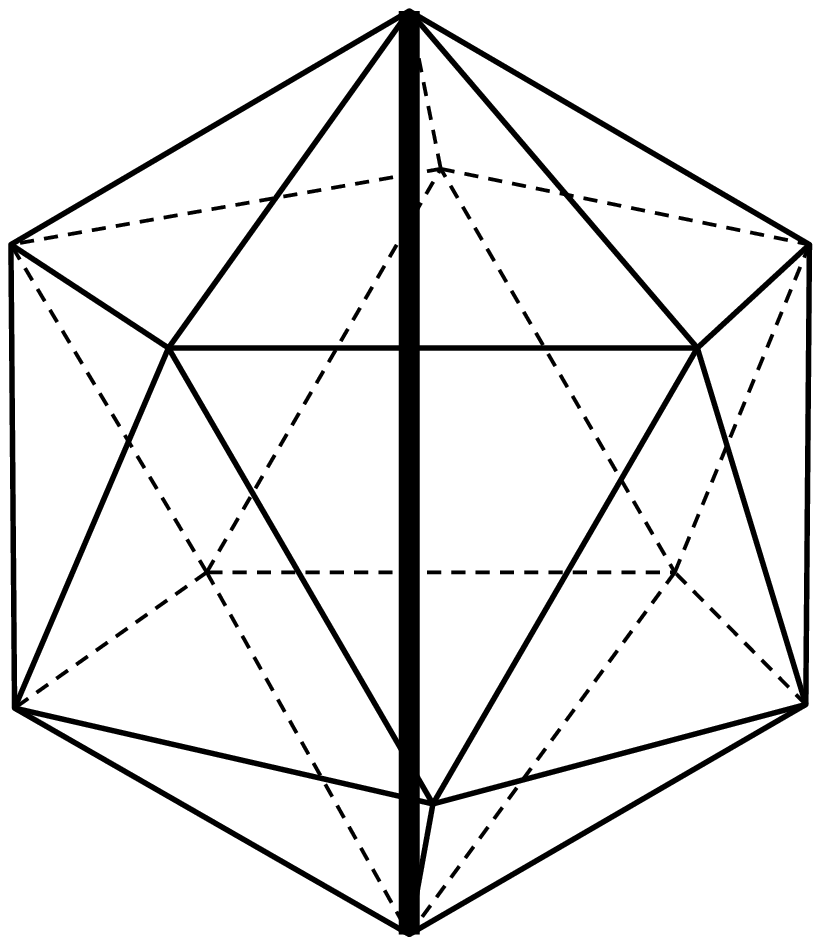}\label{fig:edges-icosa-1}}
\hspace{3mm}
\subfigure[]
 {\includegraphics[width=3cm]{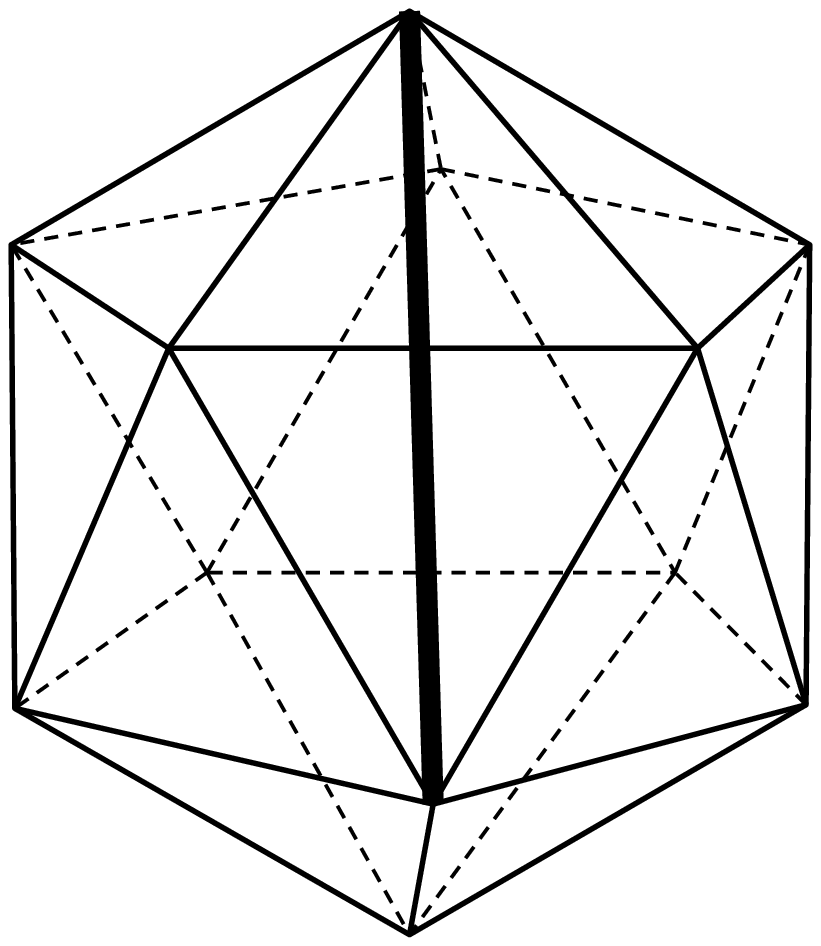}\label{fig:edges-icosa-2}}
\hspace{3mm}
\caption{Long edges in the interior of a regular icosahedron.} 
\label{fig:edges-icosa}
\end{figure}

Assume $\gamma(P') = D_5$, then 
$P'$ contains at least one regular pentagon. 
Figure~\ref{fig:penta-ticosa} shows all possible regular pentagons
in $D$ and any regular pentagon in $D$ is centered at a $5$-fold axis
of $\gamma(D) = I$. 
Because there is no point of $D$ on the $5$-fold axis of any
regular pentagon, the $D_5$-decomposition of $P'$
does not contain $U_{D_5, 5}$.  

\begin{figure}[t]
\centering 
\includegraphics[width=14cm]{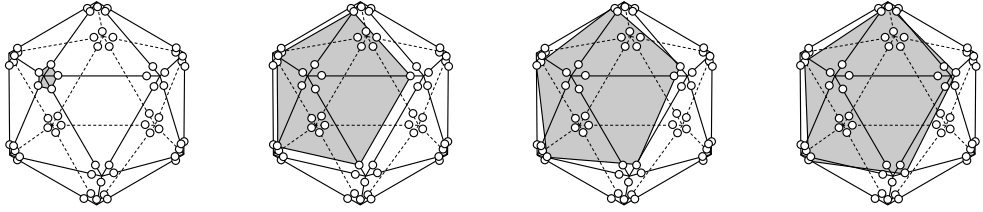}
\caption{Regular pentagons in an $\epsilon$-truncated icosahedron.}
\label{fig:penta-ticosa}
\end{figure}

 Assume that the $D_5$-decomposition of $P'$ consists of 
 three $U_{D_5, 1}$'s (cardinality $10$). 
 Thus the three $U_{D_5, 1}$'s share a $5$-rotation axis of $\gamma(D)$. 
 We divide the long edges of the $\epsilon$-truncated icosahedron into
 three groups base on a decomposition of the base polyhedron into 
 two regular pentagonal pyramids and one pentagonal anti-prism.  
The first group consists of the side edges of the two pyramids, 
the second group consists of the perimeter of the bases of
the pyramids (also the anti-prism), and the 
third group consists of the side edges of the anti-prism as shown in 
Figure~\ref{fig:cut-icosa-into5}.

\begin{figure}[t]
\centering 
\includegraphics[width=9cm]{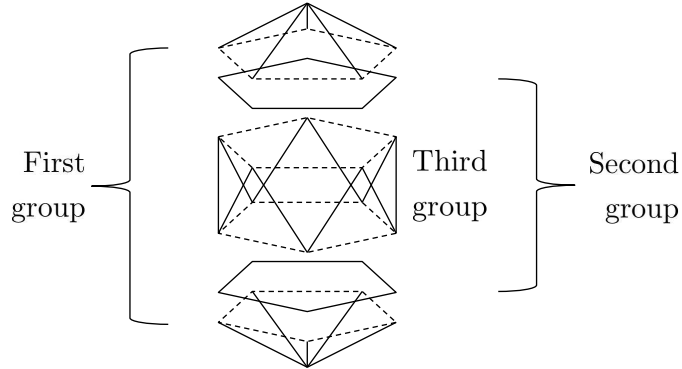}
\caption{Decomposition of the edges of an $\epsilon$-truncated
 icosahedron around a $5$-fold axis 
 by using the base polygon. The bold edges show the member edges
 of each group.}
\label{fig:cut-icosa-into5}
\end{figure}

 Remember that the two endpoints of
 each long edge of the $\epsilon$-truncated icosahedron
 are the destinations of one robot and $P'$ contains just one of them. 
 From the endpoints of the first group, 
 we can construct at most one pentagonal anti-prism
 (i.e., $U_{D_5, 1}$) and
 from the second group, 
 we can construct at most one pentagonal anti-prism. 
 Then we check the endpoints of the third group and we can form two 
 pentagonal anti-prisms each of which contains both endpoints of 
 long edges of $D$ because the anti-prisms should
 share an arrangement of $D_5$
 with the pentagonal anti-prisms formed by the first and the second groups.
 Thus $P'$ cannot contain three $U_{D_5,1}$'s.

 Assume that the $D_5$-decomposition of $P'$ contains $U_{D_5, 2}$
 (cardinality $5$). 
 Because $|P'|=30$, the number of $U_{D_5, 2}$'s is even and from 
 Figure~\ref{fig:penta-ticosa}, at most two regular pentagons from $D$
 are on the same plane. Thus we have two $U_{D_5, 2}$'s and
 two $U_{D_5, 1}$'s.
 The bases of $U_{D_5, 1}$'s are selected from the pentagons in
 Figure~\ref{fig:penta-ticosa} and opposite against the plane containing
 the two $U_{D_5, 2}$'s, but we cannot find any two $U_{D_5, 2}$'s
 on such a plane. 
 Thus we do not have this case. 
 Hence $\gamma(P') \neq D_5$. 
  
Assume $\gamma(P')=D_3$, then 
$P'$ contains at least one regular triangle. 
On the other hand, it does not contain $U_{D_3, 3}$ (cardinality $2$). 
Figure~\ref{fig:tri-ticosa} shows all regular triangles in $D$ and 
any regular triangle in $D$ is centered at a $3$-rotation axis
of $\gamma(D) = I$. 
Hence there is no point on the $3$-fold axis of any
regular pentagon in $D$ and the $D_3$-decomposition of $P'$
does not contain $U_{D_3, 3}$.  

\begin{figure}[t]
\centering 
\includegraphics[width=10.5cm]{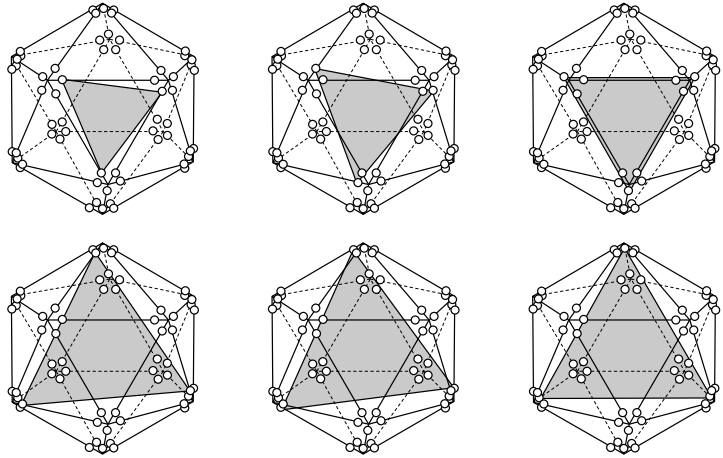}
\caption{Regular triangles in an $\epsilon$-truncated icosahedron.}
\label{fig:tri-ticosa}
\end{figure}

 Assume that the $D_3$-decomposition of $P'$ consists of five
 $U_{D_3, 1}$'s. 
 Hence the five $U_{D_3, 1}$'s share a $3$-rotation axis of $\gamma(D)$. 
 We divide the long edges of the $\epsilon$-truncated icosahedron based
 on a decomposition of the edges of a regular icosahedron 
 into four groups as shown in Figure~\ref{fig:cut-icosa-into3}.
 
\begin{figure}[t]
\centering 
\subfigure[First group]
{\includegraphics[width=2.5cm]{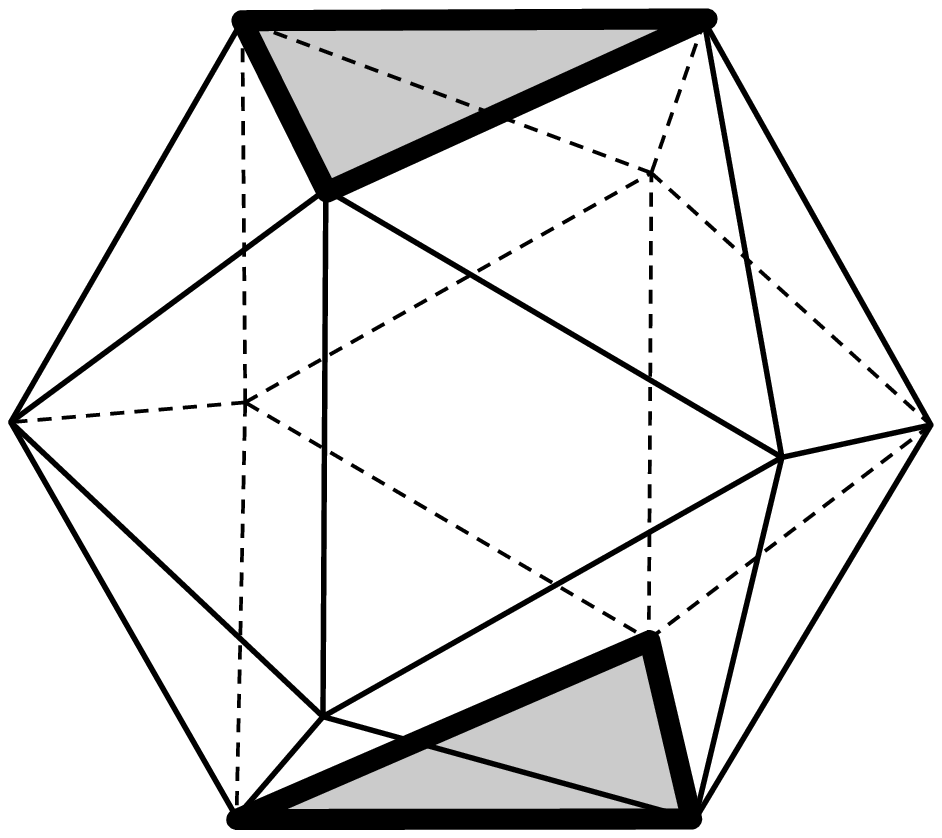}\label{fig:cut-icosa-into3-1}}
\hspace{3mm}
\subfigure[Second group]
{\includegraphics[width=2.5cm]{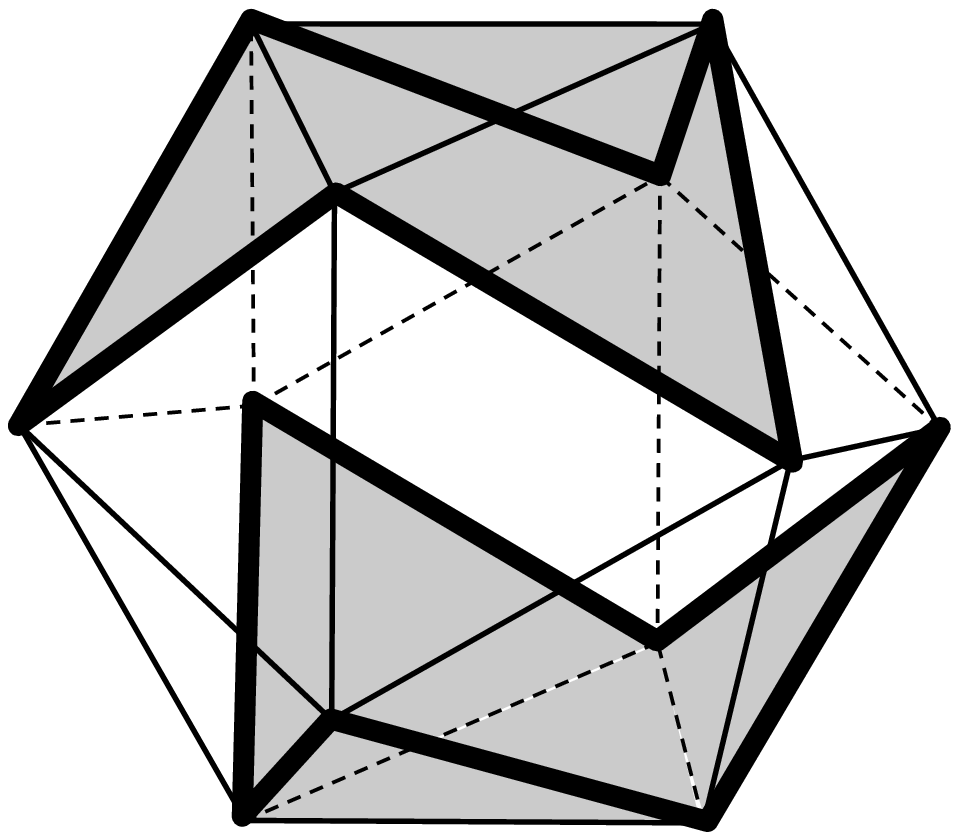}\label{fig:cut-icosa-into3-2}}
\hspace{3mm}
\subfigure[Third group]
{\includegraphics[width=2.5cm]{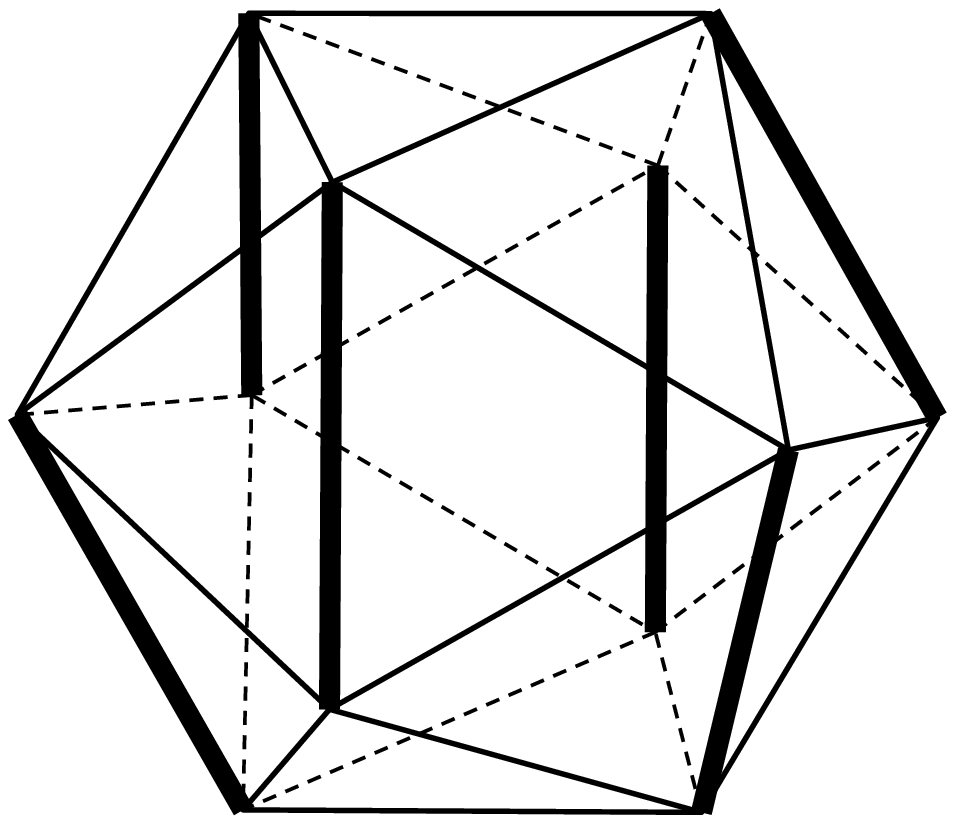}\label{fig:cut-icosa-into3-3}}
\hspace{3mm}
\subfigure[Fourth group]
{\includegraphics[width=2.5cm]{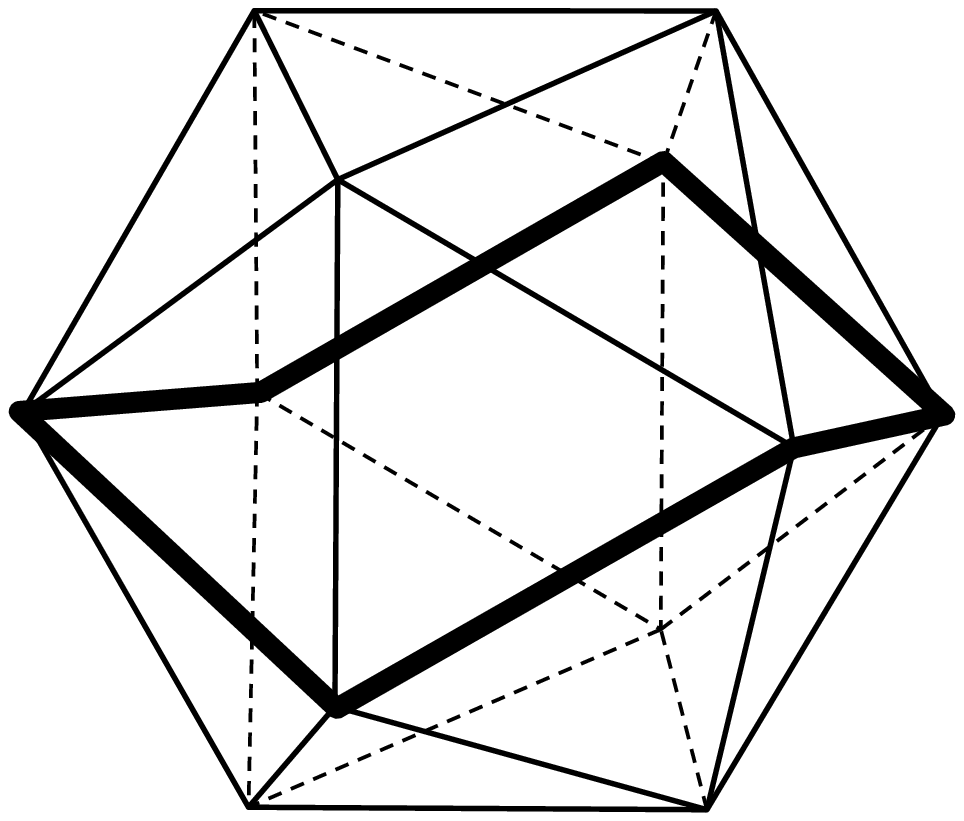}\label{fig:cut-icosa-into3-4}}
\hspace{3mm}
\caption{Decomposition of the edges of an $\epsilon$-truncated
 icosahedron around a $3$-fold axis 
 by using the base polygon. The bold edges show the member edges
 of each group.}
\label{fig:cut-icosa-into3}
\end{figure}

 Remember that the two endpoints of
 each long edge of the $\epsilon$-truncated icosahedron
 are the destinations of one robot and $P'$ contains just one of them. 
 From the endpoints of the first group,  
 we can construct at most one triangular anti-prism 
 (i.e., $U_{D_3, 1}$). 
 In the same way, we can construct at most two triangular anti-prisms
 and at most one triangular anti-prism from the second and the third group. 
 Then we check the endpoints of the fourth group and we can form
 two triangular anti-prisms each of which contains both endpoints of
 long edges of $D$ because they should share the arrangement of $D_3$
 with $U_{D_3, 1}$'s formed by the other groups.
 Thus $P'$ does not contain five $U_{D_3,1}$'s.

 Assume that the $D_3$-decomposition of $P'$ contains $U_{D_3, 2}$,
 then because $|P'|=30$, the number of $U_{D_3, 2}$ in
 the $D_3$-decomposition of $P'$ is even and they are on the same
 plane. From Figure~\ref{fig:tri-ticosa}, at most two triangles from $D$
 are on the same plane. Thus we have two $U_{D_3, 2}$'s and
 four $U_{D_3, 1}$'s.
 The bases of $U_{D_3, 1}$'s are selected from the regular triangles in
 Figure~\ref{fig:tri-ticosa} and opposite against the plane containing
 the two $U_{D_3, 2}$'s, but we cannot find any two $U_{D_3, 2}$'s
 on such a plane. 
 Thus we do not have this case. 
 Hence $\gamma(P') \neq D_3$. 

We check the 3D rotation groups. 
First, assume $\gamma(P') = T$, then the $T$-decomposition of $P'$ 
contains at least one $U_{T, 2}$ (cardinality $6$) 
because $|P'|=30$ is not divided by $4$. 
Because the vertices of an $\epsilon$-truncated icosahedron 
is around the vertices of a regular icosahedron, we cannot find any 
regular octahedron in $D$. 
Hence, $\gamma(P) \neq T$. 

Second, assume $\gamma(P') = O$, then the $O$-decomposition of $P'$ 
contains at least one $U_{O, 4}$ (cardinality $6$) 
 since $|P'|=30$ is not divided by $4$.
 In the same way as the previous case, we cannot find any 
regular octahedron in $D$. 
Hence $\gamma(P) \neq O$. 

Finally, assume $\gamma(P') = I$, then the $I$-decomposition of $P'$ 
forms $U_{I, 2}$ (cardinality $30$) since $|P'|=30$. 
Because the vertices of an $\epsilon$-truncated icosahedron 
is around the vertices of a regular icosahedron, we cannot find any 
icosidodecahedron in $D$. 
Hence $\gamma(P) \neq I$. 

Consequently, $\gamma(P') \preceq C_3$ or $C_5$. 
\qed 
\end{proof}

\subsection{Composition of transitive sets of points} 
\label{subsec:composition}

In this section, we show algorithm $\psi_{SYM}$ that
translates an initial configuration $P$ to another configuration $P'$
that satisfies $\gamma(P') \in \varrho(P)$.
When $P$ is transitive regarding a 3D rotation group,
$\psi_{SYM}$ makes the
robots execute the ``go-to-center'' algorithm
(Algorithm~\ref{alg:go-to-center}) and its correctness
is already shown in Section~\ref{subsec:base}.
In this section, we consider other initial configurations
where $\gamma(P) \not\in \varrho(P)$. 

For example, consider the case where $P$ consists of a 
cube and a regular octahedron with $\gamma(P) = O$ 
(Figure~\ref{fig:composi-1}). 
In this case, unoccupied rotation axes are the six 
$2$-fold axes, and $\varrho(P) = \{C_2\}$, 
because there are no three $2$-fold axes 
perpendicular to each other. 
As we have already shown, each of these regular polyhedra 
can show its symmetricity by executing Algorithm~\ref{alg:go-to-center} 
and the robots on the $3$-fold axes and $4$-fold 
axes eliminate these rotation axes. 
However there are executions that do not keep the 
the smallest enclosing ball of the robots (Figure~\ref{fig:composi-2}) 
and we cannot directly discuss the composite symmetricity.
Rather, algorithm $\psi_{SYM}$ makes each element of the
$\gamma(P)$-decomposition 
show its symmetricity one by one with keeping the 
smallest enclosing ball unchanged. 
For example, starting from an initial configuration shown in 
Figure~\ref{fig:composi-1}, 
the proposed algorithm first makes 
the robots forming a regular octahedron execute the go-to-center algorithm 
with the robots forming a cube keeping the smallest enclosing circle. 
Let $P'$ be the configuration after the regular octahedron is broken.
Because the robots formed the regular octahedron approaches to the
center, the robots forming the cube do not move during the transition from $P$
to $P'$ and they also keep the rotation axes of $P$, 
i.e., any rotation applicable to $P'$ should also applicable 
to the cube. 
Hence, we can check $\gamma(P')$ by checking the rotation axes of $O$. 
For example, because a robot on a $4$-fold axes in $P$ 
(i.e., a robot forming a regular octahedron) left the rotation axes 
and there are no three corresponding robots to keep the rotation axes, 
$\gamma(P')$ does not have any $4$-rotation axes. 
Hence, $\gamma(P') \preceq D_3$. 
When $\gamma(P')=D_3$, 
$P'$ is a configuration where the principal rotation axis is 
occupied by two robots  forming the cube. 
The robots can translate $P'$ to another configuration $P''$ 
with $\gamma(P'') \preceq C_2$ by the two robots leaving the principal
axis. 
Other possibility is $\gamma(P')=C_3$, the robots can recognize 
a single robot on the single rotation axis 
and the robots can translate $P'$ to $P''$ where $\gamma(P'') = C_1$ 
by this robot leaving the axis. 
In this way, the robots can translate $P$ to another configuration 
$P''$ that satisfies $\gamma(P'') \preceq C_2 \in \varrho(P)$. 

\begin{figure}[t]
\centering 
\subfigure[]
{\includegraphics[width=3cm]{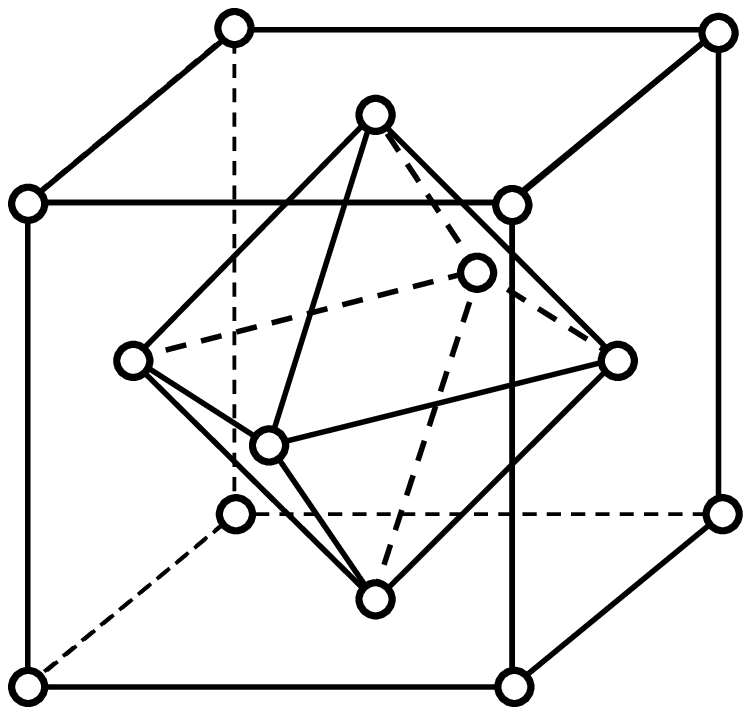}\label{fig:composi-1}}
\hspace{3mm}
\subfigure[]
{\includegraphics[width=3cm]{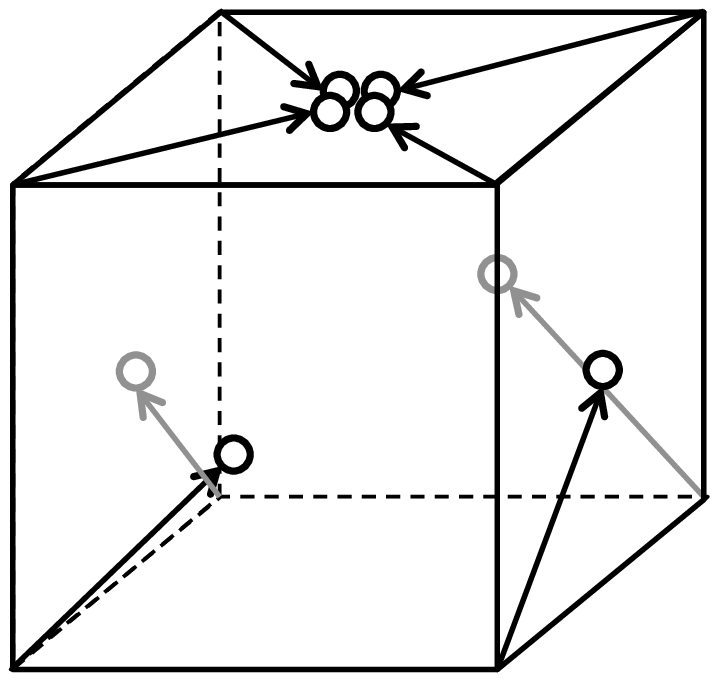}\label{fig:composi-2}}
\caption{Symmetry breaking from a composite initial configuration.}
\label{fig:composi}
\end{figure}

\subsubsection{Algorithm $\psi_{SYM}$}

The proposed algorithm $\psi_{SYM}$ is shown in
Algorithm~\ref{alg:psi-sym}.
Algorithm $\psi_{SYM}$ 
makes the robots on each type (i.e., fold) of the rotation axes 
leave the positions by repeating the following
procedure: 
$\psi_{SYM}$ first selects an element of the
$\gamma(P)$-decomposition of the current configuration $P$
that is on the rotation axes $\gamma(P)$ and make the element
shrink toward $b(P)$ so that the other robots keep the smallest
enclosing ball (Procedure {\texttt shrink} in
Algorithm~\ref{proc:psi-sym}).
This movement does not change the rotation group of the positions of
robots. 
Let $P'$ be a resulting configuration. 
Then, $\psi_{SYM}$ makes the innermost robots leave the rotation axes
of $\gamma(P')$. 
Depending on $\gamma(P')$, there are three procedures; 
{\texttt go-to-sphere} (when $\gamma(P')$ is cyclic or $\gamma(P')$ is
dihedral and its principal axis is occupied),
{\texttt go-to-corner} (when $\gamma(P')$ is dihedral and its secondary
axes are occupied,)
{\texttt go-to-center} (Algorithm~\ref{alg:go-to-center},
when $\gamma(P')$ is a 3D rotation group). 
These three procedures send the robots on $I(P')$ to some point
in the interior of $I(P')$ and not on any rotation axes of
$\gamma(P')$. 
By repeating these two phases, $\psi_{SYM}$ removes 
occupied rotation axes. 

Any terminal configuration $P$ of Algorithm~\ref{alg:psi-sym} satisfies 
one of the following two properties:
\begin{description}
\item[(i)] If $\gamma(P) \neq C_1$, then $P$ is a regular $n$-gon
or no robot is on the rotation axes of $\gamma(P)$. 
\item[(ii)] $\gamma(P)=C_1$. 
\end{description}
When an arbitrary configuration $P$ satisfies one of the above two
conditions, $\psi_{SYM}$ outputs $\emptyset$ at all robots.  
For any terminal configuration $P$, 
the $\gamma(P)$-decomposition of $P$ 
consists of elements of size $|\gamma(P)|$, 
which is shown to be useful 
in the pattern formation algorithm in Section~\ref{sec:suf}.  
Here, $C_1$-decomposition of $P$ divides $P$ into $n$ subsets. 

In $\psi_{SYM}$, 
we use the following notations. 
Let $P$ and $\{P_1, P_2, \ldots, P_m\}$ be an initial configuration and 
its $\gamma(P)$-decomposition. 
\begin{itemize}
\item $P_{ip}$: The element on the (principal) axis 
when $\gamma(P)$ is cyclic or dihedral. 
If there are multiple such elements, 
$\psi_{SYM}$ selects the element with the minimum index. 
\item $P_{is}$: The element on the secondary axis 
when $\gamma(P)$ is dihedral.  
If there are multiple such elements, 
$\psi_{SYM}$ selects the element with the minimum index. 
\item $P_{imax}$: The element on some occupied rotation axes 
with the maximum fold 
when $\gamma(P) \in \{T, O, I\}$. 
If there are multiple such elements, 
$\psi_{SYM}$ selects the element with the minimum index. 
\end{itemize}
For example, if $P$ consists of a regular octahedron and a cube 
(Figure~\ref{fig:composi-1}), 
all the $3$-fold axes and $4$-fold axes are occupied, and 
$P_{imax}$ is the element forming the regular octahedron.
The robots can agree on $P_{ip}$, $P_{is}$, and $P_{imax}$ if any 
irrespective of local coordinate systems. 

When $\gamma(P)$ is dihedral, $\psi_{SYM}$ uses the 
{\em reference prism} that inscribed in a ball $Ball(b(P), rad(I(P))/2)$. 
Consider a cylinder with radius is $rad(I(P)/4)$ that is 
parallel to the principal axes and inscribed in $I(P)$. 
The corners of the reference prism are the intersection of this 
cylinder and the plane formed by the principal axes and a 
secondary axes (Figure~\ref{fig:ref-prism}). 
Hence, when $\gamma(P) = D_k$, the bases of the reference prism 
are regular $2k$-gons. 

\begin{figure}[t]
\centering 
\includegraphics[width=2.5cm]{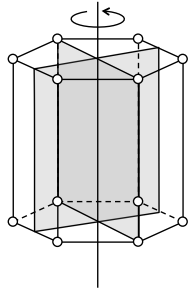}
\caption{Reference prism for $D_6$. There are $6$ planes formed by 
the principal axes and a $2$-fold axes, and 
the reference prism consists of regular $12$-gon faces. }
\label{fig:ref-prism}
\end{figure}

\begin{algorithm}
\caption{$\psi_{SYM}$ for robot $r_i \in R$}
\label{alg:psi-sym} 
\begin{tabbing}
xxx \= xxx \= xxx \= xxx \= xxx \= xxx \= xxx \= xxx \= xxx \= xxx
\kill 
{\bf Notation} \\ 
\> $P$: The positions of robots observed in $Z_i$.\\ 
\> $\{P_1, P_2, \ldots, P_m\}$: $\gamma(P)$-decomposition of $P$. \\ 
\> $p_i$: current position of $r_i$. \\ 
\\ 
{\bf Algorithm}  \\ 
\> {\bf If} $p_i = b(P)$ {\bf then} \\
\> \> Execute \texttt{go-to-sphere} \\ 
\> {\bf Else} \\ 
\> \> {\bf If} $\gamma(P) \neq C_k$ and 
$P \cap B(P) \neq P_m$ {\bf then} \\ 
\> \> \> {\bf If} $p_i \in P_m$ {\bf then} \\ 
\> \> \> \> Execute \texttt{Expand} \\ 
\> \> \> \> // To guarantee that there are at least two robots 
on the smallest enclosing ball of the robots. \\ 
\> \> \> {\bf Endif} \\ 
\> \> {\bf Else} \\ 
\> \> \> {\bf Switch} ($\gamma(P)$) {\bf do} \\ 

\> \> \> \> {\bf Case} $C_k$ ($k \geq 1$): \\ 
\> \> \> \> \> {\bf If} $k \neq 1$ and
 $P_{ip}$ is determined and $p_i \in P_{ip}$ {\bf then} \\ 
\> \> \> \> \> \> {\bf If} $P_{ip} \neq P_{1}$ {\bf then} \\ 
\> \> \> \> \> \> \> Execute \texttt{Shrink}($P \setminus P_{ip}$). \\ 
\> \> \> \> \> \> {\bf Else} \\  
\> \> \> \> \> \> \> Execute \texttt{go-to-sphere} \\ 
\> \> \> \> \> \> {\bf Endif} \\ 
\> \> \> \> \> {\bf Endif} \\ 

\> \> \> \> {\bf Case} $D_{\ell}$ ($\ell \geq 2$): \\ 
\> \> \> \> \> {\bf If} $P_{ip}$ is determined and $p_i \in P_{ip}$ {\bf then} \\ 
\> \> \> \> \> \> {\bf If} $P_{ip} \neq P_1$ {\bf then} \\ 
\> \> \> \> \> \> \> Execute \texttt{Shrink}($P$). \\ 
\> \> \> \> \> \> {\bf Else} \\  
\> \> \> \> \> \> \> Execute \texttt{go-to-corner} \\ 
\> \> \> \> \> \> {\bf Endif} \\ 
\> \> \> \> \> {\bf Else} \\ 
\> \> \> \> \> \> {\bf If} $P_{is}$ is determined, $P \neq P_{is}$ and 
$p_i \in P_{is}$ {\bf then} \\ 
\> \> \> \> \> \> \> {\bf If} $P_{is} \neq P_1$ {\bf then} \\ 
\> \> \> \> \> \> \> \> Execute \texttt{Shrink}$(P)$. \\ 
\> \> \> \> \> \> \> {\bf Else} \\  
\> \> \> \> \> \> \> \> Execute \texttt{go-to-corner} \\ 
\> \> \> \> \> \> \> {\bf Endif} \\ 
\> \> \> \> \> \> {\bf Endif} \\ 
\> \> \> \> \> {\bf Endif} \\ 

\> \> \> \> {\bf Default} // $\gamma(P) \in \{T, O, I\}$: \\
\> \> \> \> \> {\bf If} $P_{imax}$ is determined and $p_i \in P_{imax}$
 {\bf then} \\ 
\> \> \> \> \> \> {\bf If} $P_{imax} \neq P_1$ {\bf then} \\ 
\> \> \> \> \> \> \> Execute \texttt{Shrink}($P$). \\ 
\> \> \> \> \> \> {\bf Else} \\  
 \> \> \> \> \> \> \> Execute \texttt{go-to-center}($P_{imax}$)
// Algorithm~\ref{alg:go-to-center}. \\ 
\> \> \> \> \> \> {\bf Endif} \\ 
\> \> \> \> \> {\bf Endif} \\ 

\> \> \> {\bf Enddo} \\ 

\> \> {\bf Endif} \\ 
\> {\bf Endif}
\end{tabbing}
\end{algorithm}

\begin{algorithm}
\caption{Procedure for Algorithm~\ref{alg:psi-sym}}
\label{proc:psi-sym} 
\begin{tabbing}
xxx \= xxx \= xxx \= xxx \= xxx \= xxx \= xxx \= xxx \= xxx \= xxx
\kill 
\> \texttt{Expand} \\ 
\> \> Let $d_i$ be the intersection of $Ball(b(P), 2rad(I(P)))$ 
and the half line from $b(P)$ that passes $p_i$. \\ 
\> \> Go to $d_i$. \\ 
\\ 
\> \texttt{Shrink}($Q$) \\ 
\> \> Let $d_i$ be the intersection of $Ball(b(Q), I(Q)/2)$ 
and the line $\overline{p_i b(Q)}$. \\ 
\> \> Go to $d_i$. \\  
\\ 
\> \texttt{go-to-sphere} \\ 
\> \> Select arbitrary point on $Ball(b(P), rad(I(P)/2)$ with avoiding 
 the intersections \\
\> \> with rotation axes of $\gamma(P)$ and the 
equator (if $\gamma(P)$ is a 2D rotation group). \\ 
\> \> Move to the point. \\ 
\\ 
\> \texttt{go-to-corner} \\ 
\> \> Select a nearest vertex of the reference prism. \\ 
\> \> Go to the selected vertex. 
\end{tabbing}
\end{algorithm}

\subsubsection{Correctness of $\psi_{SYM}$}

We show the correctness of Algorithm~\ref{alg:psi-sym}. 
Let $P(0), P(1), P(2), \ldots $ be an execution of 
Algorithm~\ref{alg:psi-sym} from an initial configuration $P(0)$. 
If $P(0)$ is transitive, Algorithm~\ref{alg:psi-sym} executes
Algorithm~\ref{alg:go-to-center} and its correctness is already shown in
Section~\ref{subsec:base}. 
In the following, we assume that $P(0)$ is not transitive and
$\gamma(P(0))$-decomposition of $P(0)$ consists of at 
least two elements.
We first show that $\psi_{SYM}$ does
not allow the robots to form one transitive set of points and 
for any $t \geq 0$, 
$\gamma(P(t))$-decomposition of $P(t)$ consists of at least two elements
(Lemma~\ref{lemma:keep-vt}). 

For each transition from $P(t)$ to $P(t+1)$, 
if some robots move, 
they execute one of the five procedures \texttt{Expand}, 
\texttt{Shrink}, \texttt{go-to-sphere}, \texttt{go-to-corner} 
(shown in Algorithm~\ref{proc:psi-sym}) and 
\texttt{go-to-center} (Algorithm~\ref{alg:go-to-center}).  
The moving robots execute the same procedure,
because they can agree on $\gamma(P(t))$ and 
the $\gamma(P(t))$-decomposition of $P(t)$ that determines 
the procedure they execute. 

We will show that the rotation axes 
occupied by robots in $P(0)$ are gradually eliminated  
without adding any new rotation axis. 
Let $\gamma^+(P(t))$ ($\gamma^-(P(t))$, respectively) be the 
arrangement of occupied rotation axes (unoccupied rotation axes) 
of $\gamma(P(t))$. 
We also consider each of them as a set of rotation axes. 
In the following, we focus not on the fold of a rotation axes, 
but on the existence of the rotation axes. 
Even when the fold of a rotation axis becomes smaller, 
during a transition from $P(t)$ to $P(t+1)$,
we say the rotation axis remains in $\gamma(P(t+1))$. 

We will show the following two properties. 
\begin{enumerate}
\item When robots execute \texttt{go-to-sphere}, 
\texttt{go-to-corner}, or \texttt{go-to-center}, 
$\gamma^+(P(t+1))$ is a proper subset of $\gamma^+(P(t))$. 
\item $\gamma^-(P(t+1))$ is a subset of $\gamma^-(P(t))$.
\end{enumerate}
Starting from an initial configuration $P(0)$, the first property 
guarantees that all occupied rotation axes are gradually eliminated 
because $\psi_{SYM}$ executes these three 
procedures as long as there is an occupied rotation axis. 
The second property guarantees that the movement of robots does not 
add any new rotation axis. 
Hence the robot system eventually reaches a configuration $P(t^*)$ 
that have no occupied rotation axes and 
$\gamma(P(t^*)) \preceq \gamma^-(P(0))$. 
This property of $\gamma(P(t^*))$ satisfies the definition of 
the symmetricity and $\psi_{SYM}$ succeeds in making 
robots show their symmetricity. 

\begin{lemma}
\label{lemma:keep-vt}
Let $P(0), P(1), P(2), \ldots$ be an execution of 
Algorithm~\ref{alg:psi-sym} by oblivious FSYNC robots
from an initial configuration $P(0)$. 
If $P(0)$ is not transitive, then 
$P(t)$ is not transitive for all $t \geq 0$. 
\end{lemma}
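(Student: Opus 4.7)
The plan is to argue by induction on $t$, using the following simple fact: any orbit of a rotation group $G \in {\mathbb S}$ through a seed point $s$ lies on the sphere of radius $\|s - b\|$ around the common intersection $b$ of the rotation axes of $G$. Hence if I can exhibit two robots of $P(t+1)$ at distinct distances from $b(P(t+1))$, the $\gamma(P(t+1))$-decomposition contains at least two orbits and $P(t+1)$ is not transitive.

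The base case $t=0$ is the hypothesis. For the inductive step, suppose the $\gamma(P(t))$-decomposition $\{P_1,\ldots,P_m\}$ has $m \ge 2$. Inspecting Algorithm~\ref{alg:psi-sym}, every executed procedure moves a $\gamma(P(t))$-symmetric proper subset $S \subsetneq P(t)$ of robots onto a single sphere around $b(P(t))$, while keeping all other robots in place. I would first establish the subclaim $b(P(t+1)) = b(P(t))$: the $\gamma(P(t))$-symmetry of $S$ prevents the center from drifting, and the \texttt{Expand} preprocessing guarantees that at least two non-moving robots remain on $B(P(t))$ whenever an inward movement occurs, so the outer radius is pinned. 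With the center fixed, I would then examine each procedure and point to a mover/non-mover pair at different distances from $b(P(t))$. For \texttt{Expand}, $P_m$ moves outward to distance $2\,rad(I(P(t)))$, strictly greater than the original distance of every robot of $P_j$ ($j<m$). For \texttt{Shrink}$(Q)$, the movers land on the sphere of radius $rad(I(Q))/2$, strictly inside $I(Q)$, while each robot of $Q$ stays at distance $\ge rad(I(Q))$. For \texttt{go-to-sphere}, \texttt{go-to-corner}, and \texttt{go-to-center}, only the innermost robots on $I(P(t))$ move, landing strictly inside $I(P(t))$, while some $P_j$ with $j > 1$ keeps its original (larger) distance. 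In every case, two robots at distinct distances from $b(P(t+1)) = b(P(t))$ witness non-transitivity. If $\psi_{SYM}$ outputs $\emptyset$ then $P(t+1) = P(t)$ and the conclusion is trivial.

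The main obstacle is the center-preservation subclaim. One must rule out that a \texttt{Shrink} phase leaves too few outer robots (in which case $b(P(t+1))$ could migrate), and one must verify that the special $p_i = b(P)$ branch --- which sends a single central robot onto $Ball(b(P),rad(I(P))/2)$ --- does not shift the center either. Both follow from the $\gamma(P(t))$-symmetry of the moving subset together with the guarantee built into \texttt{Expand} that at least two robots remain on $B(P(t))$ throughout; the remaining distance-comparison arguments per procedure are then routine.
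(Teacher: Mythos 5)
Your proposal is correct and follows essentially the same route as the paper's own proof: the paper also argues that the moving robots always form a single element of the $\gamma(P(t))$-decomposition and are sent either to the exterior of $B(P(t))$ (\texttt{Expand}) or to the interior of $I(P(t))$ (the other four procedures), so no element ever lands on the ball $Ball(P_j(t))$ of another element and the decomposition keeps at least two orbits. Your explicit treatment of the center-preservation subclaim and the ``two robots at distinct radii'' witness merely spells out what the paper leaves implicit.
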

\begin{proof}
 Let $\{P_1(t), P_2(t), \ldots, P_{m(t)}(t)\}$ be the
 $\gamma(P(t))$-decomposition of $P(t)$.
 Because $\psi_{SYM}$ outputs nothing at any robot when
 $\gamma(P(t))=C_1$, we assume $\gamma(P(t)) \neq C_1$.
 Hence, each $Ball(P_i(t))$ is a ball centered at $b(P(t))$. 
 We will show that $\psi_{SYM}$ does not move any
 $P_i(t)$ to $Ball(P_j(t))$ for any $i \neq j$.
 
 When $t=0$, from the assumption, $P(0)$ is not transitive and
 the $\gamma(P(0))$-decomposition of $P(0)$ consists of 
 at least two elements.
 Assume $\psi_{SYM}$ selects $P_i(0)$ to move the robots forming it.
 Then these robots synchronously execute one of the 
 five procedures, \texttt{Expand}, \texttt{Shrink},
 \texttt{go-to-sphere}, \texttt{go-to-corner}, and
 \texttt{go-to center}.
 When these robots execute \texttt{Expand},
 they move to the exterior of $B(P(0))$.
 In the same way, when they execute \texttt{Shrink},
 they move to the interior of $I(P(0))$.
 When they execute  \texttt{go-to-sphere}, \texttt{go-to-corner}, or 
 \texttt{go-to-center},
 they are on $I(P(0))$ and their destinations are also in the interior of
 $I(P(0))$.
 Hence during the transition from $P(0)$ to $P(1)$,
 $\psi_{SYM}$ does not move any
 $P_i(0)$ to $Ball(P_j(0))$ for any $i \neq j$.

 In the same way, for any transition from $P(t)$ to $P(t+1)$,
 $\psi_{SYM}$ does not move any
 $P_i(t)$ to $Ball(P_j(t))$ for any $i \neq j$,
 because the destinations of the five procedures are selected
 based on $I(P(t))$ and $B(P(t))$. 
 Hence we obtain the lemma. 
 \qed
\end{proof}

\begin{lemma}
Let $P(0), P(1), P(2), \ldots$ be an execution of 
Algorithm~\ref{alg:psi-sym} by oblivious FSYNC robots
from a non transitive 
initial configuration $P(0)$. 
If some robots execute \texttt{Shrink} or \texttt{Expand} during 
the transition from $P(t)$ to $P(t+1)$, then we have 
$\gamma^+(P(t+1)) = \gamma^+(P(t))$ and 
$\gamma^-(P(t+1)) = \gamma^-(P(t))$.  
\end{lemma}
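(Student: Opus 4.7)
The plan is to exploit the fact that both \texttt{Shrink} and \texttt{Expand} are purely radial transformations centered at $b(P(t))$, applied to a subset of robots that is already a union of orbits under $\gamma(P(t))$. Because radial scaling about a point commutes with every rotation whose axis passes through that point, the symmetry of $P(t)$ should transfer intact to $P(t+1)$, and the invertibility of the scaling should give the converse containment.

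First I would identify the moving set $M \subseteq P(t)$ in each branch of $\psi_{SYM}$ that invokes one of these procedures. Inspecting Algorithm~\ref{alg:psi-sym}, $M$ is one of $P_{ip}$, $P_{is}$, $P_{imax}$, or $P_m$, each of which is a single element of the $\gamma(P(t))$-decomposition and hence $\gamma(P(t))$-invariant and centered at $b(P(t))$. I would also check that the reference set $Q$ fed to \texttt{Shrink} satisfies $b(Q) = b(P(t))$ and $I(Q)$ is well-defined: this holds for $Q = P$ trivially, and for $Q = P \setminus P_{ip}$ (used only in the $C_k$ branch) because $P_{ip}$ lies on the single rotation axis and is symmetric about $b(P(t))$, so its removal shifts neither the smallest enclosing ball nor the innermost empty ball's center.

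Next I would prove $\gamma(P(t+1)) = \gamma(P(t))$ in two directions. For $\gamma(P(t+1)) \succeq \gamma(P(t))$: pick any $g \in \gamma(P(t))$. Since $g$ fixes $b(P(t))$, the radial scaling applied to $M$ is $g$-equivariant, so $g$ permutes the new positions in $M'$; the non-moving robots are unchanged, and $g$ already permuted them. Hence $g$ acts on $P(t+1)$. For the reverse direction $\gamma(P(t+1)) \preceq \gamma(P(t))$: every $h \in \gamma(P(t+1))$ is an isometry fixing $b(P(t+1)) = b(P(t))$, so it preserves distance from this center. Since \texttt{Shrink} sends $M$ strictly inside $I(P(t))$ while the non-moving robots lie on or outside $I(P(t))$, and \texttt{Expand} symmetrically pushes $M$ strictly outside $B(P(t))$, the moved and non-moved subsets are separated by radius in $P(t+1)$. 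Therefore $h$ permutes them separately, and applying the inverse radial scaling to its action on $M'$ recovers an action on $M$; combined with its action on the unchanged non-moving robots, this shows $h \in \gamma(P(t))$.

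Finally, the preservation of $\gamma^+$ and $\gamma^-$ follows at once: every rotation axis of $\gamma(P(t)) = \gamma(P(t+1))$ passes through $b(P(t))$, and the radial scaling maps each ray from $b(P(t))$ to itself. Hence a robot on (respectively off) any axis stays on (respectively off) that same axis, so the partition of axes into occupied and unoccupied is preserved setwise, yielding $\gamma^+(P(t+1)) = \gamma^+(P(t))$ and $\gamma^-(P(t+1)) = \gamma^-(P(t))$. The main obstacle I expect is the converse inclusion in the group equality: I must verify carefully that in every branch that calls \texttt{Shrink} or \texttt{Expand}, the moved robots land in a radial shell strictly disjoint from that of the non-moving robots, so that no spurious rotation of $P(t+1)$ can mix the two subsets. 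This amounts to a case analysis of the radii set by the procedures (namely $rad(I(Q))/2$ for \texttt{Shrink} and $2\,rad(I(P))$ for \texttt{Expand}) against the extremal distances of the non-moving robots, which is straightforward but must be done per branch.
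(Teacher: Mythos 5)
Your proof is correct and follows essentially the same route as the paper's: the moving robots form a single transitive element of the $\gamma(P(t))$-decomposition, the radial motion about $b(P(t))$ is equivariant under every rotation fixing that center (so no axis of $\gamma(P(t))$ is lost), and it maps each ray from $b(P(t))$ to itself (so every occupied axis stays occupied and every unoccupied axis stays unoccupied). Your radial-shell-separation plus inverse-scaling argument for $\gamma(P(t+1)) \preceq \gamma(P(t))$ is actually more explicit than the paper's bare assertion that radial movement ``does not add any new rotation axis,'' but two of your side-claims are shakier than you present them --- $b(P \setminus P_{ip}) = b(P)$ does not follow from $P_{ip}$ being ``symmetric about $b(P(t))$'' (when $\gamma(P)$ is cyclic, $P_{ip}$ is a single point on the axis), and \texttt{Expand}'s literal target radius $2\,rad(I(P))$ need not exceed $rad(B(P))$ --- so the per-branch verification you flag at the end is genuinely needed rather than ``straightforward.''
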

\begin{proof}
As already mentioned, the robots that execute 
\texttt{Shrink} or \texttt{Expand} during the transition from 
$P(t)$ to $P(t+1)$ form an element of the $\gamma(P(t))$-decomposition 
of $P(t)$, and they agree on which procedure they execute. 
Let $R'$ and $P'$ be the set of robots 
that execute one of these two procedures 
during the transition from $P(t)$ to $P(t+1)$, 
and their positions in $P(t)$. 

 If \texttt{Shrink} is executed,
 because the movement of robots of $R'$ are radial against $b(P(t))$,
 it does not add any new rotation axis to $\gamma(P(t))$. 
 On the other hand, there is at least another 
 element of the $\gamma(P(t))$-decomposition of $P(t)$ in
 $P(t) \setminus P'$ and they keep the rotation axes of $P(t)$.
 When $\gamma(P(t))$ is cyclic,
 there is another element that forms $U_{\gamma(P(t)), 1}$
 (otherwise $\gamma(P(t))$ is not cyclic),
 and keeps the rotation axes of $\gamma(P(t))$.
 When $\gamma(P(t))$ is dihedral,
 there is another element that forms $U_{\gamma(P(t)), 1}$ or
 $U_{\gamma(P(t)), 2}$ (otherwise $\gamma(P(t))$ is not dihedral),
 and keeps the rotation axes of $\gamma(P(t))$.
 When $\gamma(P(t)) \in \{T, O, I\}$, there is another
 element that forms $U_{\gamma(P(t)), \mu}$ for $\mu < |\gamma(P(t))|$
 and keeps the rotation axes of $\gamma(P(t))$.
Hence this movement of $R'$ does not change the occupied 
and unoccupied rotation axis. 

If \texttt{Expand} is executed, $\gamma(P(t))$ is a dihedral 
group or a 3D rotation group. 
In the same way, the movement is radially against $b(P(t))$, 
and it does not change the rotation group and 
occupied and unoccupied rotation axes. 
\qed 
\end{proof}

\begin{lemma}
\label{lemma:eliminate}
Let $P(0), P(1), P(2), \ldots$ be an execution of 
 Algorithm~\ref{alg:psi-sym} by oblivious FSYNC robots
 from a non transitive 
initial configuration $P(0)$. 
If some robots execute \texttt{go-to-center}, \texttt{go-to-corner}, 
or \texttt{go-to-sphere} during the transition from $P(t)$ to $P(t+1)$, 
then we have 
$\gamma^+(P(t+1)) \subset \gamma^+(P(t))$ and 
$\gamma^-(P(t+1)) \subseteq \gamma^-(P(t))$.
\end{lemma}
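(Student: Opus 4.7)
The plan is to establish both containments by combining a radial (shell) argument with the specific design features of the three leave-axis procedures. I would first pin down the moving robots, then verify the destinations lie strictly off the rotation axes of $\gamma(P(t))$, then use the shell argument to show no new rotation axes arise, and finally deduce the strict decrease of occupied axes.

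First, by inspection of Algorithm~\ref{alg:psi-sym}, the set $R'$ of robots that executes one of \texttt{go-to-center}, \texttt{go-to-corner}, \texttt{go-to-sphere} forms exactly one element of the $\gamma(P(t))$-decomposition, namely the innermost axis-hitting orbit (either $\{b(P(t))\}$, or $P_{ip}(t)$, or $P_{is}(t)$, or $P_{imax}(t)$, and in every case this orbit coincides with $P_1(t)$). Let $P'$ denote its positions in $P(t)$ and $P''$ its new positions in $P(t+1)$; the remaining robots do not move. I then verify by inspection that $P''$ lies strictly inside $I(P(t))$ (for \texttt{go-to-center}) or on $Ball(b(P(t)), rad(I(P(t)))/2)$ (for \texttt{go-to-sphere}, \texttt{go-to-corner}), and that every point of $P''$ avoids every rotation axis of $\gamma(P(t))$: \texttt{go-to-sphere} excludes axes by construction; each corner of the reference prism sits at radial distance $rad(I(P(t)))/4 \neq 0$ from the principal axis and at nonzero height from the equatorial plane, so it misses the principal and all secondary axes; for \texttt{go-to-center}, this is precisely the content of Lemma~\ref{lemma:show-symm-single} (and the case analysis in Section~\ref{subsec:base}).

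Second, I would run a shell argument to force any rotation in $\gamma(P(t+1))$ to be already in $\gamma(P(t))$. By Property~\ref{property:ordering} all non-moving robots lie on or outside $I(P(t))$, while by the previous step $P''$ lies strictly inside $I(P(t))$. Any $g\in\gamma(P(t+1))$ is an isometry fixing $b(P(t+1))=b(P(t))$ and hence preserves distances to that center; therefore $g$ separately permutes $P(t)\setminus P'$ and $P''$. Restricted to the outer shells, $g$ permutes each $\gamma(P(t))$-orbit among $P_2(t),\ldots,P_m(t)$, and restricted to the inner shell it permutes $P''$. Combining these with the fact that $P''$ avoids every axis of $\gamma(P(t))$, I would argue that any rotation axis $\ell$ of $g$ that is not already an axis of $\gamma(P(t))$ would have to mix $g$-orbits across the orbit decomposition of $P(t)\setminus P'$, which is impossible. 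Hence $\ell$ is a rotation axis of $\gamma(P(t))$, and since $\ell$ is unoccupied in $P(t+1)$ and $P''$ stays off the axes of $\gamma(P(t))$, $\ell$ must also be unoccupied in $P(t)$; this gives $\gamma^-(P(t+1))\subseteq\gamma^-(P(t))$.

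Finally, for the strict containment $\gamma^+(P(t+1))\subsetneq\gamma^+(P(t))$, I pick an axis $\ell^\star$ of $\gamma(P(t))$ that meets $P'$ (such an axis exists because $P'$ was the innermost axis-hitting orbit). After the move, either $\ell^\star$ is no longer an axis of $\gamma(P(t+1))$, or it remains an axis but, since $P''$ avoids $\ell^\star$ and no non-moving robot of $P(t)\setminus P'$ lies on $\ell^\star$ radially inside the shell of $P'$, it becomes unoccupied. Either way $\ell^\star\notin\gamma^+(P(t+1))$, while by Step~3 every element of $\gamma^+(P(t+1))$ is already an (occupied) axis of $\gamma(P(t))$; the strict inclusion follows.

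The main obstacle will be Step~2's shell argument, and in particular ruling out that a rotation $g$ preserving $P(t)\setminus P'$ with axis outside $\gamma(P(t))$ happens to also preserve $P''$ (so that $g\in\gamma(P(t+1))$ but $g\notin\gamma(P(t))$). Because $\gamma(P(t)\setminus P')$ can be strictly larger than $\gamma(P(t))$ (removing the innermost orbit $P_1$ may expose additional symmetry), the proof must genuinely use the geometry of $P''$: the key leverage is that $P''$ sits on an inner sphere at radius strictly smaller than any non-moving robot, avoids all axes of $\gamma(P(t))$, and is generated from $P'$ under local coordinate systems constrained by $\sigma(P(t))\preceq\gamma(P(t))$ (Lemma~\ref{lemma:sigma-impossibility}), so any axis of such a $g$ would have to be compatible with these constraints, forcing $g\in\gamma(P(t))$.
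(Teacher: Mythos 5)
Your overall strategy -- isolate the moving orbit $P'$, observe that its destinations land strictly inside $I(P(t))$ and off every rotation axis of $\gamma(P(t))$, then argue shell by shell -- is the same skeleton the paper uses, but the two places you flag as delicate are exactly where the paper does its real work, and your substitutes do not close them. For the ``no new axes'' step, you rightly note that $\gamma(P(t)\setminus P')$ may strictly exceed $\gamma(P(t))$, but the leverage you propose, Lemma~\ref{lemma:sigma-impossibility}, is a non-sequitur: that lemma only says $\sigma$ cannot \emph{decrease}, i.e., it bounds the symmetry of $P(t+1)$ from below, whereas here you need an upper bound ruling out extra symmetry appearing. The paper instead argues that the outermost shell is untouched (the movers go to the interior of $I(P(t))$, so $P(t+1)\cap B(P(t+1)) = P(t)\cap B(P(t))$), every rotation of $P(t+1)$ must act on that shell, and $\gamma(P(t)\cap B(P(t)))=\gamma(P(t))$.

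The more serious gap is in your strictness step. Your dichotomy -- $\ell^\star$ either ceases to be an axis or ``becomes unoccupied'' -- fails on both horns. On the second horn, non-movers \emph{outside} the shell of $P'$ can still sit on $\ell^\star$: take two concentric orbits both occupying the $3$-fold axes of $O$, of which $\psi_{SYM}$ moves only the inner one ($P_{imax}$ with minimum index); then $\ell^\star$ remains occupied and you are forced onto the first horn with no argument for it. Worse, even when the second horn does occur, a previously occupied axis that survives as an \emph{unoccupied} axis would put $\ell^\star$ into $\gamma^-(P(t+1))\setminus\gamma^-(P(t))$, contradicting the second containment you are simultaneously claiming -- so the lemma genuinely requires that occupied axes be \emph{destroyed as axes}, not merely vacated. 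The missing idea is the paper's disjointness argument: the candidate destination sets of distinct robots under \texttt{go-to-center}, \texttt{go-to-corner}, and \texttt{go-to-sphere} are pairwise disjoint, so the image of a mover's chosen destination under a nontrivial rotation about $\ell^\star$ is another candidate destination \emph{of that same robot}, which is necessarily unoccupied; hence that rotation cannot preserve $P(t+1)$ and $\ell^\star$ is eliminated from $\gamma(P(t+1))$ altogether. Without this, neither $\gamma^+(P(t+1))\subset\gamma^+(P(t))$ nor $\gamma^-(P(t+1))\subseteq\gamma^-(P(t))$ goes through.
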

\begin{proof}
As already mentioned, robots that move during the transition from 
$P(t)$ to $P(t+1)$ form the first element of the 
$\gamma(P(t))$-decomposition of $P(t)$, and 
execute the same procedure selected from the 
three procedures \texttt{go-to-center}, \texttt{go-to-corner}, 
or \texttt{go-to-sphere}. 
Let $R'$ be the set of these moving robots. 

From Lemma~\ref{lemma:keep-vt}, 
$P(t)$ is not transitive and 
there exits at least one another element of the $\gamma(P(t))$-decomposition 
of $P(t)$ that does not move during the transition. 
and keep $B(P(t))$.
Clearly, $\gamma(B(P(t)) \cap P(t)) = \gamma(P(t))$. 
No robot moves to the sphere of $B(P(t))$ during the transition 
because the robots of $R'$ moves to the interior of $I(P(t))$. 
Hence if we can apply some rotation to $P(t+1)$, 
then we can also apply it to $P(t+1) \cap B(P(t+1))$. 
In other words, $\gamma(P(t+1)) \preceq \gamma(P(t))$. 

We first show $\gamma^+(P(t+1)) \subset \gamma^+(P(t))$. 
Observe that the three procedures \texttt{go-to-center}, 
\texttt{go-to-corner}, and \texttt{go-to-sphere} assigns 
candidate destinations to each robot in $P(t)$ 
so that the set of candidate destinations 
of robots in $P(t)$ are disjoint. 
Let $a$ be the rotation axes that is occupied by a robot $r$
that executes one of these three procedures. 
To keep this axes $a$ in $P(t+1)$, at least one point that 
is symmetric for the destination of $r$ regarding axes $a$ 
should be occupied by another robot in $P(t+1)$. 
Such a point is a candidate destination of $r$ 
and no robot will occupy this point.  
Hence, $\gamma(P(t+1))$ does not have this rotation axes. 
 Hence, $\gamma^+(P(t+1)) \subset \gamma^+(P(t))$.
 
Additionally, the three procedures \texttt{go-to-corner}, 
\texttt{go-to-center}, \texttt{go-to-sphere} do not allow the 
robots to move a point on the rotation axis of $\gamma(P(t))$. 
Hence, the unoccupied rotation axes of $\gamma(P(t))$ remains 
unoccupied in $P(t)$. 
Thus, we have $\gamma^-(P(t+1)) \subseteq \gamma^-(P(t))$.

Consequently, $P(t+1)$ satisfies the two properties. 
\qed
\end{proof}

\begin{theorem}
Let $P$ be an arbitrary initial configuration of oblivious FSYNC robots.  
Algorithm~\ref{alg:psi-sym} translates $P$ into another 
configuration $P'$ that satisfies the terminal condition of $\psi_{SYM}$ 
and $\gamma(P) \in \varrho(P)$ in at most $7$ 
steps. 
\end{theorem}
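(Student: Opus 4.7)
The plan is to combine the per-step shrinking guarantee of Lemma~\ref{lemma:eliminate} with a case analysis on the rotation group $\gamma(P)$ that bounds the number of cycles before every occupied axis has been removed. The transitive case (where $\gamma(P)$ is 3D and $P$ consists of a single orbit) is already handled: one execution of \texttt{go-to-center} yields $P'$ with $\gamma(P')\in\varrho(P)$ by Lemma~\ref{lemma:show-symm-single}. So I only need to handle non-transitive initial configurations, and here I would use the monotone invariant $\gamma^+(P(t+1))\subsetneq\gamma^+(P(t))$ whenever a \texttt{go-to-$\ast$} step fires, together with $\gamma^-(P(t+1))\subseteq\gamma^-(P(t))$ throughout, so that the occupied-axis set shrinks strictly while no new unoccupied axis is introduced.

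Next I would count cycles by grouping them into at most one preliminary \texttt{Expand} step followed by ``elimination rounds,'' one per fold type of occupied axis. The \texttt{Expand} step is needed at most once, at the start: it ensures $P\cap B(P)=P_m$ so that the ordering of Property~\ref{property:ordering} places the axis-occupying orbit correctly. After that, each elimination round consists of at most one \texttt{Shrink} step (moving the currently selected axis-occupying element $P_{ip}$, $P_{is}$, or $P_{imax}$ inside $I(P)$ so that it becomes $P_1$), followed by exactly one of \texttt{go-to-sphere}, \texttt{go-to-corner}, or \texttt{go-to-center} that removes that fold's axes by Lemma~\ref{lemma:eliminate}. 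So each round uses at most $2$ cycles.

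The number of distinct fold types of occupied axes is at most three: for the icosahedral group we have the $2$-, $3$-, and $5$-fold classes; for the octahedral group the $2$-, $3$-, $4$-fold classes; for dihedral and cyclic groups fewer. This gives $1+2\cdot 3 = 7$ as the worst-case bound, matching the statement. Throughout, Lemma~\ref{lemma:keep-vt} ensures that $\psi_{SYM}$ never merges orbits into a single transitive set, so the ordering and hence the choice of $P_{ip}$, $P_{is}$, $P_{imax}$ remains well-defined between rounds. When $\gamma^+$ becomes empty, the terminal condition (no robot on any rotation axis of $\gamma(P')$, or $\gamma(P')=C_1$) is satisfied; since every orbit then has size $|\gamma(P')|$ and $\gamma(P')\preceq$ the unoccupied-axis subgroup of $\gamma(P)$, the definition of symmetricity gives $\gamma(P')\in\varrho(P)$.

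The main obstacle I expect is verifying that the rounds do not interfere: after one \texttt{go-to-$\ast$} eliminates, say, the $5$-fold axes, the new $\gamma(P(t+1))$ is a proper subgroup of the old $\gamma(P(t))$, and the unoccupied axes of the previous round could in principle become principal/secondary axes of the new group with different occupancy status. I would check, for each transition $I\to\{T,D_3\}$, $O\to\{D_3,D_4\}$, $T\to\{D_2,D_3\}$, $D_\ell\to C_\ell$, that the orbit selected by $\psi_{SYM}$ in the next round is indeed well-defined and still lies on a surviving occupied axis, so that Lemma~\ref{lemma:eliminate} can be invoked again. The regular-$n$-gon special case (where $\psi_{SYM}$ does nothing) needs to be excluded from the inductive step, but this is fine because such $P$ is already a terminal configuration and contributes $0$ further cycles to the count.
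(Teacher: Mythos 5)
Your proposal is correct and follows essentially the same route as the paper's own proof: one initial \texttt{Expand}, then at most three rounds each consisting of a \texttt{Shrink} followed by one of \texttt{go-to-sphere}, \texttt{go-to-corner}, or \texttt{go-to-center}, where each round eliminates one fold type of occupied axis by Lemma~\ref{lemma:eliminate} and there are at most three such types, giving $1+2\cdot 3=7$. Your extra checks (non-interference between rounds, the regular $n$-gon case, the transitive case via Lemma~\ref{lemma:show-symm-single}) are somewhat more careful than the paper's terse argument but do not change the approach.
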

\begin{proof}
Let $P(0)(=P), P(1), P(2), \ldots$ be the execution of 
Algorithm~\ref{alg:psi-sym} from $P$. 
In the worst case, one of the three procedures 
\texttt{go-to-center}, \texttt{go-to-corner}, \texttt{go-to-sphere} 
is executed in every two steps, 
with the exception that for the first and the second
cycle the robots execute \texttt{Expand} and \texttt{Shrink}. 
From Lemma~\ref{lemma:eliminate}, 
every time the three procedures are executed, 
at least one type of occupied rotation axes are eliminated without 
increasing the occupied rotation axes. 
Hence, the system eventually reaches a configuration $P(t^*)$ where 
$\gamma^+(P(t^*))$ is empty. 
Additionally, Lemma~\ref{lemma:eliminate} guarantees 
\begin{equation*}
 \gamma^-(P(t^*)) = \gamma(P(t^*))
  \subseteq \gamma(P(t^*-1)) \subseteq \cdots 
\subseteq \gamma(P(0)). 
\end{equation*}
Hence, $P(t^*)$ satisfies the claim. 

In the worst case, $t^* =7$ because in each iteration,
just one type of rotation axes is eliminated.
\qed 
\end{proof}


\section{Necessity of Theorem~\ref{theorem:main}}
\label{sec:nec}

In this section, we prove the necessity of Theorem~\ref{theorem:main}. 

\begin{theorem}
\label{theorem:nec}
Regardless of obliviousness, 
FSYNC robots can form a target pattern $F$ from 
an initial configuration $P$ 
only if 
$\varrho(P) \subseteq \varrho(F)$. 
\end{theorem}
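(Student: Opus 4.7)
The approach is to prove the contrapositive of Theorem~\ref{theorem:nec}. Suppose $\varrho(P) \not\subseteq \varrho(F)$ and fix some $G \in \varrho(P) \setminus \varrho(F)$. The plan is to combine Lemma~\ref{lemma:rho-conf} (for oblivious robots) and Lemma~\ref{lemma:rho-conf-memory} (for non-oblivious robots) with Lemma~\ref{lemma:sigma-impossibility} (resp.\ Lemma~\ref{lemma:sigma-impossibility-memory}) so that an adversarial choice of local coordinate systems makes the invariant $\sigma(P(t)) \succeq G$ persist throughout any execution; I then show that this invariant is incompatible with reaching a configuration similar to $F$.

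First I invoke Lemma~\ref{lemma:rho-conf} (or Lemma~\ref{lemma:rho-conf-memory}) to equip the robots in $P$ with local coordinate systems (and initial memory contents if applicable) such that $\sigma(P)=G$. For an arbitrary deterministic algorithm $\psi$, consider the induced execution $P(0)(=P),P(1),P(2),\ldots$; by the chosen lemma we have $\sigma(P(t))\succeq G$ for every $t\geq 0$. Assume for contradiction that $P(t^*)\simeq F$ for some time $t^*$. Since $F$ has no multiplicity, neither does $P(t^*)$, so Property~\ref{prop:size-sigma} yields $|P_i|=|\sigma(P(t^*))|$ for every element of the $\sigma(P(t^*))$-decomposition; equivalently, no point of $P(t^*)$ lies on any rotation axis of $\sigma(P(t^*))$. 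Via Property~\ref{prop:sub-sigma} the embedding of $G$ inside $\sigma(P(t^*))$ is also an embedding of $G$ into $\gamma(P(t^*))$, and its axes inherit the unoccupied status, so every orbit of $G$ on $P(t^*)$ has full size $|G|$, which is precisely the defining condition of $G\in\varrho(P(t^*))$ in Definition~\ref{def:symmetricity}. Because the transformations in ${\cal T}$ preserve rotation groups, their axes, and orbit sizes, similarity of $P(t^*)$ and $F$ yields $\varrho(P(t^*))=\varrho(F)$, and hence $G\in\varrho(F)$, contradicting the choice of $G$.

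The main obstacle I anticipate is the clean bookkeeping of the translation between the \emph{dynamic} object $\sigma(P(t^*))$ acting on the arrangement of local coordinate systems and the \emph{static} object $\gamma(F)$ acting on the target set. In particular, one must confirm (i)~that the containment $\sigma(P(t^*))\preceq\gamma(P(t^*))$ actually embeds the chosen copy of $G$ into axes of $\gamma(P(t^*))$ itself, (ii)~that the absence of multiplicity forces those axes to be unoccupied, and (iii)~that the similarity $Z\in{\cal T}$ sending $P(t^*)$ to $F$ carries this embedding to a witness of $G\in\varrho(F)$. Each of (i)--(iii) should follow from the definitions together with Properties~\ref{prop:sub-sigma} and~\ref{prop:size-sigma}, but spelling them out is where the bulk of the work lies.
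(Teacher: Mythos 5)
Your proposal is correct and follows essentially the same route as the paper: pick $G \in \varrho(P)\setminus\varrho(F)$, construct an adversarial arrangement of local coordinate systems with $\sigma(P)=G$, invoke Lemma~\ref{lemma:sigma-impossibility} (resp.\ Lemma~\ref{lemma:sigma-impossibility-memory}) to keep $\sigma(P(t))\succeq G$, and derive a contradiction at the configuration similar to $F$ using Property~\ref{prop:size-sigma}. The only cosmetic difference is that the paper packages the last step as Lemma~\ref{lemma:sigma-rho} ($\sigma(P(t))\in\varrho(P(t))$) combined with the downward closure of $\varrho(F)$ under subgroups, whereas you verify directly that the embedded copy of $G$ has unoccupied axes and hence $G\in\varrho(P(t^*))=\varrho(F)$; these are equivalent.
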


To prove Theorem~\ref{theorem:nec}, we show the
following lemma on the relationship between 
$\sigma(P)$ and $\varrho(P)$ of a configuration $P$
without multiplicity. 
\begin{lemma}
\label{lemma:sigma-rho}
For a configuration $P$ without multiplicity, 
$\sigma(P) \in \varrho(P)$. 
\end{lemma}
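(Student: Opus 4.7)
The plan is to verify directly that $G = \sigma(P)$ meets the three conditions in Definition~\ref{def:symmetricity} for membership in $\varrho(P)$: (i) $G$ acts on $P$, (ii) there exists an embedding of $G$ into $\gamma(P)$, and (iii) every element of the $G$-decomposition of $P$ is a $|G|$-set.

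I would start with conditions (i) and (ii). Condition (ii) is immediate from Property~\ref{prop:sub-sigma}, which states $\sigma(P) \preceq \gamma(P)$ and therefore gives an embedding of the rotation axes of $\sigma(P)$ into those of $\gamma(P)$ with compatible folds. Condition (i) is automatic from the definition of $\sigma(P)$: it is the rotation group of the arrangement $Q$ of local coordinate systems whose origins form $P$, so every $g \in \sigma(P)$ permutes those origins and hence permutes the points of $P$. Condition (iii) then follows directly from Property~\ref{prop:size-sigma}, which asserts that when $P$ carries no multiplicity, every element of the $\sigma(P)$-decomposition has cardinality exactly $|\sigma(P)|$; this is precisely the $|G|$-set condition with $G = \sigma(P)$.

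The only point I expect to require a brief remark, rather than a genuine obstacle, is the quantifier in Definition~\ref{def:symmetricity}: membership in $\varrho(P)$ asks only for the \emph{existence} of some embedding of $G$ into $\gamma(P)$ under which the induced $G$-decomposition consists of $|G|$-sets, whereas $\sigma(P)$ comes equipped with the canonical embedding dictated by the particular arrangement $Q$. This is not actually a problem: the canonical embedding is a legitimate witness, because the $\sigma(P)$-decomposition of $P$ appearing in Property~\ref{prop:size-sigma} is by construction the one induced by that embedding. Combining the three clauses yields $\sigma(P) \in \varrho(P)$.
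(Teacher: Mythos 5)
Your proposal is correct and follows essentially the same route as the paper: the paper's proof is the one-line citation of Property~\ref{prop:size-sigma}, and your argument simply spells out the routine verifications (that $\sigma(P)$ acts on $P$ and embeds into $\gamma(P)$ via Property~\ref{prop:sub-sigma}) that the paper leaves implicit. Your remark about the canonical embedding serving as the existential witness is a fair point of care, but it does not change the substance of the argument.
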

\begin{proof} 
 The proof is by Property \ref{prop:size-sigma}. 
 \qed 
\end{proof}

\begin{lemma}
\label{lemma:nec-oblivious}
Oblivious FSYNC robots can form a target pattern $F$ 
from an initial configuration $P$ 
only if 
$\varrho(P) \subseteq \varrho(F)$. 
\end{lemma}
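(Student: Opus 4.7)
The plan is to combine the impossibility result on $\sigma$
(Lemma~\ref{lemma:sigma-impossibility}) with the construction of
symmetric initial arrangements (Lemma~\ref{lemma:rho-conf}) and the
link between $\sigma$ and $\varrho$ on multiplicity-free
configurations (Lemma~\ref{lemma:sigma-rho}).

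First, I would fix an arbitrary $G \in \varrho(P)$, and invoke
Lemma~\ref{lemma:rho-conf} to obtain an arrangement of local
coordinate systems on $P$ for which $\sigma(P) = G$. Since by
hypothesis the algorithm forms $F$ from $P$ under \emph{every} choice
of local coordinate systems, the execution
$P(0)(=P), P(1), P(2), \ldots$ starting from this particular
arrangement must reach some $P(t^*) \simeq F$ in finite time.
By Lemma~\ref{lemma:sigma-impossibility},
$\sigma(P(t^*)) \succeq \sigma(P) = G$.

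Next I would observe that since $F$ contains no multiplicity
(by assumption on the target pattern), $P(t^*) \simeq F$ also
contains no multiplicity, so Lemma~\ref{lemma:sigma-rho} applies and
yields $\sigma(P(t^*)) \in \varrho(P(t^*))$. Because $\varrho$ is
closed under taking subgroups (as noted after
Definition~\ref{def:symmetricity}) and $G \preceq \sigma(P(t^*))$,
we obtain $G \in \varrho(P(t^*))$. Finally, since $\varrho$ is
invariant under the similarity transformations in $\mathcal{T}$
(rotations, translations, uniform scalings preserve the rotation
group and its embedding in $\mathbb{S}$),
$\varrho(P(t^*)) = \varrho(F)$, hence $G \in \varrho(F)$. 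As $G$ was
arbitrary in $\varrho(P)$, this gives $\varrho(P) \subseteq \varrho(F)$.

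I expect the argument itself to be short once the machinery is in
place; the only delicate point is verifying invariance of $\varrho$
under $\mathcal{T}$ (so that $\varrho(P(t^*)) = \varrho(F)$), and
making sure that the initial arrangement supplied by
Lemma~\ref{lemma:rho-conf} really has $\sigma(P) = G$ rather than
only $\sigma(P) \succeq G$; both follow directly from the construction
used in the proof of Lemma~\ref{lemma:rho-conf}, where the orbits of
$G$ through chosen seed coordinate systems form the $\sigma(P)$-decomposition.
\qed
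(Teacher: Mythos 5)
Your proposal is correct and follows essentially the same route as the paper: both rest on constructing an initial arrangement with $\sigma(P)=G$ for a chosen $G\in\varrho(P)$, applying Lemma~\ref{lemma:sigma-impossibility} to get $\sigma(P(t))\succeq G$ at the configuration similar to $F$, and then combining Lemma~\ref{lemma:sigma-rho} with the closure of $\varrho$ under subgroups (the paper phrases this last step as a contradiction, you phrase it directly, but the content is identical). The two points you flag as delicate --- invariance of $\varrho$ under $\mathcal{T}$ and $\sigma(P)=G$ versus $\sigma(P)\succeq G$ --- are indeed used implicitly by the paper, and neither causes trouble since the argument only needs $\sigma(P)\succeq G$.
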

\begin{proof} 
Let $P$ and $F$ be a given initial configuration 
and a target pattern without multiplicity 
that does not satisfy $\varrho(P) \subseteq \varrho(F)$. 
Hence there exists $G \in \varrho(P)$ such that 
$G \not\in \varrho(F)$. 

Assume that there exists an algorithm $\psi$ that forms $F$ 
from $P$, for contradiction. 
Consider an initial arrangement of local coordinate systems that
 satisfies $\sigma(P) = G$.
 Such arrangement exists from the definition. 
From the assumption, there exists at least one execution 
 $P(0)(=P), P(1), P(2), \ldots $ that satisfies
 $P(t) \simeq F$ for some $t >0$. 
From Lemma~\ref{lemma:sigma-impossibility}, 
we have $\sigma(P(t)) \succeq \sigma(P(0)) = G$. 

From the definition of the symmetricity, we have the following: 
if $G' \in \varrho(F)$, any subgroup of $G'$ is in $\varrho(F)$. 
In other words, if there exists a subgroup of $G'$ 
that is not in $\varrho(F)$, then $G' \not\in \varrho(F)$. 
Because $\sigma(P(0)) = G$ is not in $\varrho(F)$, 
its supergroup $\sigma(P(t))$ is not in $\varrho(F)$. 

This contradicts Lemma~\ref{lemma:sigma-rho} that guarantees 
$\sigma(P(t)) \in \varrho(P(t))$. 
\qed 
\end{proof}

In the same way, we have the non-oblivious robot version 
directly from Lemma~\ref{lemma:sigma-impossibility-memory}.
Consequently, we have Theorem~\ref{theorem:nec}.


\section{Sufficiency of Theorem~\ref{theorem:main}} 
\label{sec:suf}

We have shown that the necessity is derived from the symmetricity of
an initial configuration.
We will show that the condition of Theorem~\ref{theorem:nec}
is also a sufficient condition for FSYNC robots to form a given target
pattern. 

\begin{theorem}
\label{theorem:suf}
Regardless of obliviousness, 
FSYNC robots can form a target pattern $F$ from 
an initial configuration $P$ 
if 
$\varrho(P) \subseteq \varrho(F)$. 
\end{theorem}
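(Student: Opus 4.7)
The overall plan is to chain two phases. First, apply $\psi_{SYM}$ (Lemma~\ref{lemma:show-sym}) to the initial configuration $P$, reaching a configuration $P'$ with $\gamma(P') \in \varrho(P)$. The hypothesis $\varrho(P) \subseteq \varrho(F)$ then gives $\gamma(P') \in \varrho(F)$, so by Definition~\ref{def:symmetricity} there is an embedding of $\gamma(P')$ into $\gamma(F)$ under which the induced $\gamma(P')$-decomposition of $F$ consists entirely of orbits of cardinality $|\gamma(P')|$. The terminal property of $\psi_{SYM}$ guarantees that the $\gamma(P')$-decomposition of $P'$ itself also consists of orbits of cardinality $|\gamma(P')|$. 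Hence both $P'$ and $F$ partition into the same number $n/|\gamma(P')|$ of $\gamma(P')$-orbits, and the task reduces to transporting the orbits of $P'$ onto the orbits of $F$ while preserving $\gamma(P')$. The degenerate case where $P'$ is a regular polygon fits the same framework, since, as already noted after Algorithm~\ref{alg:psi-sym}, the hypothesis then forces $\gamma(F) \succeq D_n$ and the two orbit structures agree.

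Second, I would show how to form $F$ from $P'$. The robots pick a canonical similar copy $F'$ of $F$: centre it at $b(P')$, scale it so that $B(F')$ sits strictly inside $B(P')$, and choose the rotation aligning the selected embedding of $\gamma(P')$ in $\gamma(F')$ with the arrangement of $\gamma(P')$ in $P'$. Any residual freedom among the finitely many valid embeddings is resolved by selecting the one that minimises the lexicographically sorted vector of robot displacements, which is coordinate-independent by Theorem~\ref{theorem:decomposition}. Each robot is then assigned the unique point of $F'$ dictated by the $\gamma(P')$-symmetric bijection between its orbit in $P'$ and the matched orbit in $F'$; the in-orbit rotational ambiguity is fixed by the same minimum-displacement rule. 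Because all robots of a common orbit share their local observation (Property~\ref{prop:sigma-observation}), every robot computes the same pair $(F', \text{matching})$, and their synchronous motion is $\gamma(P')$-equivariant.

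The movement itself proceeds orbit by orbit from outermost inward, preceded by a radial dilation that blows $P'$ up enough to contain $F'$ strictly; this ensures that an orbit in transit never touches the positions of already placed orbits or of orbits still waiting to move, so no multiplicity is ever produced. By construction every intermediate configuration retains $\gamma(P')$ as a subgroup of its rotation group, so Lemma~\ref{lemma:sigma-impossibility} is not violated and the computation is well defined throughout. After all $n/|\gamma(P')|$ orbits have been transported the configuration equals $F'$, which is similar to $F$. The main technical obstacle I foresee is establishing that the canonical $F'$, the orbit-to-orbit bijection, and the in-orbit matching are computed identically by every robot in spite of coordinate disagreement, together with verifying collision-freeness of the staged movement; both are handled by reusing the canonical local-view ordering of Section~\ref{sec:3Dsym} and the outer-to-inner schedule. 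The non-oblivious case follows immediately, since non-oblivious FSYNC robots can simply execute the oblivious algorithm while ignoring memory.
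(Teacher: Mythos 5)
Your two-phase architecture --- reduce $P$ to $P'$ with $\gamma(P')\in\varrho(P)\subseteq\varrho(F)$ via $\psi_{SYM}$, then embed a copy of $F$ and transport the $\gamma(P')$-orbits of $P'$ onto those of the embedded copy equivariantly --- is exactly the paper's. However, the two steps you dispatch in a sentence each are where the actual difficulty sits, and as written they do not go through. The claim that there are ``finitely many valid embeddings'' of $\gamma(P')$ into $\gamma(F)$ is false whenever a continuous rotational freedom remains: if $\gamma(P')=C_1$, every rotation of $F'$ about $b(P')$ is a valid embedding, and if $\gamma(P')=C_k$ is embedded into a $k'$-fold axis of $\gamma(F)$, rotating $F'$ by an arbitrary angle about that axis is still valid. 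Minimizing the lexicographically sorted displacement vector over such a continuum need not have a unique argmin, and you supply no tie-break, so the robots need not agree on $F'$. The paper devotes all of Section~\ref{subsec:embedding} to precisely this point, anchoring the residual freedom to \emph{reference polygons} built from canonically ordered robots of $P'$ and points of $F$, and for the $T$-to-$I$ case it additionally needs the orientation of the $3$-fold axes (the black/white fan construction) because a cube embeds into a dodecahedron in two inequivalent ways. Likewise, your in-orbit ``minimum-displacement'' assignment leaves exactly the ties shown in Figure~\ref{fig:multiple-dest} (each robot with two nearest targets, forming a cycle about a rotation axis), which the paper can only resolve by invoking chirality through a right-screw rule; pure distance minimization does not determine the matching.

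The staged outermost-to-innermost relocation with a preliminary dilation also solves a non-problem while creating a real one. Movement here is a rigid jump and the model explicitly ignores trajectories, so a perfect matching already guarantees that no multiplicity is created; the paper therefore moves all robots to their matched points in a single synchronized step. Spreading the relocation over several rounds instead forces the \emph{oblivious} robots to reconstruct, from an arbitrary intermediate configuration alone, which orbits have already been placed and which canonical $F'$ was agreed upon; you assert the computation is ``well defined throughout'' but give no mechanism for this, and an intermediate configuration could even coincide with a legitimate initial configuration and restart the algorithm. Finally, the regular-$n$-gon terminal case is not ``the same framework'': there the single orbit has cardinality $n=|D_n|/2$ rather than $|\gamma(P')|$, so the orbit-size bookkeeping your matching relies on breaks and that case needs separate treatment. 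None of these defects is fatal to the strategy --- the paper repairs all of them --- but each is a genuine missing piece rather than a routine verification.
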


We present a pattern formation algorithm $\psi_{PF}$ that 
makes oblivious FSYNC robots form a target pattern $F$ from 
a given initial configuration $P$ if $P$ and $F$ satisfy 
the condition of Theorem~\ref{theorem:suf}.
Non-oblivious robots can also execute $\psi_{PF}$ correctly 
by ignoring local memory contents. 

As we have already seen in Section~\ref{sec:show-sym}, 
algorithm $\psi_{SYM}$ translates an initial configuration $P$ 
to another configuration $P'$ that satisfies
(i) $\gamma(P') \in \varrho(P)$, 
(ii) If $\gamma(P') \neq C_1$, then $P'$ is a regular $n$-gon
or no robot is on the rotation axes of $\gamma(P')$. 
Let $\{P_1', P_2', \ldots, P_m'\}$ be the $\gamma(P')$-decomposition of 
$P'$. 
From the first property, 
we have $\gamma(P') \in \varrho(F)$. 
From the second property, the size of each element $|P_i'|$ 
($1 \leq i \leq m$) is $|\gamma(P')|$. 
The second property implies $\sigma(P') = \gamma(P')$ 
in the worst case and  the robots forming each element of the
$\gamma(P')$-decomposition of $P'$ may
forever move symmetric positions regarding $\gamma(P')$. 

The proposed pattern formation algorithm $\psi_{PF}$ first
makes the robots agree on an embedding of $F$ in $P'$ so that
$\gamma(P')$ overlaps unoccupied rotation axes of $\gamma(F)$.
Because $\gamma(P') \in \varrho(F)$, such embedding exists and 
it guarantees that $\gamma(P')$-decomposition of $F$
consists of elements of size $|\gamma(P')|$. 
Thus we can overcome the symmetric movement of each element of the
$\gamma(P')$-decomposition of $P'$ 
by assigning it to an element of the $\gamma(P')$-decomposition of $F$. 
We denote the embedded target pattern by $\widetilde{F}$. 
Then $\psi_{PF}$ makes them compute a perfect matching between
$P'$ and $\widetilde{F}$, denoted by $M(P, \widetilde{F})$
to assign final destination to each robot. 

We will show how the robots agree on $\widetilde{F}$ in
Section~\ref{subsec:embedding} 
and on the perfect matching between $P'$ and $\widetilde{F}$ in 
Section~\ref{subsec:matching}.

\subsection{Embedding the target pattern}
\label{subsec:embedding}

Let $P$ be a current configuration that is a terminal configuration 
of $\psi_{SYM}$ and
no rotation axis of $P$ is occupied unless $P$ is on one plane. 
From the condition of Theorem~\ref{theorem:suf}, 
$\gamma(P) \in \varrho(F)$. 

To form the target pattern $F$, the robots first fix an image of the 
target pattern $F$. 
We denote this image by $\widetilde{F}$. 
The robots fix $\widetilde{F}$ so that $B(\widetilde{F}) = B(P)$
and unoccupied rotation axes of $\gamma(\widetilde{F})$ overlaps
the rotation axes of $\gamma(P)$. 
Algorithm $\psi_{PF}$ first fixes the arrangement of
$\gamma(F)$ in $P$ instead of $\widetilde{F}$. 
The robots construct an agreement on the arrangement of
rotation axes of $\gamma(F)$ that $\gamma(P)$
does not have.
In the following, we refer to an arrangement of a rotation group $G$ 
by using $U_{G, \mu}$ for some $\mu > 1$ because it helps our
understanding
with illustration. 
Depending on $\gamma(P)$, we have the following two cases. 

\medskip 

\noindent{\bf When $\gamma(P)$ is a 3D rotation group.~}
We have only two cases to consider
because $O \not\preceq I$, i.e, 
the first case is when $\gamma(P)=T$ and $\gamma(F)=O$ and
the second case is when $\gamma(P) = O$ and $\gamma(F)=I$.

First, when $\gamma(P) = T$ and $\gamma(F) = O$,
the $3$-fold axes and $4$-fold axes of $\gamma(F)$
are not occupied, otherwise $T \not\in \varrho(F)$. 
The robots fix the arrangement of $O$ in $T$ as shown in
Figure~\ref{fig:exp-TtoO}: 
$\psi_{PF}$ replace the $2$-fold axes of $T$ with
$4$-fold axes of $O$.
The set of $3$-fold axes of $T$ 
and new $4$-fold axes generate the remaining $2$-fold axes
of $O$.\footnote{Because the rotation around the $4$-fold 
axes and the $3$-fold axes are the generator of $O$.} 
Thus the rotation axes of $\gamma(P)$ corresponds to
the $3$-fold axes and the $4$-fold axes of $\gamma(F)$
and $\gamma(P)$ overlaps unoccupied rotation axes of $\gamma(F)$.

Second, when $\gamma(P)= T$ and $\gamma(F) = I$, 
the $3$-fold axes and $2$-fold axes of $\gamma(F)$
are not occupied, otherwise $T \not\in \varrho(F)$. 
The robots first fix the arrangement of $O$ in $T$, then
further add some rotation axes to fix the arrangement of $I$.
We use the orientation of the $3$-fold axes of $T$ in this
procedure. 
The robots extend $T$ to $O$ in the same way as the previous case
and consider a unit cube (in their local coordinate systems).
Then they put
two types of 3-blade fan components to each of the vertices of the cube
(Figure~\ref{fig:exp-TtoI}).
The two types of fans are mirror image of each other and 
we call one of them ``black'' and the other ``white''. 
The robots put a black component on the vertex of the cube
on the positive direction of the $3$-fold axis of $T$
(the vertices of the regular tetrahedron in Figure~\ref{fig:exp-TtoI}) 
and a white component on the vertex of the cube on the
negative direction of the $3$-fold axis of $T$. 
We need such procedure
because there are two types of embeddings of a cube to a regular 
dodecahedron (thus, the arrangement of $I$) 
(Figure~\ref{fig:dodeca-cube-1} and Figure~\ref{fig:dodeca-cube-2}), 
and robots have to agree on one of the two arrangements of $I$.
The rotation axes of $\gamma(P)$ corresponds to
the $3$-fold axes and the $2$-fold axes of $\gamma(F)$
and $\gamma(P)$ overlaps unoccupied rotation axes of $\gamma(F)$.

\begin{figure}[t]
\centering 
\subfigure[From $T$ to $O$]
{\includegraphics[height=3cm]{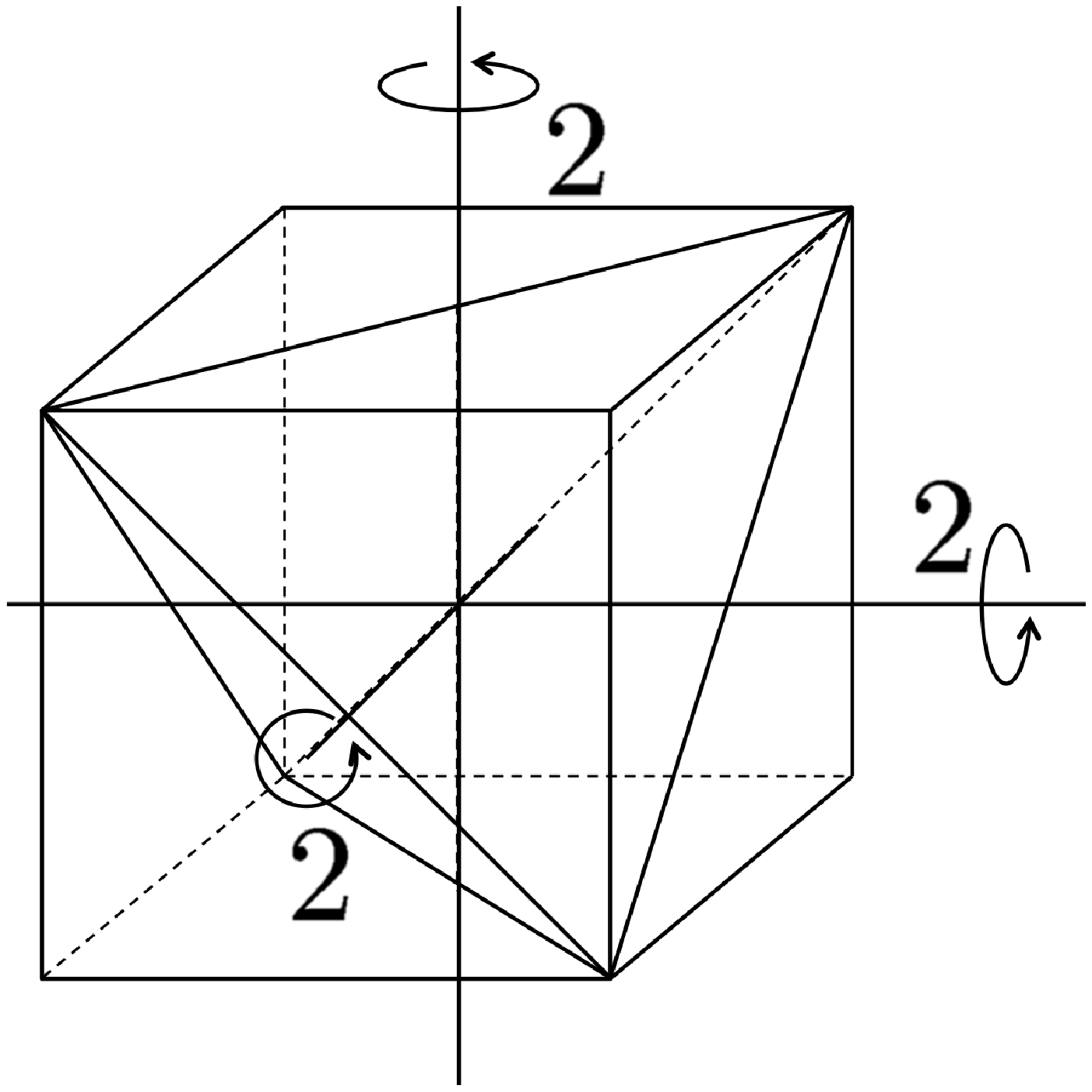}\label{fig:exp-TtoO}}
\hspace{3mm}
\subfigure[From $T$ to $I$]
{\includegraphics[height=2.5cm]{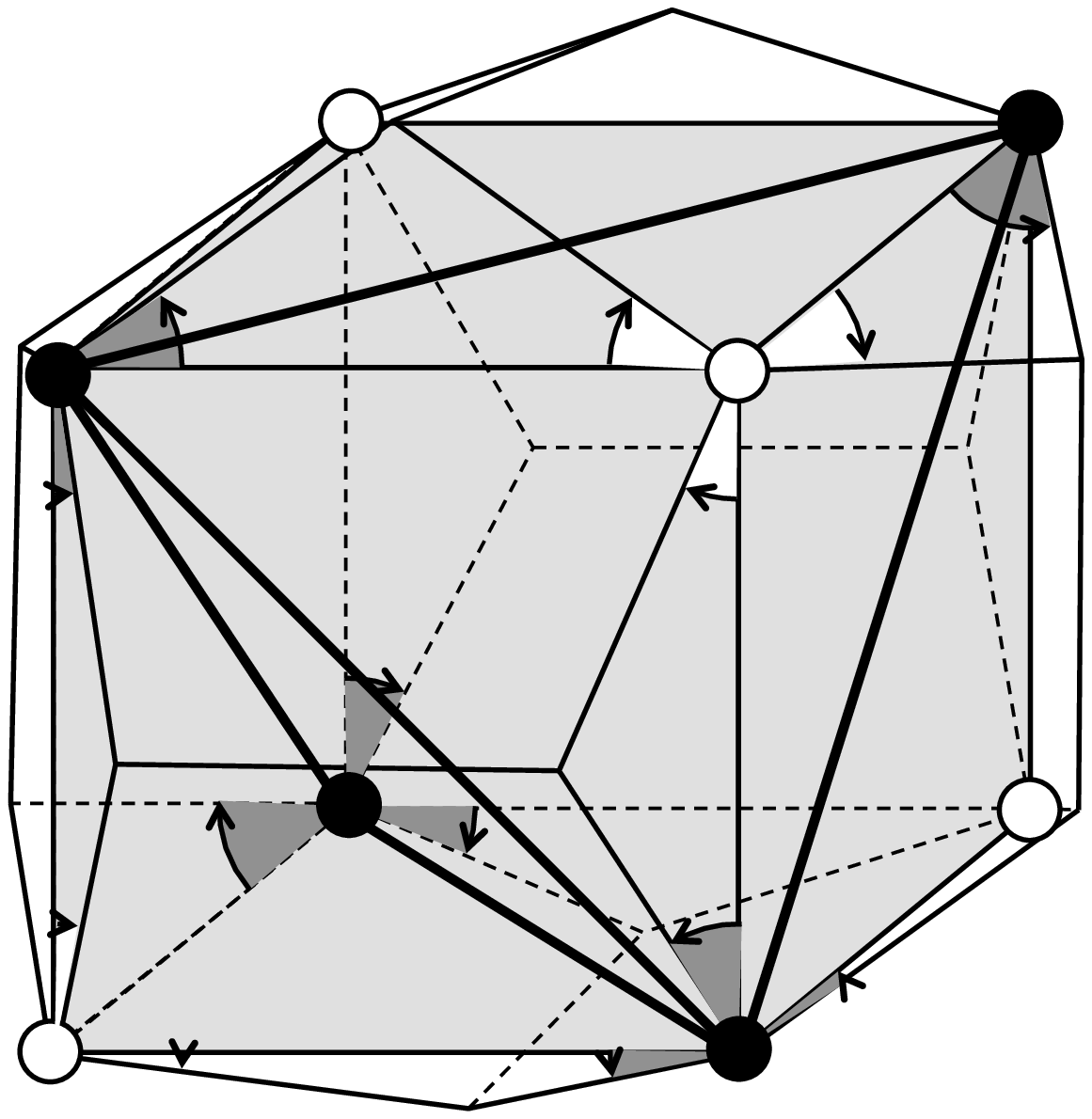}\label{fig:exp-TtoI}}
\hspace{3mm}
\subfigure[$C_2$ to $T$]
{\includegraphics[height=2.5cm]{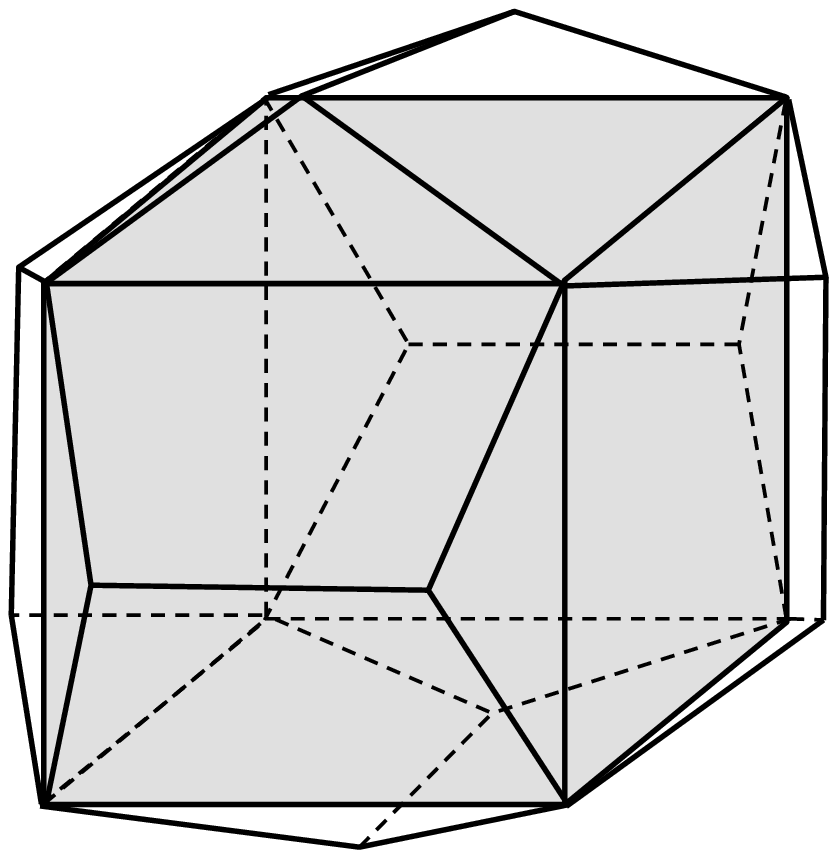}\label{fig:dodeca-cube-1}}
\hspace{3mm}
\subfigure[$C_3$ to $T$]
{\includegraphics[height=2.4cm]{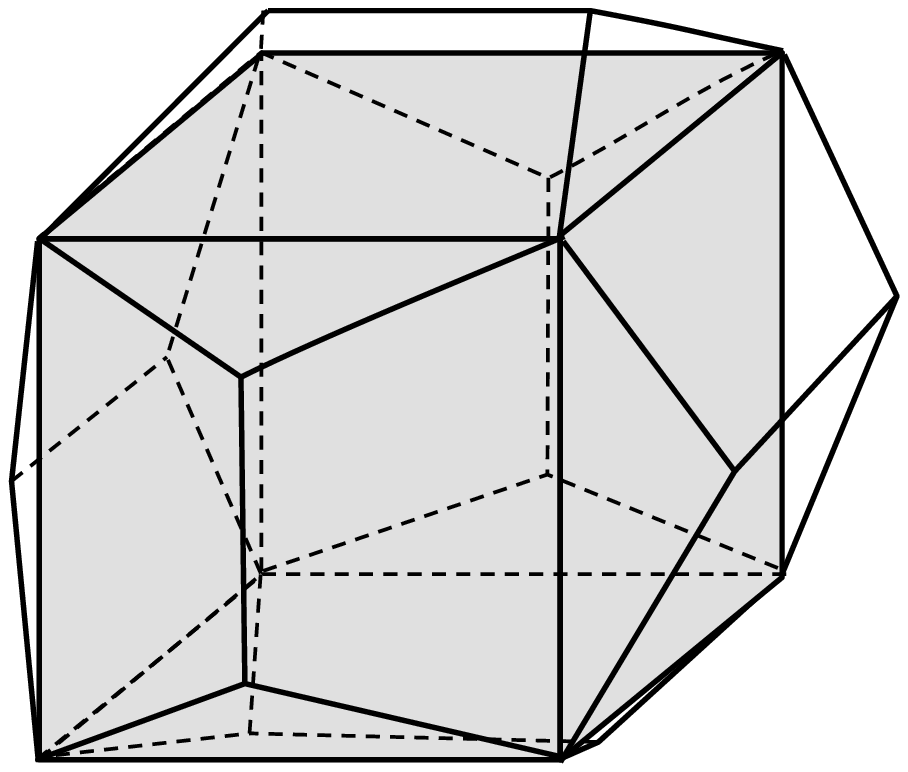}\label{fig:dodeca-cube-2}}
 \caption{Fixing $\gamma(F)$ in $\gamma(P')$.
 Instead of drawing rotation axes, we show a typical regular polyhedron
 for $\gamma(F)$ that shows the arrangement of rotation axes of
 $\gamma(F)$. }
\label{fig:3D-expand}
\end{figure}

In the above two cases, once $\gamma(F)$ is fixed,
$\widetilde{F}$ is also fixed with the condition that
$B(\widetilde{F}) = B(P')$. 

\medskip

\noindent{\bf When $\gamma(P)$ is a 2D rotation group.~}
In this case, we cannot always fix $\gamma(F)$
by using only the rotation axes of $\gamma(P)$.
Consider the case where $\gamma(P) = C_k$ and $\gamma(F)=D_k$ for some
positive integer $k$.
In this case, the robots cannot agree on the arrangement of
secondary axes of $D_k$ by using only the single rotation axis of $C_k$.
In such cases, we use the positions of point of $P$ to fix 
the rotation axes of $\gamma(F)$.
A {\em reference polygon} of $P$ for a specified $k$-fold axis
of $\gamma(P)$ is 
a regular $k$-gon on a plane that is perpendicular to $\gamma(P)$ and
contains $b(P)$.\footnote{When the principal axis of a dihedral group
$D_{\ell}$ is specified, the reference polygon also determines the
orientation of secondary axis if $\ell$ is odd. } 
We consider the specified axis as the earth's axis of $B(P)$ and 
consider the {\em equator plane} for the axis. 
Thus the reference polygon is on the equator plane.  

We also consider a reference polygon of $F$ 
in the same way. 
The role of the reference polygon is twofold; one is for $P$ to 
show reference points when fixing an arrangement of additional
rotation axes 
and the other is 
for $F$ to recover all the arrangement of rotation axes when given
the specified rotation axes and its reference polygon. 

Specifically, the robots agree on the reference polygon as follows: 

\noindent{\bf Case A. $\gamma(P) = C_k$.~}
In this case, there is only the single rotation axis. 
Let $\{P_1, P_2, \ldots, P_m\}$ be the $\gamma(P)$-decomposition of $P$. 
When $k \geq 2$, there exists at least one element in the
$\gamma(P)$-decomposition of $P$ that form a regular $k$-gon
perpendicular to the single rotation axis. 
Let $P_i$ be the element that has the minimum index among such
elements. 
Then the reference polygon is
the projection of the regular $k$-gon formed by $P_i$
onto the equator plane.

When $\gamma(P) = C_1$, each element of the $\gamma(P)$-decomposition
is a $1$-set. 
Let $P_1 = \{p_1\}$. 
Because $\gamma(P) = C_1$, there is at least one element 
that is not on the line containing $b(P)$ and $p_1$.
Let $P_i = \{p_i \}$ be the element with the minimum index among such
elements.
Then the reference polygon is the projection of $p_i$ onto the equator
plane. 

\noindent{\bf Case B. $\gamma(P) = D_{\ell}$.~}
In this case, there are two choices for the rotation axis for the
reference polygon. 
However, we do not have to consider the secondary axis of $\gamma(P)$
because
$\gamma(P) \preceq \gamma(F)$,
$\gamma(F)$ has a rotation axis whose fold is a multiple of $\ell$,
and $\psi_{PF}$ makes the principal axis of $D_{\ell}$ overlap 
such rotation axis of $\gamma(F)$.

Let $\{P_1, P_2, \ldots, P_m\}$ be the $\gamma(P)$-decomposition of $P$.
When the $\gamma(P)$-decomposition of $P$ contains $U_{D_{\ell}, 2}$,
the reference polygon is defined by the element with the minimum
index among such elements because such regular $\ell$-gons is on the
equator plane. 
Otherwise, the $\gamma(P)$-decomposition of $P$ contains at least
one $U_{D_{\ell}, 1}$ and let $P_i$ be the element with the minimum
index among such elements. 
When $P_i$ forms a regular prism, the reference polygon is
defined by the projection of the bases onto the equator plane.
When $P_i$ is not a regular prism, the robots consider
twisting the bases so that the vertices overlap the nearest plane formed
by a $2$-fold axis and the principal axis of $\gamma(P)$.
The twist follows the right screw rule with considering the
direction from a base to $b(P)$ as the positive direction
(Figure~\ref{fig:right-screw-twist}).
This twisting enables the robots agree on a regular prism 
and the reference polygon is defined in the same way as the above case. 

\begin{figure}[t]
\centering 
\includegraphics[width=2cm]{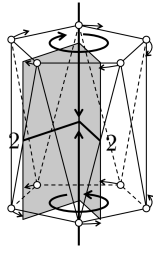}
\caption{Twisting $U_{D_5,1}$ to form a pentagonal prism.}
\label{fig:right-screw-twist}
\end{figure}

\noindent{\bf Case C. $\gamma(F)$ is cyclic or dihedral.~}
In this case, when the single rotation axis or the principal axis
is specified, the reference polygon is defined in the same way
as Case A and B.

When $\gamma(F)$ is dihedral and a secondary axis is specified,
the reference polygon is defined by the intersection of the
principal axis and the equator. 

\noindent{\bf Case D. $\gamma(F) \in \{T, O, I\}$.~}
There are three rotation axes to define a reference polygon
because $\gamma(F)$ consists of three types of rotation axes.
Figure~\ref{fig:expand} shows reference polygons
for each type of the rotation axis.
Readers can easily find that the reference polygons are defined
in the samey way as Case B, specifically,
when the specified rotation axis is oriented,
the reference polygon is defined in the same way as when
$\gamma(F)$ is cyclic (Figure~\ref{fig:exp-3toT}) 
and when the specified rotation axis is not
oriented, it is defined in the same way as when $\gamma(F)$ is
dihedral.
Find that with the specified rotation axis and its reference polygon,
we can uniquely fix the arrangement of all other rotation axes of
$\gamma(F)$. 

\begin{figure}[t]
\centering 
\subfigure[$T$ by a $2$-fold axis]
{\includegraphics[height=3cm]{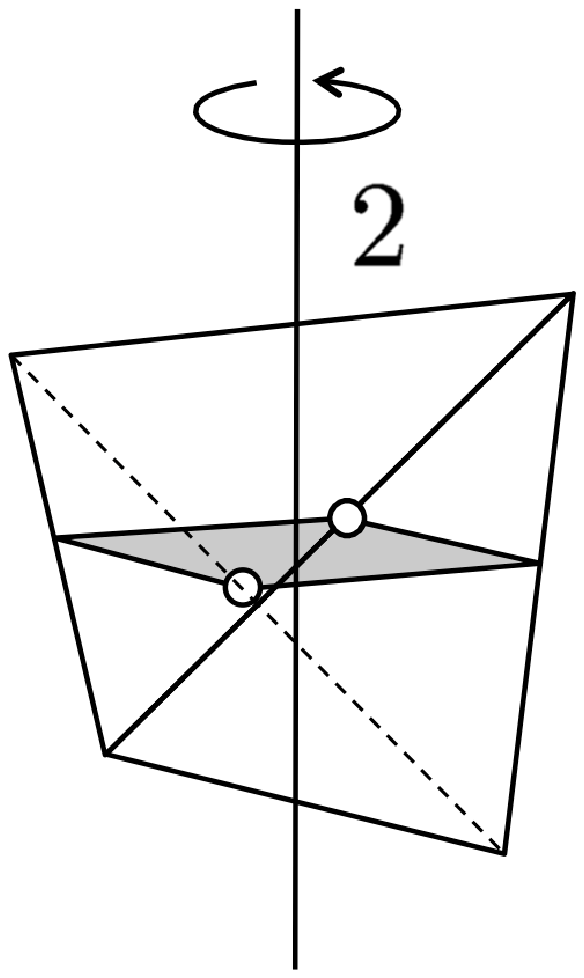}\label{fig:exp-2toT}}
\hspace{3mm}
\subfigure[$T$ by a $3$-fold axis]
{\includegraphics[height=3cm]{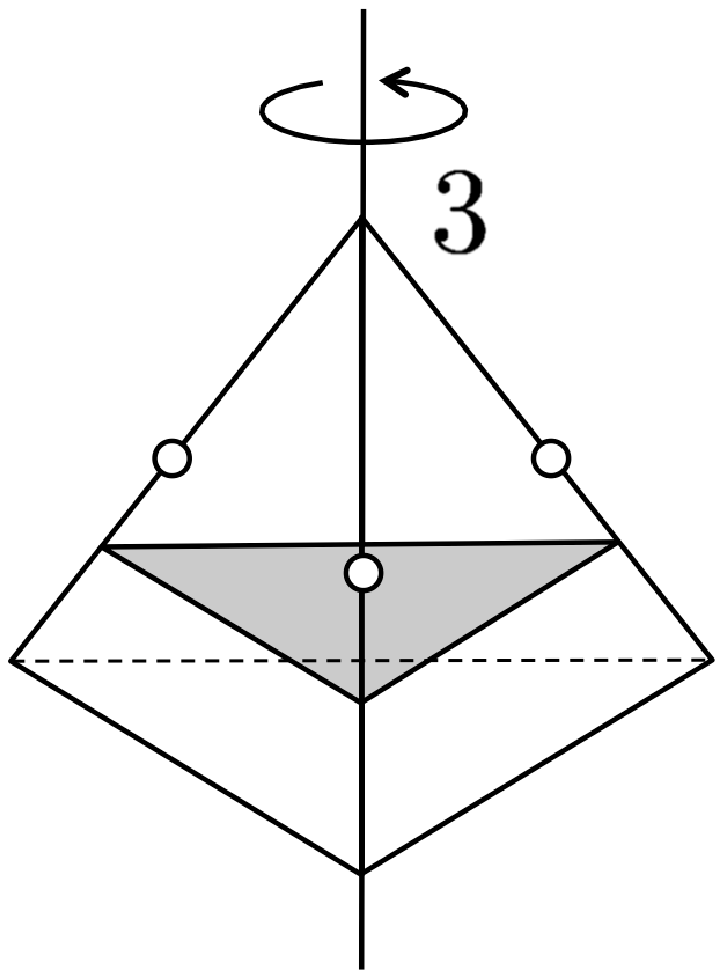}\label{fig:exp-3toT}}
\hspace{3mm}
\subfigure[$O$ by a $2$-fold axis]
{\includegraphics[height=3cm]{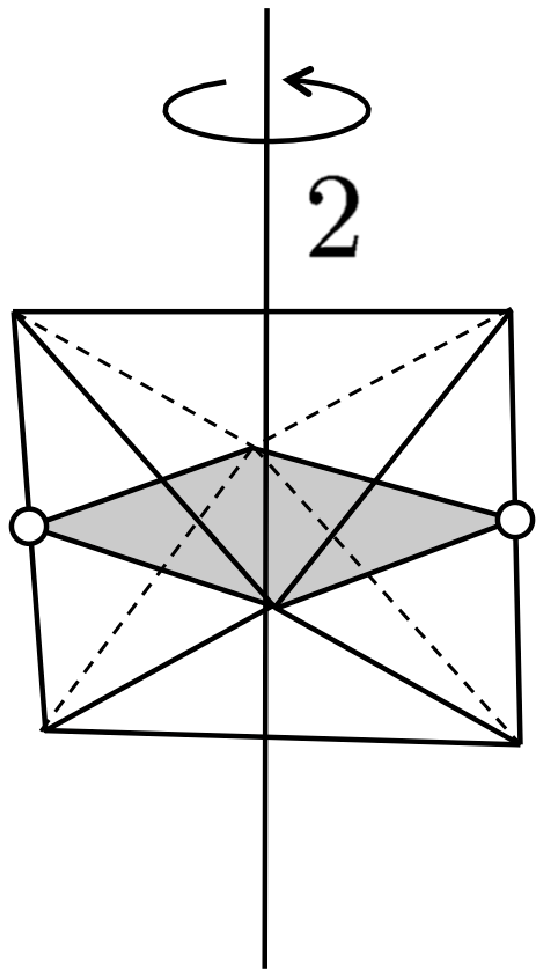}\label{fig:exp-2toO}}
\hspace{3mm}
\subfigure[$O$ by a $3$-fold axis]
{\includegraphics[height=3cm]{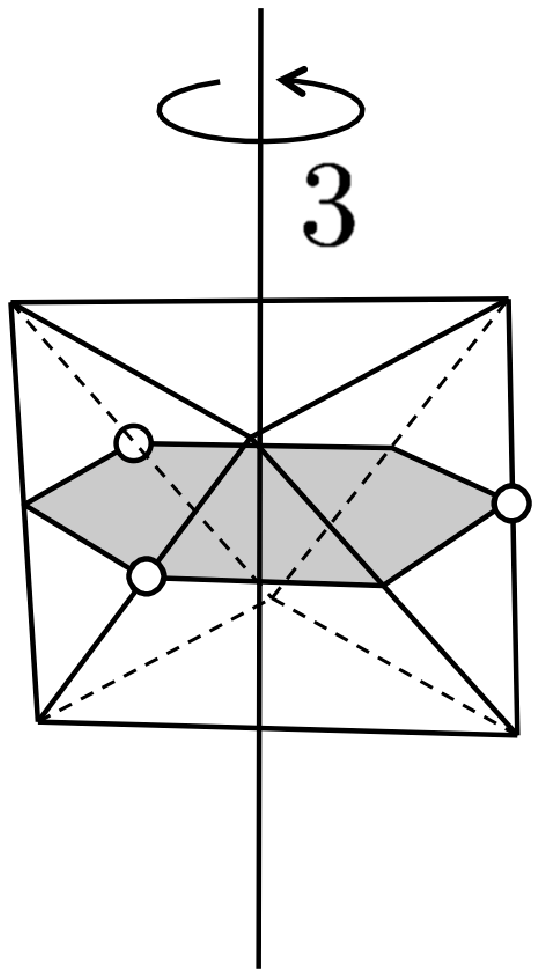}\label{fig:exp-3toO}}
\hspace{3mm}
\subfigure[$O$ by a $4$-fold axis]
{\includegraphics[height=3cm]{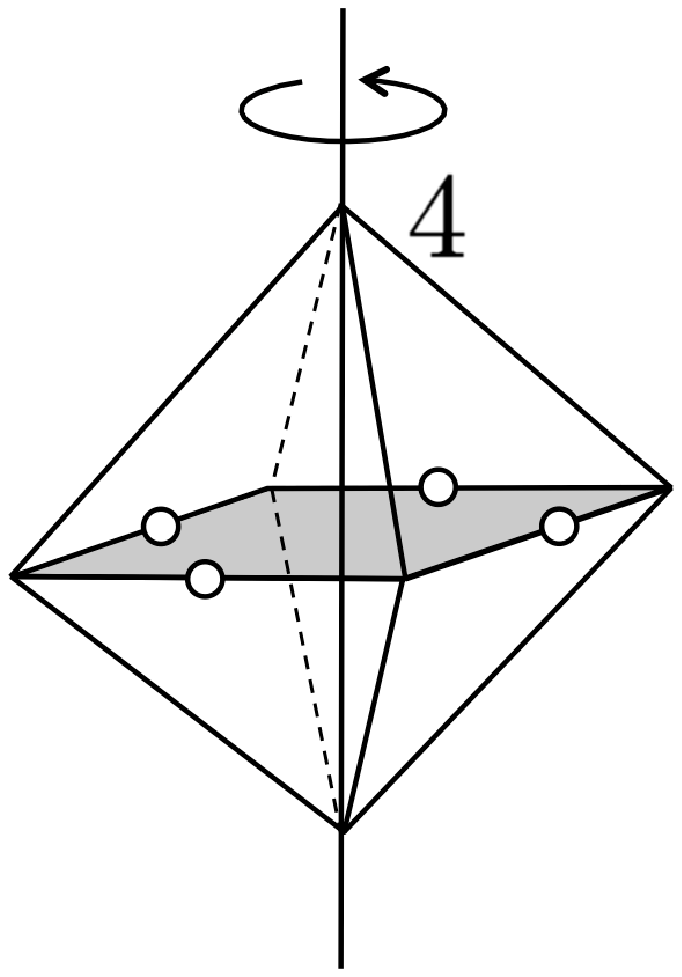}\label{fig:exp-4toO}}
\hspace{3mm}
\subfigure[$I$ by a $2$-fold axis]
{\includegraphics[height=3cm]{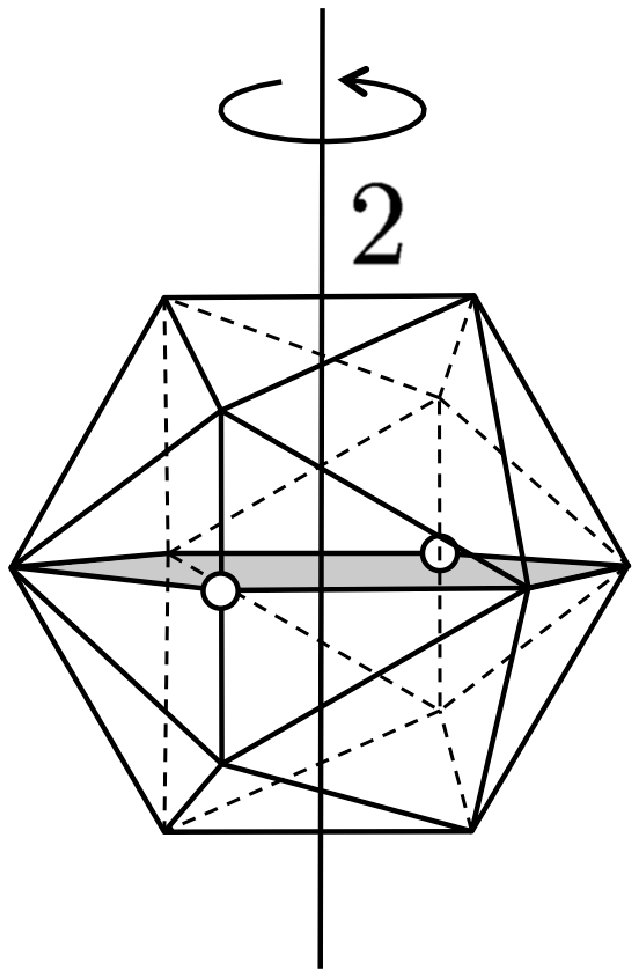}\label{fig:exp-2toI}}
\hspace{3mm}
\subfigure[$I$ by a $3$-fold axis]
{\includegraphics[height=3cm]{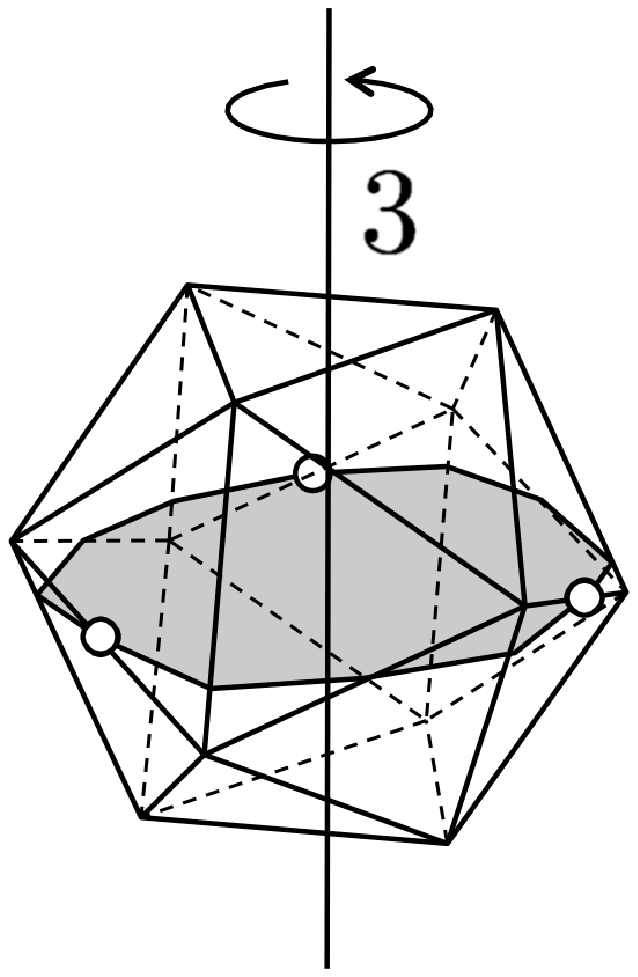}\label{fig:exp-3toI}}
\hspace{3mm}
\subfigure[$I$ by a $5$-fold axis]
{\includegraphics[height=3cm]{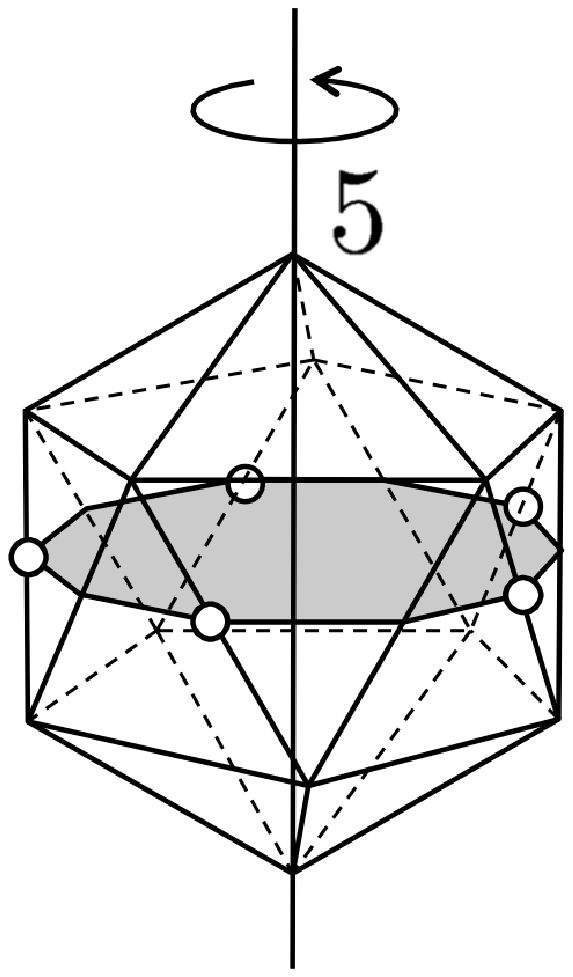}\label{fig:exp-5toI}}
 \caption{Reference polygon for $T, O, I$ when one rotation axis is
 specified. Instead of drawing rotation axes,
 we show a typical regular polyhedron
 that shows the arrangement of rotation axes of $\gamma(F)$.} 
\label{fig:expand}
\end{figure}

Now we describe how the robots fix $\gamma(F)$ in $\gamma(P)$
when $\gamma(P)$ is a cyclic group or a dihedral group.
Let the single rotation axis (or the principal axis) of $\gamma(P)$
be a $k$-fold axis. 
To fix the arrangement of $\gamma(F)$ in $\gamma(P)$,
we first consider an embedding of $\gamma(P)$ to unoccupied
rotation axes of $\gamma(F)$.
If there are multiple ways to embed $\gamma(P)$,
we select an embedding where the single (or principal) rotation axis of
$\gamma(P)$ corresponds to the maximum fold of $\gamma(F)$. 
There may be still multiple ways to embed $\gamma(P)$,
but the vertices of the reference polygon of $P$ overlap those $F$. 
For example, consider the case where $\gamma(P) = C_1$ and
$\gamma(F)=T$. In this case, $\psi_{PF}$ embeds $C_1$ to a
$3$-fold axis of $T$, and the arrangement
of $T$ does not depend on which $3$-fold axis is selected.
Algorithm $\psi_{PF}$ realizes such arrangement of $\gamma(F)$
in $\gamma(P)$ by using the specified rotation axes and
reference polygons of $P$ and $F$. 

Finally, if embedding of $\gamma(F)$ to $\gamma(P)$ does not 
fix $\widetilde{F}$,
$\psi_{PF}$ fixes $\widetilde{F}$ by overlapping the reference polygon
of $F$ to that to $P$. 
For example, when $P$ is a pyramid with a regular $k$-gon base and
$F$ is a pyramid with a regular $2k$-gon base,
$\psi_{PF}$ fixes $\widetilde{F}$ by overlapping the
regular $2k$-gon reference polygon of $F$ to
the regular $k$-gon reference polygon of $P$.

\subsection{Assigning the final position}
\label{subsec:matching} 

Let $P$ and $\widetilde{F}$ be
a terminal configuration of $\psi_{SYM}$ and 
the target pattern fixed in $P$, respectively. 
The robots now compute a perfect matching between the 
points of $P$ and the points of $\widetilde{F}$ to 
finally form the target pattern. 

We now consider the rotation group of $P \cup \widetilde{F}$. 
We consider the rotations that matches the points of $P$ to $P$ itself 
and those of $\widetilde{F}$ to $\widetilde{F}$. 
Hence, $\gamma(P \cup \widetilde{F}) \preceq \gamma(P)$. 
Actually, $\gamma(P \cup \widetilde{F}) = \gamma(P)$ because 
$\gamma(P) \in \varrho(\widetilde{F})$ and each 
$G \in \varrho(\widetilde{F}) \preceq \gamma(F)$, i.e., 
any rotation of $\gamma(P)$ is applicable to $\widetilde{F}$. 
The group action of $\gamma(P)$ divides 
$P \cup \widetilde{F}$ to a transitive set of points regarding 
$\gamma(P)$ so that each element consists of only the point of $P$ 
or only those of $F$. 
Additionally, each element consists of $|\gamma(P)|$ points 
since no robot is on $\gamma(P)$. 

Now, in the same way as \cite{YUKY15}, the robots can order the 
elements. 
Let $\{P_1, P_2, \ldots, P_m\}$ and $\{F_1, F_2, \ldots, F_m\}$ 
be the elements of $P$ and those of $\widetilde{F}$ 
that appears the entire decomposition in this order. 
Then, the proposed algorithm makes the robots forming 
$P_i$ to the positions of $F_i$ for each $1 \leq i \leq m$. 
In each element $P_i$, each robot selects the nearest point in $F_i$ 
as its destination. 
We first show that there exists a minimum weight perfect matching 
between the points of $P_i$ and $F_i$, where the weight is the 
sum of distances between matched points. 

\begin{lemma}
For each element $P_i$ and $F_i$, 
there exists a minimum weight perfect matching between the 
points of $P_i$ and the points of $F_i$. 
\end{lemma}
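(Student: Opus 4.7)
My plan is to prove this by a direct finite-set argument combined with the symmetry structure of the setup. First I would observe that since $\psi_{SYM}$ has terminated and no robot lies on any rotation axis of $\gamma(P)$ (unless $P$ is a regular polygon, the trivial case), the folding of every element of the $\gamma(P)$-decomposition is $1$, so by Lemma~\ref{lemma:fulding} we have $|P_i| = |F_i| = |\gamma(P)|$ for each index $i$. Hence the set $\mathcal{M}_i$ of bijections $\pi : P_i \to F_i$ is nonempty (it has $|\gamma(P)|!$ elements). For each $\pi \in \mathcal{M}_i$ define the weight $w(\pi) = \sum_{p \in P_i} \| p - \pi(p) \|$. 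Each $w(\pi)$ is a nonnegative real, and because $\mathcal{M}_i$ is finite the infimum $\min_{\pi \in \mathcal{M}_i} w(\pi)$ is attained by some $\pi^{*}$, which is the desired minimum-weight perfect matching.

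The more delicate point, which I expect to be the main obstacle (and the reason the lemma is isolated), is that for an anonymous-robot algorithm with mutually symmetric views we need a minimum-weight matching that is \emph{$\gamma(P)$-equivariant}, i.e., $\pi^{*}(g \ast p) = g \ast \pi^{*}(p)$ for all $g \in \gamma(P)$ and $p \in P_i$; otherwise two symmetric robots might choose different destinations and collide or produce multiplicities. I would handle this by fixing seed points $p_0 \in P_i$ and $f_0 \in F_i$ and identifying every equivariant bijection with an element $h \in \gamma(P)$ via the rule $g \ast p_0 \mapsto (gh) \ast f_0$; there are thus exactly $|\gamma(P)|$ equivariant candidates.

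To show the overall minimum is realized among the equivariant candidates, I would use that every $g \in \gamma(P)$ acts as a Euclidean isometry on $\mathbb{R}^3$, so conjugating any $\pi \in \mathcal{M}_i$ by $g$ produces a bijection $g \circ \pi \circ g^{-1}$ of the same weight. Summing weights over the orbit of $\pi^{*}$ under this conjugation action and invoking the pigeonhole principle, at least one equivariant matching attains weight $\le w(\pi^{*})$, hence equal to the minimum. The proof of the stated lemma itself is then immediate from the first paragraph, and this equivariant strengthening is what I would carry into the subsequent correctness argument of $\psi_{PF}$, so that each robot $r \in P_i$ can locally compute its destination in $F_i$ and all symmetric robots consistently produce the same perfect matching.
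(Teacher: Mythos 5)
Your first paragraph proves the literal statement, but only because the literal statement is vacuous: any nonempty finite set of matchings has a minimum-weight element. You are right that the actual content of this lemma, as the paper uses it, is that a minimum-weight perfect matching exists which is \emph{$\gamma(P)$-equivariant}, so that anonymous robots with symmetric views can all compute it consistently; your identification of the equivariant candidates with elements $h \in \gamma(P)$ via $g \ast p_0 \mapsto (gh) \ast f_0$ (valid here because no robot is on a rotation axis, so $\gamma(P)$ acts freely on both $P_i$ and $F_i$ and $|P_i| = |F_i| = |\gamma(P)|$) is also correct. The gap is in how you argue that one of these equivariant candidates attains the global minimum. The conjugates $g \circ \pi^{*} \circ g^{-1}$ of a minimum-weight matching $\pi^{*}$ all have weight $w(\pi^{*})$, but none of them need be equivariant, and ``summing weights over the orbit and invoking pigeonhole'' extracts nothing from a collection of equal numbers. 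To make an averaging argument work you would have to pass to the invariant doubly stochastic matrix $\frac{1}{|\gamma(P)|}\sum_{g} g\pi^{*}g^{-1}$ and show that, under a free transitive action on both sides, such a matrix decomposes as a convex combination of the \emph{equivariant} permutations $\Pi_h$ --- a nontrivial step you never state.

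The paper's proof avoids all of this with a direct construction that you should adopt: fix a seed $p_j \in P_i$, let $f_j$ be a nearest point of $F_i$ to $p_j$, and match $g_k \ast p_j$ to $g_k \ast f_j$ for each $g_k \in \gamma(P)$. Freeness of the action on $F_i$ makes this injective, hence a perfect matching, and since each $g_k$ is an isometry fixing $b(P)$, every point $p_k$ is thereby matched to one of \emph{its own} nearest points of $F_i$. The total weight therefore equals $\sum_{p \in P_i} \min_{f \in F_i} \|p - f\|$, which is a lower bound on the weight of every perfect matching, so this equivariant matching is minimum-weight --- no pigeonhole needed. (The remaining issue, that a point may have several nearest destinations and symmetric robots must break the tie consistently, is deliberately deferred to the next lemma in the paper, which shows the ties form cycles around a unique rotation axis and are resolved by a right-hand screw rule.)
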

\begin{proof}
For an arbitrary $p_j \in P_i$, 
let $f_j$ be one of the nearest point in $F_i$. 
 Because $P_i$ is transitive regarding $\gamma(P)$ and
 $|P_i| = |\gamma(P)|$, 
for each $p_k \in P_i$, there exists $g_k \in \gamma(P)$ 
such that $g_k * p_j = p_k$, and for any $p_k \neq p_{\ell}$, 
$g_k \neq g_{\ell}$. 
We apply $g_k$ to $f_j$ so that we obtain the matching 
point for each $p_k \in P_i$. 
 Because $F_i$ is also transitive regarding $\gamma(P)$ and
 $|F_i| = |\gamma(P)|$, 
 this procedure produces distinct matching points for each $p_k \in P_i$. 
Consequently, we obtain a minimum weight perfect matching 
between $P_i$ and $F_i$. 
\qed
\end{proof}

However, each point of $p \in P_i$ may have multiple nearest 
destinations. 
Figure~\ref{fig:multiple-dest} shows an example where 
$P_i$ forms a expanded cube and $F_i$ forms a truncated cube.
For each robot (white circle), there are two nearest destinations
(black circles) around the nearest corner of the cube. 
In this case, we can show that the conflict forms a cycle around a 
rotation axis and the robots can resolve it 
by a right-screw rule around the rotation axis. 
The following lemma shows that we can apply this idea to
resolve any conflict. 

\begin{figure}[t]
\centering 
\includegraphics[width=3cm]{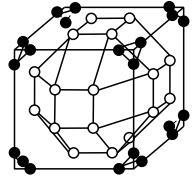}
\caption{Minimum weight perfect matching between 
 elements of $P$ and $\widetilde{F}$. The white circles are positions of
 the robots, 
 and the black circles are the positions of destinations.
 Thus $P$ forms a expanded cube and $\widetilde{F}$ forms a
 truncated cube. 
 Each robots has two nearest target points.}
\label{fig:multiple-dest}
\end{figure}

\begin{lemma} 
Consider the graph $G$ formed by vertices of $P_i \cup F_i$. 
For each $p \in P_i$, if $f \in F_i$ is a nearest destination, 
then we have edge $(p,f)$. 
Then, if $G$ contains a cycle, it is around a rotation axes of 
$\gamma(P)$ and such rotation axis is uniquely determined for each 
cycle. 
\end{lemma}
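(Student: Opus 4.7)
I would exploit the fact that the bipartite graph $G$ inherits the symmetry of $\gamma(P)$, and then show that any cycle of $G$ must be carried to itself by a non-trivial rotation of $\gamma(P)$, whose axis is the one claimed by the lemma.

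\emph{Step 1: Invariance and freeness of the group action.} Every $g\in\gamma(P)$ is an isometry permuting both $P_i$ and $F_i$. Hence $f$ is a nearest destination of $p$ if and only if $g\cdot f$ is a nearest destination of $g\cdot p$, so $\gamma(P)$ acts on $G$ by graph automorphisms. Because $|P_i|=|F_i|=|\gamma(P)|$ (this comes from the terminal condition of $\psi_{SYM}$ and the chosen embedding in Section~\ref{subsec:embedding}) and both $P_i$ and $F_i$ are $\gamma(P)$-transitive, the action is \emph{free}: no non-identity element of $\gamma(P)$ fixes any vertex of $G$.

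\emph{Step 2: Exhibit a rotation that cyclically shifts the cycle.} Let $C=p_1 f_1 p_2 f_2 \cdots p_k f_k$ be a cycle in $G$. By freeness there is a unique $g\in\gamma(P)$ with $g\cdot p_1=p_2$, and $g\ne e$ because $p_1\ne p_2$. I would prove that $g$ cyclically shifts $C$, i.e.\ $g\cdot p_j = p_{j+1}$ and $g\cdot f_j=f_{j+1}$ (indices mod $k$). Applying $g$ to $C$ yields another cycle $g(C)$ passing through $p_2$, with edge $(p_2,\,g\cdot f_1)$. The freeness of the action, combined with the fact that cycles share edges in a constrained way in a bipartite graph, forces $g\cdot f_1=f_2$ (and hence inductively the rest). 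Once this is established, $V(C)\cap P_i = \langle g\rangle\cdot p_1$ and $V(C)\cap F_i=\langle g\rangle\cdot f_1$, so $|\langle g\rangle|=k$.

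\emph{Step 3: Identify and uniquely determine the axis.} Since $g\in\gamma(P)\subset SO(3)$ is a non-identity element, it is a rotation about a unique axis $\ell$ through $b(P)$. Its orbits lie on circles in planes perpendicular to $\ell$, so $V(C)$ decomposes into two such circular orbits and the cycle literally winds around $\ell$. For uniqueness, suppose another rotation $g'\in\gamma(P)$ about a different axis $\ell'$ also cyclically preserved $V(C)$. Then $\langle g,g'\rangle$ is a subgroup of $\gamma(P)$ that stabilizes $V(C)$ setwise. By the classification of finite subgroups of $SO(3)$ in $\mathbb{S}$ (Section~\ref{sec:def-rg}), any such subgroup containing rotations about two distinct axes is dihedral or polyhedral and has order strictly greater than $|\langle g\rangle|=k$. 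By freeness the orbit of $p_1$ under $\langle g,g'\rangle$ has cardinality $|\langle g,g'\rangle|>k$, contradicting $|V(C)\cap P_i|=k$. Hence $\ell'=\ell$.

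\emph{Main obstacle.} The delicate step is Step~2, namely showing $g(C)=C$ with $g$ shifting by exactly one position. The difficulty is that $p_2$ may have several nearest destinations in $F_i$, so a priori $g\cdot f_1$ need not coincide with $f_2$. I expect to handle this by restricting attention to a minimal cycle through the edge $(p_1,f_1)$ (so that $V(C)\cap P_i$ and $V(C)\cap F_i$ are single $\langle g\rangle$-orbits of the same size $k$) and by a careful edge-tracking argument: since $g$ is an automorphism and cycles sharing an edge in a bipartite graph either coincide or glue into a shorter cycle, minimality forces $g(C)=C$; any cycle decomposes into such minimal rotation-symmetric pieces.
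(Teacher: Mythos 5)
Your overall strategy---equivariance of the nearest-destination relation under $\gamma(P)$, a rotation that cyclically shifts the cycle, and identification of the claimed axis with the axis of that rotation---is essentially the paper's. Steps 1 and 3 are sound (your uniqueness argument in Step 3 is in fact more explicit than the paper's one-line assertion). The gap is exactly where you flag it, in Step 2, and your proposed repair does not close it. The paper opens with a geometric claim that you omit entirely: \emph{each $p\in P_i$ has at most two nearest destinations in $F_i$}. The argument is that if $p$ had $k>2$ equidistant nearest points of $F_i$, those $k$ points would lie on a circle of $Ball(F_i)$ enclosing no other point of $F_i$, hence would span a face of the convex hull of the spherical transitive set $F_i$; since $|F_i|=|\gamma(P)|$, the center of such a face lies on a rotation axis, which would force $p$ onto that axis and contradict $|P_i|=|\gamma(P)|$. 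With this degree bound your Step 2 closes immediately: every $p_j$ on a cycle has exactly the two neighbours $f_{j-1},f_j$, so $g\cdot f_1$, being a nearest destination of $p_2=g\cdot p_1$, lies in $\{f_1,f_2\}$; freeness excludes $g\cdot f_1=f_1$, hence $g\cdot f_1=f_2$, and the shift propagates around the cycle.

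Without that bound, your fallback does not work. If $p_2$ could have a third nearest destination, then $g\cdot f_1$ need not lie on $C$ at all, so $C$ and $g(C)$ need not share an edge, only the vertex $p_2$. Moreover, the assertion that two distinct cycles sharing an edge ``glue into a shorter cycle'' is false in general: their symmetric difference is a non-empty even subgraph containing a cycle, but that cycle can be longer than either original one (two $6$-cycles sharing a single edge yield a $10$-cycle), so restricting to a minimal cycle through $(p_1,f_1)$ buys you nothing. You must first establish the degree-two bound on the $P_i$-side of $G$ (or some equivalent control on the nearest-destination graph) before the cyclic-shift argument can be run; that bound is the load-bearing geometric step of the paper's proof.
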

\begin{proof}
Clearly, each $p_j \in P_i$ has at least one nearest destination. 
First, we prove that each $p_j \in P_i$ has at most two 
nearest points of $F_i$. 
 Assume $p_j$ has $k>2$ nearest points of $F_i$.
 Then these points are on a circle on $B(F_i)$
 and because they are nearest to $p_j$, 
no point of $F_i$ is in this circle. 
Hence these $k$ points are on one face of $F_i$. 
Because $F_i$ is transitive set of points, 
they are on a sphere, these $k$ points form a face of the convex hull of 
$F_i$, 
 and $p_j$ is on the line connecting the center of the face
 and $b(P)$. 
Because $|F_i| = |\gamma(P)|$, the center of each face of $F_i$ 
 intersects with a rotation axis and $p_i$ is on a rotation axis,
 which contradicts 
the fact that $|P_i| = |\gamma(P)|$. 

Second, we will show that if $p_j \in P_i$ has two nearest destinations, 
 then there is a cycle in $G$ containing $p_j$ around a rotation axis of
 $\gamma(P)$.
 Let $f_{j1}, f_{j2}$ be the two nearest destinations of $p_j$. 
 Then in the same way as the above discussion, each $p_k \in P_i$ has
 such two nearest destinations. 
 By the counting argument, there exists at least one $p_k \in P_i$ such
 that $\{f_{j1}, f_{j2}\} \cap \{f_{k1}, f_{k2}\} \neq \emptyset$.
 Let $f_{j2} = f_{k1}$.
 Because $P_i$ is transitive, for $p_k$, there exists
 $g_k \in \gamma(P)$ such that $g_k * p_j = p_k$. 
 Then, there exists another point $p_{\ell} = g_k * p_k \in P_i$
 such that $f_{k2} = f_{\ell1}$.
 Be repeating this argument, there exists a subset of $P_i$
 that share their nearest destinations.
 Remember that $g_k$ is a rotation around some rotation axis of
 $\gamma(P)$. Thus $g_k$ forms a cyclic group and these subsets are
 around this rotation axis. Thus any cycle in $G$ is around some
 rotation axis of $\gamma(P)$.
 Additionally such rotation axis is uniquely determined. 
\qed 
\end{proof}

For each cycle, the robots can recognize the unique 
rotation axis that produces the cycle, 
and they can resolve the conflict by the right-handed screw rule with
the positive direction being $b(P)$, 
i.e., they select the nearest element in the clockwise direction 
around the rotation axis. 

We denote the entire matching obtained by these rules 
by $M(P, \widetilde{F})$. 
Remember that all computations consisting of 
finding reference polygon, 
fixing $\widetilde{F}$, 
decomposition $P \cup \widetilde{F}$, and 
computing $M(P, \widetilde{F})$ is done in one Compute phase 
in a terminal configuration of $\psi_{SYM}$. 
Finally, robots move the corresponding position in $M(P, \widetilde{F})$ 
to complete the pattern formation. 

We finally show the proposed pattern formation algorithm $\psi_{PF}$
in Algorithm~\ref{alg:pf}. 

\begin{algorithm}
\caption{Pattern formation algorithm $\psi_{PF}$ for robot $r_i \in R$}
\label{alg:pf} 
\begin{tabbing}
xxx \= xxx \= xxx \= xxx \= xxx \= xxx \= xxx \= xxx \= xxx \= xxx
\kill 
{\bf Notation} \\ 
\> P: The positions of robots observed in $Z_i$. \\ 
\> $p_i$: current position of $r_i$. \\ 
\\ 
{\bf Algorithm}  \\ 
\> {\bf If} $P$ is not a terminal configuration of $\psi_{SYM}$ {\bf then} \\ 
\> \> Execute $\psi_{SYM}$. \\ 
\> {\bf Else} \\ 
\> \> Let $\widetilde{F}$ be the target pattern fixed in $P$. \\ 
\> \> Move to the matched point in $M(P, \widetilde{F})$. \\ 
\> {\bf Endif}
\end{tabbing}
\end{algorithm}

\section{Discussion and conclusion} 
\label{sec:concl}

We have shown a necessary and sufficient condition for
FSYNC robots to form a given target pattern. 
We introduce the notion of symmetricity of positions of robots 
in 3D-space and used it to characterize the pattern formation problem.

In this section, we consider target patterns with multiplicity and
show that we have the same characterization.
To define the symmetricity of target patterns with multiplicities,
we extend the notion of symmetricity as follows:
A multiset of points $P$ is transitive regarding a
rotation group $G \in {\mathbb S}$ if 
it is one orbit regarding $G$ and
the multiplicity of point $p \in P$ on a $k$-fold
rotation axis of $G$ is $k$. 

\begin{definition}
\label{def:multi-symmetricity}
Let $P$ be a multiset of points. 
The symmetricity of $P$, denoted by $\varrho(P)$, 
is the set of rotation groups $G \in {\mathbb S}$
such that there is an arrangement of $G$ that decomposes $P$ 
into transitive multisets or transitive sets of points regarding $G$. 
\end{definition}

The above definition adds the decomposition of $F$
into transitive multiset of points when $F$ contains multiplicity. 
For example, consider a target pattern $F$ whose
points occupy the vertices of a cube but each
vertex contains three points of $F$. 
Thus $|F| = 24$ and $\varrho(F) = \{O\}$.
Clearly, from an initial configuration $P$ that forms
a truncated cube, the oblivious FSYNC robots can form $F$
by each robot gathering the nearest vertex of the cube.
The following theorem directly follows from the
discussions through this paper.

\begin{theorem}
\label{theorem:main-mult}
Regardless of obliviousness, 
FSYNC robots can form a target pattern $F$ with
multiplicity 
from 
an initial configuration $P$ 
if and only if 
$\varrho(P) \subseteq \varrho(F)$. 
\end{theorem}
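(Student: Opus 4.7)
The plan is to verify that both directions of Theorem~\ref{theorem:main} extend to target patterns with multiplicity under the extended Definition~\ref{def:multi-symmetricity}. Since the initial configuration $P$ is still multiplicity-free by the standing assumption, the changes concentrate in two places: the impossibility argument (where multiset configurations arise as intermediate states of arbitrary algorithms, so almost nothing changes) and the matching step of $\psi_{PF}$ (which must now accommodate elements of the $\gamma(P)$-decomposition of $\widetilde{F}$ that are transitive multisets rather than transitive sets).

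For necessity, the existing impossibility machinery applies with essentially no modification, because Lemma~\ref{lemma:sigma-impossibility} and Lemma~\ref{lemma:sigma-impossibility-memory} were already proved for arbitrary reachable configurations $P(t)$, possibly with multiplicity. The one new ingredient is a multiset analog of Lemma~\ref{lemma:sigma-rho}: for any reachable configuration $Q$, $\sigma(Q) \in \varrho(Q)$. This follows directly from the definitions---a point of $Q$ lying on a $k$-fold axis of $\sigma(Q)$ must carry multiplicity exactly $k$ (the $k-1$ robots sharing its position carry the rotated copies of its local frame demanded by $\sigma(Q)$), so each element of the $\sigma(Q)$-decomposition is either a transitive multiset on an axis or a transitive $|\sigma(Q)|$-set off all axes, matching the condition of Definition~\ref{def:multi-symmetricity}. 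With this in hand, assuming $G \in \varrho(P) \setminus \varrho(F)$, the symmetric initial arrangement supplied by Lemma~\ref{lemma:rho-conf} forces $\sigma(P(t)) \succeq G$ forever; should some $P(t) \simeq F$ be reached, $\sigma(P(t)) \in \varrho(P(t)) = \varrho(F)$ combined with downward closure of $\varrho(F)$ under $\preceq$ (immediate from Definition~\ref{def:multi-symmetricity}) would yield $G \in \varrho(F)$, a contradiction.

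For sufficiency, I would reuse $\psi_{PF}$ with a modified matching step. The preprocessor $\psi_{SYM}$ operates only on $P$ (which has no multiplicity) and still delivers a terminal configuration whose rotation group lies in $\varrho(P(0)) \subseteq \varrho(F)$. The target-pattern embedding of Section~\ref{subsec:embedding} uses only rotation axes and reference polygons and is therefore untouched; it produces $\widetilde{F}$ whose $\gamma(P)$-decomposition $\{\widetilde{F}_1, \ldots, \widetilde{F}_m\}$ now may contain transitive multisets on the rotation axes. For each element $\widetilde{F}_i$ that is a transitive multiset, all its $|\gamma(P)|$ copies coincide at a single axial point; the corresponding $P_i$ consists of $|\gamma(P)|$ off-axis robots, and $\psi_{PF}$ simply sends every robot of $P_i$ to that common point. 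For each element $\widetilde{F}_i$ that is a genuine $|\gamma(P)|$-set, the nearest-point assignment and right-screw tie-break of Section~\ref{subsec:matching} are applied verbatim.

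The principal obstacle is verifying that the symmetry-respecting matching remains well defined when $\widetilde{F}$ carries multiplicities. Concretely, one must confirm that the decomposition of $P \cup \widetilde{F}$ into $\gamma(P)$-orbits still separates $P$-elements from $\widetilde{F}$-elements and that the ordering via local views of Section~\ref{sec:def-rg} extends to this mixed setting. Both hold because a $P$-element lies entirely off the axes (by the terminal property of $\psi_{SYM}$), whereas any multiset element of $\widetilde{F}$ lies entirely on an axis, giving clean radial separation. Once these routine checks are completed, the remainder of the argument is identical to Sections~\ref{sec:nec} and~\ref{sec:suf}.
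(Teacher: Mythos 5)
Your overall route coincides with the paper's: necessity is inherited from the impossibility machinery of Section~\ref{sec:nec} (whose lemmas already tolerate multisets as intermediate configurations), and sufficiency reuses $\psi_{SYM}$ and the embedding of Section~\ref{subsec:embedding}, with only the matching step of Section~\ref{subsec:matching} needing adjustment. Your explicit multiset analog of Lemma~\ref{lemma:sigma-rho} is a welcome addition, since the paper merely asserts that necessity is ``clear''; note only that a point on a $k$-fold axis of $\sigma(Q)$ carries multiplicity a \emph{multiple} of $k$ (one block of $k$ per $\sigma(Q)$-orbit, because a right-handed frame has trivial stabilizer), not exactly $k$. The orbit decomposition still produces transitive multisets in the sense of Definition~\ref{def:multi-symmetricity}, so your conclusion stands.

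The genuine problem is your description of the multiset elements of $\widetilde{F}$. A transitive multiset regarding $G=\gamma(P)$ whose seed lies on a $k$-fold axis is \emph{not} ``$|G|$ copies coinciding at a single axial point'': it consists of $|G|/k$ distinct axial positions, each of multiplicity $k$; the single-point case occurs only at $b(F)$ or when $G$ is cyclic. The paper's own illustrative example is a cube whose eight vertices each carry three target points with $\gamma(P)=O$: there $\widetilde{F}_i$ has eight distinct positions of multiplicity three, and your prescription ``send every robot of $P_i$ to that common point'' is undefined. What is needed is to distribute the $|G|$ off-axis robots of $P_i$ over the $|G|/k$ axial positions, $k$ apiece; the minimum-weight-matching lemma extends verbatim (transporting one robot's nearest destination $f_j$ by the elements $g_k$ reproduces $F_i$ with the correct multiplicities), and the nearest-point assignment with the right-screw tie-break still resolves conflicts. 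Separately, the obstacle the paper itself singles out is not your ``radial separation'' check but the fact that when a position of $F$ has multiplicity exceeding its axis fold, $\widetilde{F}$ splits into several transitive (multi)sets occupying identical positions, so the local-view ordering cannot distinguish them; this is harmless because the robots only need to order the elements of $P$, and any assignment among coincident $\widetilde{F}$-elements yields the same final configuration. With these two repairs your argument goes through.
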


The necessity is clear from the discussion in Section~\ref{sec:nec}.
The proposed pattern formation algorithm can be easily extended to
target patterns with multiplicity.
The only difference is when we consider the $\gamma(P')$-decomposition
of $P' \cup \widetilde{F}$ where the robots cannot agree on a unique
ordering of the 
elements formed by $\widetilde{F}$ because of the multiplicity.
However, this procedure does not require the robots to agree on the
ordering of elements formed by $\widetilde{F}$
occupying the same positions.
The robots just agree on the ordering among the elements formed by $P'$,
and which elements is assigned to such positions with multiplicity.

Our future direction is to consider the pattern formation problem
for weaker robot models, for example, 
\begin{itemize}
 \item SSYNC or ASYNC robots, 
 \item Robots with non-rigid movement, 
 \item Robots with limited visibility, and 
 \item Robots without chirality. 
\end{itemize}
Another question is whether there exists a clear separation
between the ability of robots in 2D-space and
that of robots in 3D-space.

\newpage

\appendix 

\section{Property of rotation groups}

\label{app:rotation-groups}

\noindent{\bf Property~\ref{property:d2-principal}.~} 
{\it
Let $P \in {\cal P}_n^3$ be a set of points. 
If $D_2$ acts on $P$ and we cannot distinguish the principal axis of 
(an arbitrary embedding of) $D_2$, then $\gamma(P) \succ D_2$. 
}
\begin{proof}
Without loss of generality, 
we can assume that $x$-$y$-$z$ axes of the global coordinate system $Z_0$
 are the $2$-fold axes of 
 $D_2$.\footnote{There exists a translation consisting of
 rotation and translation that overlaps the $2$-fold axis of
 $\gamma(P)$ to the three axes.}
We define the octant according to $Z_0$ as shown in 
Figure~\ref{fig:rot3-0} and Table~\ref{table:octant}. 

\begin{table}[t]
\begin{center}
\caption{Definition of octant} 
\label{table:octant} 
\begin{tabular}[t]{|c|c|c|c|}
Number & $x$ & $y$ & $z$\\
\hline 
1 & + & + & + \\
2 & - & + & + \\
3 & - & - & + \\
4 & + & - & + \\
5 & + & + & - \\
6 & - & + & - \\
7 & - & - & - \\
8 & + & - & - \\
\end{tabular}
\end{center}
\end{table}

We consider the positions of points of $P$ in the first octant,  
which defines the positions of points of $P$ in the 
third, sixth, and the eighth octant by the rotations of $D_2$. 
The discussion also holds symmetrically in the second octant, 
that determines the positions of points in the 
fourth, fifth, and seventh octant. 

 We focus on a point $p \in P$ and
 depending on the position of $p$, we have the 
following five cases. 
\begin{itemize}
\item $p$ is on the $x$-axis (thus, the discussion follows for 
$y$-axis and $z$-axis, respectively). 
\item $p$ is on the $x$-$y$ plane (thus, the discussion follows for 
$y$-$z$ plane and $z$-$x$ plane, respectively). 
\item $p$ is on the line $x=y=z$. 
\item other cases. 
\end{itemize}
We will show that in any of the four cases, 
if we cannot recognize the principal axis, then 
we can rotate $P$ around the four $3$-fold axis $x=y=z$, 
$-x=y=z$, $-x=-y=z$, and $x=-y=z$. 

\begin{figure}[t]
\centering 
\subfigure[] 
{\includegraphics[width=4cm]{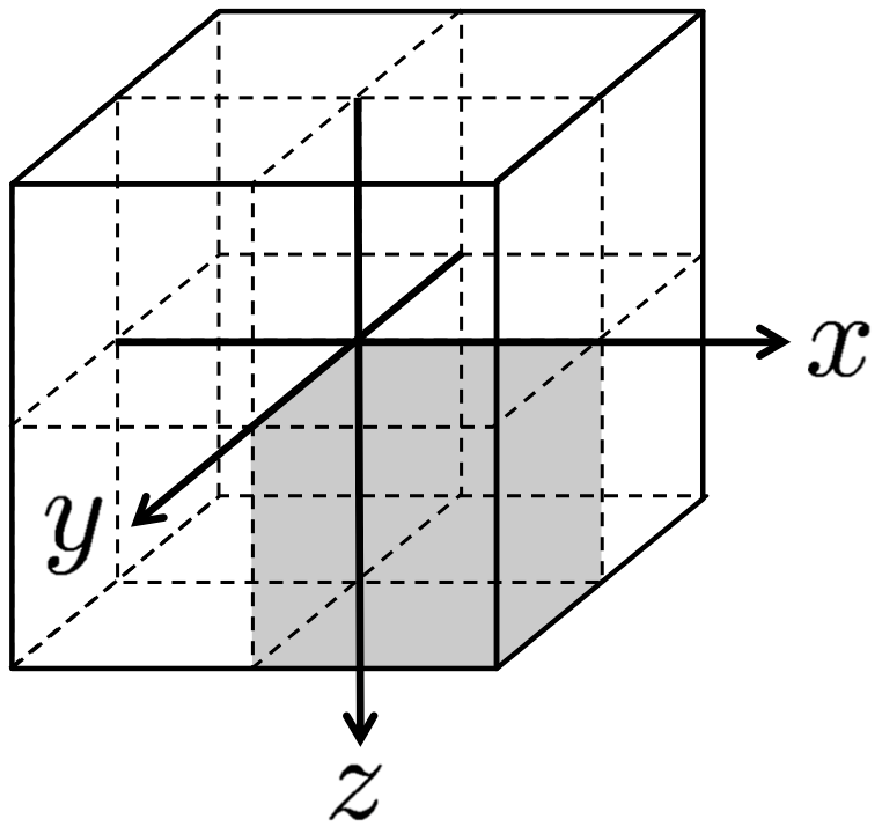}\label{fig:rot3-0}}
\hspace{3mm}
\subfigure[] 
{\includegraphics[width=4cm]{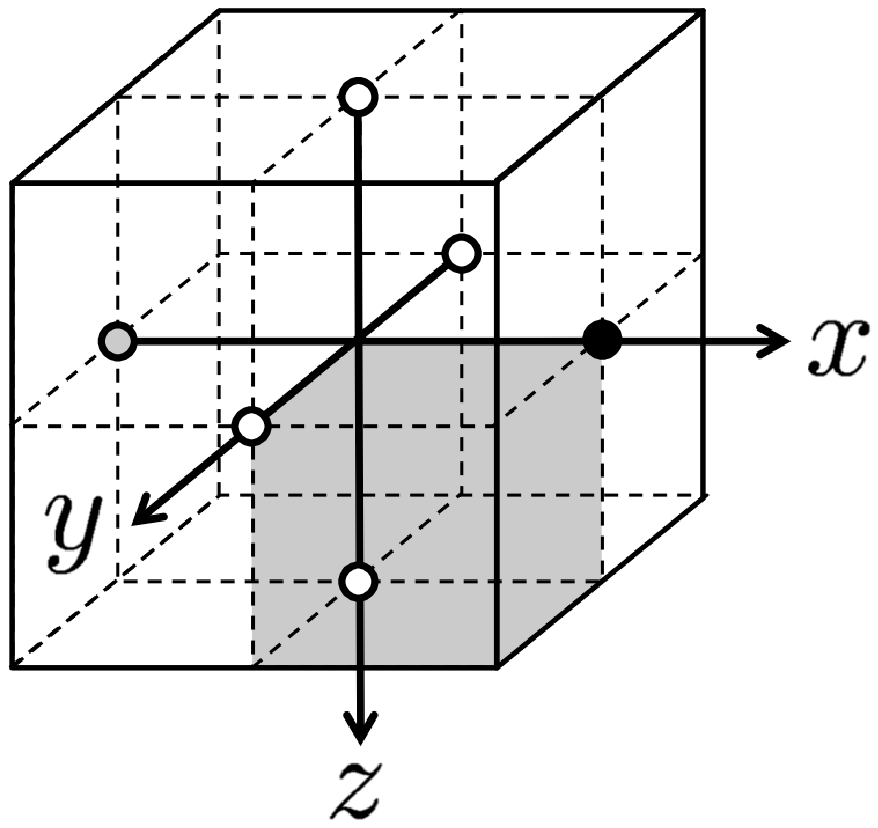}\label{fig:rot3-1}}
\hspace{3mm}
\subfigure[] 
{\includegraphics[width=4cm]{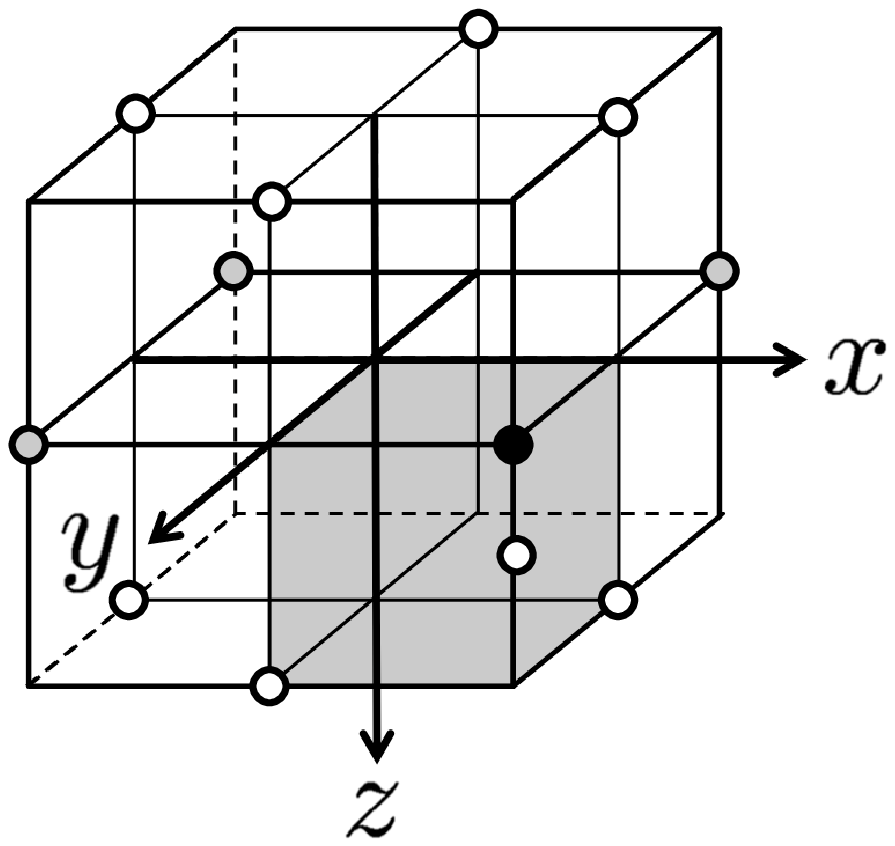}\label{fig:rot3-2}}
\hspace{3mm}
\subfigure[] 
{\includegraphics[width=4cm]{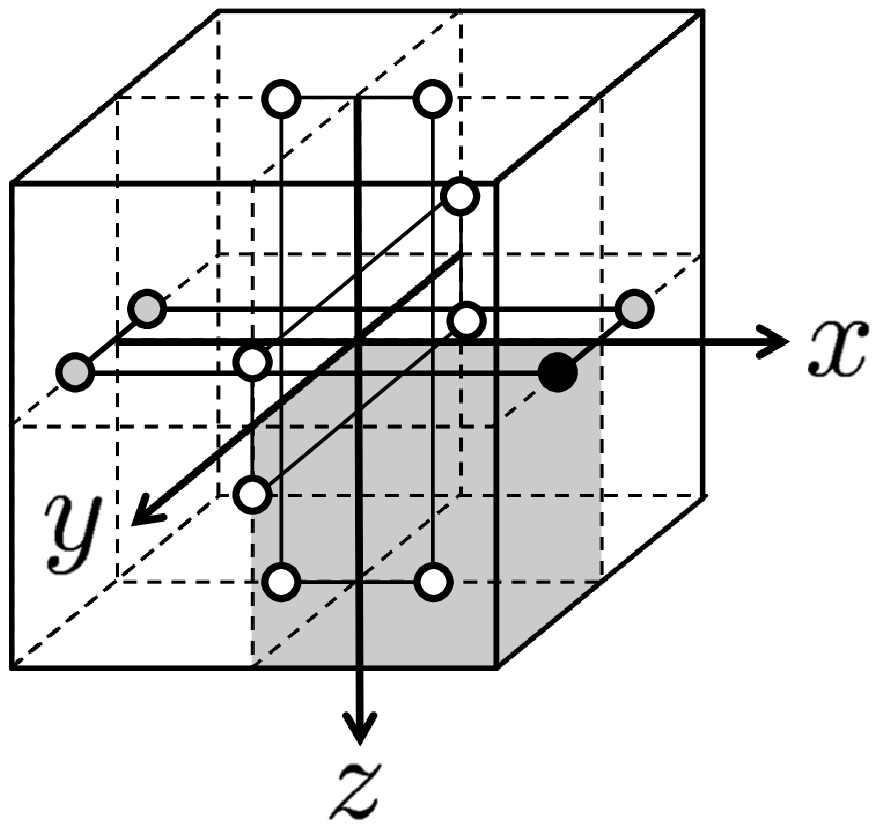}\label{fig:rot3-3}}
\hspace{3mm}
\subfigure[] 
{\includegraphics[width=4cm]{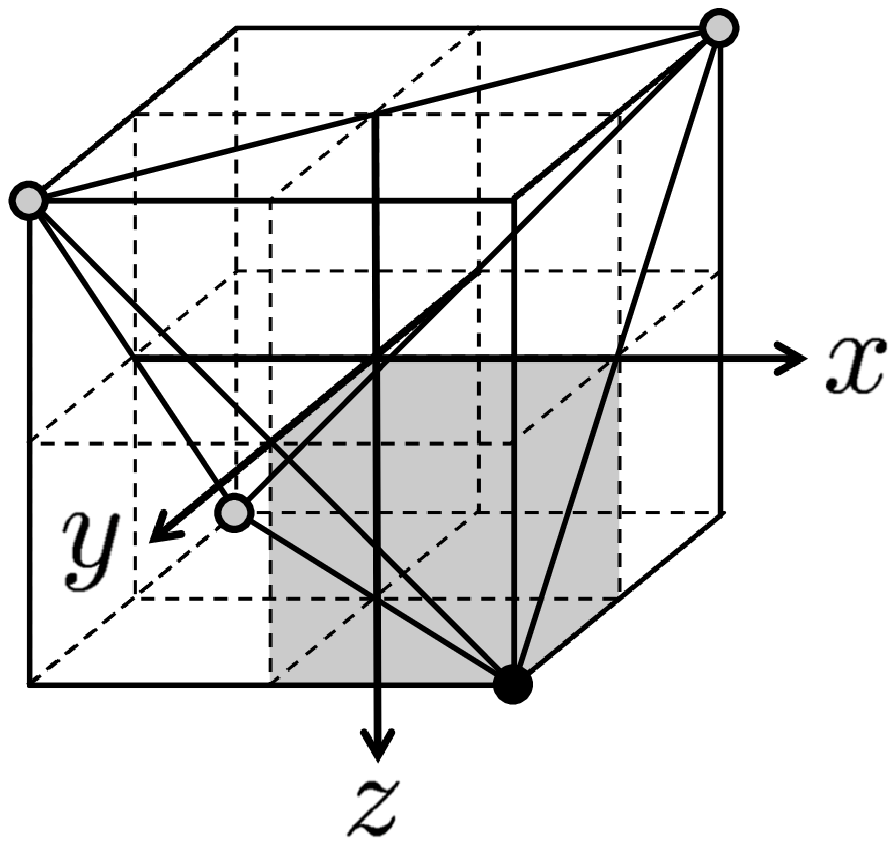}\label{fig:rot3-4}}
\hspace{3mm}
\subfigure[] 
{\includegraphics[width=4cm]{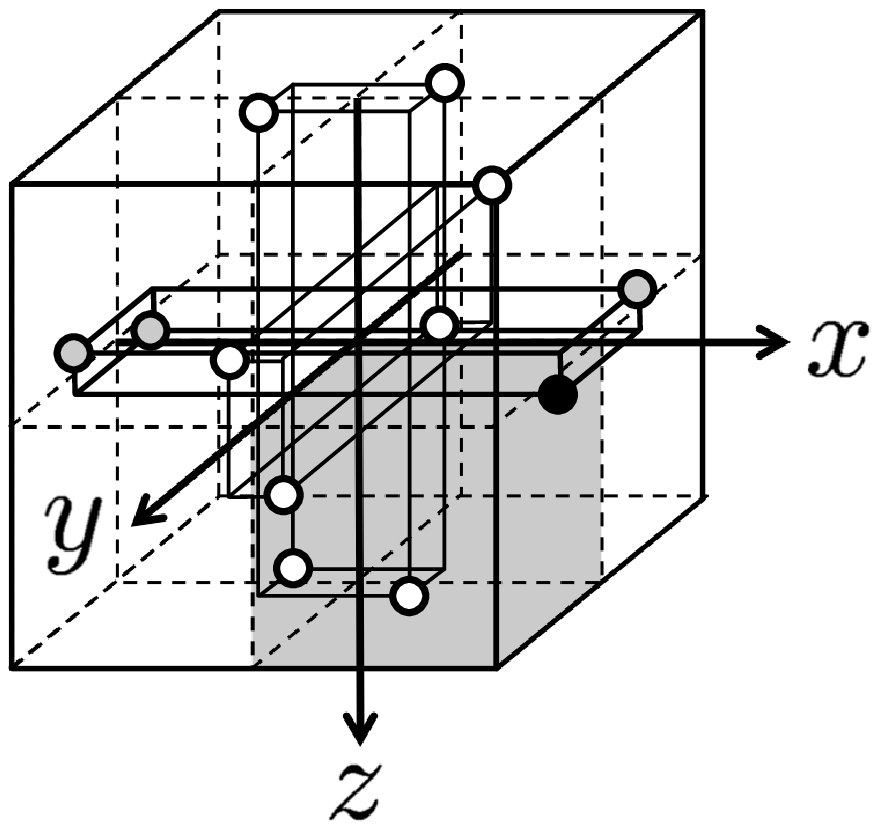}\label{fig:rot3-5}}
\caption{Position of a point of $P$ in the first octant, 
and the corresponding points generated by the $D_2$. 
The first octant is shown in the gray box in (a). 
The black circle is a point of $P$, and the gray circles are 
the points generated by $D_2$. 
The white circles are generated so that none of the three 
rotation axes is recognized.  }
\label{fig:rot3}
\end{figure}

\noindent{\bf Case A:~} When $p \in P$ is on the $x$-axis. 
Because $\gamma(P) = D_2$, we have a corresponding point on 
the negative $x$-axis (Figure~\ref{fig:rot3-1}). 
This allows us to recognize the $x$-axis from the $y$-axis and 
$z$-axis, hence $P$ should have corresponding points on 
$y$-axis and $z$-axis. 
In this case, we can rotate the corresponding six points 
around the four $3$-fold axes. 

\noindent{\bf Case B:~} When $p \in P$ is on the $x$-$y$ plane. 
First consider the case where a point $p \in P$ is on the line $x=y$. 
Because $\gamma(P) = D_2$, we have four corresponding points on 
the $x$-$y$ plane that forms a square (Figure~\ref{fig:rot3-2}). 
This allows us to recognize the $z$-axis from the other two 
axes, hence 
$y$-$z$ plane and $z$-$x$ plane also have the corresponding squares. 
Hence, the twelve points form a cuboctahedron, and 
we can rotate them around the four $3$-fold axes. 

When $p$ is not on the line $x=y$, 
because $\gamma(P) = D_2$, we have four corresponding points on 
the $x$-$y$ plane that forms a rectangle (Figure~\ref{fig:rot3-3}). 
This allows us to recognize the principal axis. 
In the same way as the above case, 
there are two rectangles on the $y$-$z$ plane and $z$-$x$ plane. 
The obtained polyhedron consists of $12$ vertices 
and we can rotate it around the four $3$-fold axes. 

\noindent{\bf Case C:~} When $p \in P$ is on the line $x=y=z$. 

Because $\gamma(P) = D_2$, we have four corresponding points in 
the third, sixth, and the eighth octant, 
that forms a regular tetrahedron (Figure~\ref{fig:rot3-4}). 
In this case, we can rotate the corresponding four points 
around the four $3$-fold axes. 

\noindent{\bf Case D:~} Other cases. 

For a point $p \in P$ in the first octant, 
because $\gamma(P) = D_2$, we have corresponding four points in 
the third, sixth, and the eighth octant, 
that forms a sphenoid (Figure~\ref{fig:rot3-5}). 
This allows us to recognize the $z$-axis from the others, 
hence $y$-axis and $x$-axis also have the corresponding sphenoids. 
The obtained polyhedron consists of $12$ vertices 
and we can rotate it around the four $3$-fold axes. 

Consequently when $D_2$ acts on $P$ 
but we cannot recognize the principal axis, 
we can rotate $P$ around the four $3$-fold axes. 
Thus $\gamma(P) \succeq T$. 

\qed 
\end{proof}

Clearly, Property~\ref{property:d2-principal} holds for
the robots since the above discussion 
does not depend on the local coordinate systems.

\end{document}